\documentclass[11pt]{article}

\usepackage{titlesec} 
\titleformat*{\section}{\large\bfseries}
\titleformat*{\subsection}{\large\bfseries}

\usepackage{hyperref} 
\usepackage[english]{babel}
\usepackage{mathtools, amssymb} 
\usepackage{mathtext} 
\usepackage{bm} 
\usepackage{amsthm} 
\usepackage{dsfont}
\usepackage{tikz}
\usetikzlibrary{patterns,arrows}

\usepackage{tikz, tikz-cd,subcaption, float}
\usetikzlibrary{decorations.markings, arrows.meta}
\usepackage[ddmmyy]{datetime}

\usepackage{tikz}\usetikzlibrary{
	arrows,
	arrows.meta,
	bending,
	calc,
	patterns,
	intersections,
	decorations.pathreplacing,
	decorations.pathmorphing,
	decorations.text,
	decorations.markings,
	shapes,
	positioning,
	through,
	quotes,
	angles
}

\definecolor{spec}{RGB}{0,50,230}

\tikzset{wavy/.style={decorate, decoration={snake, segment length=4mm, amplitude=0.7mm}}}
\tikzset{mydash/.style={dash pattern=on 2pt off 1.5pt}}

\tikzset{->-/.style={line width=0.8pt,decoration={
			markings,
			mark=at position 0.56 with {\arrow{Stealth[length=1.7mm,width=1.3mm]}}},postaction={decorate}}}

\usepackage{graphicx}
\usepackage{epsfig,epsf}
\usepackage{pdfsync}

\hypersetup{colorlinks=true, linktocpage=false, linkcolor=blue!80!black, citecolor=red!85!black, urlcolor=blue} 

\theoremstyle{plain}
\newtheorem{theorem}{Theorem}[section]
\newtheorem{proposition}[theorem]{Proposition}
\newtheorem{corollary}[theorem]{Corollary}
\theoremstyle{definition}
\newtheorem{definition}[theorem]{Definition}

\theoremstyle{remark}
\newtheorem{remark}[theorem]{Remark}

\tikzset{->-/.style={line width=0.25mm,decoration={
			markings,
			mark=at position 0.53 with {\arrow{Stealth[length=1.4mm,width=1.1mm]}}},postaction={decorate}}}

\renewcommand{\imath}{\mathrm{i}}

\renewcommand{\phi}{\varphi}

\def\II{\hbox{{1}\kern-.25em\hbox{l}}}
\def\tr{\mathop{\rm tr}\nolimits}

\usepackage{tocloft}

\numberwithin{equation}{section}
\title{Quasideterminants and differential realizations of W-algebras}
\author{Sergey Derkachov$^{1,3,4}$ and Daniil Getta$^{2,3,5}$}
\date{}
\begin{document}

\maketitle
\thispagestyle{empty}

\begin{center}
{\small \it
		$^1$ Steklov Mathematical Institute, Fontanka 27, \\
St.~Petersburg, 191023, Russia \vspace{0.2cm}\\$^2$Skolkovo Institute of Science and Technology, \\Moscow, 121205, Russia\\
		$^3$ St. Petersburg State University, St. Petersburg,\\ 199034, Russia\\
        $^4$ Beijing Institute of Mathematical Sciences and Applications, \\
		Huairou district, Beijing, 101408, China ,\\
        $^5$ National Research University Higher School of Economics, \\Moscow, 101000, Myasnitskaya ul., 20
        }
\end{center}
\begin{abstract}
In this work, we study the Hamiltonian reduction, used in constructing
finite rectangular $W$-algebras, of the big-cell differential realization of
$\mathfrak{gl}_{np}$. We establish an intrinsic
factorization of the reduced differential realization, identifying its
Yangian action with the coproduct action on a tensor product of $n$
evaluation differential realizations of $\mathfrak{gl}_p$. Corner
quasideterminants provide the essential noncommutative mechanism through
which Hamiltonian reduction transforms the higher-rank differential
realization into a factorized lower-rank Yangian action.
\end{abstract}

\newpage
\tableofcontents \newpage
\section{Introduction}\label{sec:introduction}

$W$-algebras are now central objects in modern mathematics and mathematical
physics. Since their introduction by Zamolodchikov
\cite{Zamolodchikov}, they have found important applications
to integrable partial differential equations and Drinfeld--Sokolov
hierarchies \cite{DS}, integrable field theories
\cite{FateevLitvinov}, and two-dimensional conformal field theory
\cite{DotsenkoFateev}. Later developments revealed that $W$-algebras also
form an essential part of the correspondence between four-dimensional
supersymmetric gauge theories and two-dimensional conformal field theories
\cite{AGT}. 

One of the principal ways to construct $W$-algebras is Hamiltonian
reduction, known in the affine setting as Drinfeld--Sokolov reduction
\cite{DS,FeiginFrenkel}. Quantum Hamiltonian reduction also produces
finite $W$-algebras associated with nilpotent
elements and compatible good gradings. For $\mathfrak{gl}_N$, the relevant
data can be encoded by pyramids \cite{ElashviliKac,BK}. In this work we
consider the finite $W$-algebras for $\mathfrak{gl}_N$ associated with a
rectangular pyramid.

Let $N=np$. The algebra $\mathfrak g=\mathfrak{gl}_N$ then has a natural
$n\times n$ block structure with blocks of size $p\times p$. We denote its
matrix units by $E_{a,b}^{i,j}$, where $a,b\in\{1,\ldots,n\}$ label the
blocks and $i,j\in\{1,\ldots,p\}$ label the entries inside a block. The
subalgebra used in the reduction is
$$
\mathfrak m_n
=\operatorname{span}\{E_{a,b}^{i,j}\mid a>b\},
$$
and the character is
$
\psi(E_{a,b}^{i,j})=\delta_{i,j}\delta_{a,b+1}.
$
We introduce the left ideal
$$
I_\psi=\mathcal U(\mathfrak{gl}_N)(\mathfrak m_n-\psi)
$$
and define the rectangular finite $W$-algebra by
$$
\mathcal W_n(\mathfrak{gl}_p)
=\bigl(\mathcal U(\mathfrak{gl}_N)/I_\psi\bigr)^{\mathfrak m_n}.
$$

A deep connection between these finite $W$-algebras and Yangians was
established in \cite{RS,BK}. In the rectangular case, Brundan and
Kleshchev identify the finite $W$-algebra with a truncated Yangian.

\begin{theorem}[Brundan--Kleshchev \cite{BK}]
\label{thm:intro_truncated_yangian_w_alg}
The finite $W$-algebra $\mathcal W_n(\mathfrak{gl}_p)$ is isomorphic to
the level-$n$ truncated Yangian $Y^{(n)}(\mathfrak{gl}_p)$.
\end{theorem}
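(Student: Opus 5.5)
The plan is to build an explicit homomorphism $Y(\mathfrak{gl}_p)\to\mathcal W_n(\mathfrak{gl}_p)$, show that it factors through the level-$n$ truncation, and then prove bijectivity by comparing associated graded algebras.

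\textbf{Step 1: the $T$-matrix.} Form the $p\times p$-block matrix
$$
\mathcal T(u)=\bigl(u\delta_{a,b}+E_{a,b}+\rho_a\delta_{a,b}\bigr)_{a,b=1}^n ,
$$
where $E_{a,b}$ is the $p\times p$ matrix with entries $E_{a,b}^{i,j}$ and the $\rho_a$ are suitable shifts. Let $T(u)$ be the corner $p\times p$ block quasideterminant of $\mathcal T(u)$ (the $(1,n)$ corner, consistent with $\mathfrak m_n$ being lower block-triangular). I would expand it as
$$
T(u)=\sum_{r\ge 0} T^{(r)}u^{-r},
$$
after normalizing by $u^n$. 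Modulo $I_\psi$, the subdiagonal blocks become identity matrices, so the quasideterminant collapses to a finite column-determinant-like expression. This is the Brundan--Kleshchev (Rais--Sorlin) formula for the generators, and it shows $T^{(r)}=0$ for $r>n$.

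\textbf{Step 2: invariance and the Yangian relations.} First I must check that each $T^{(r)}$ lies in $(\mathcal U/I_\psi)^{\mathfrak m_n}$. The key point is that $\operatorname{ad}\mathfrak m_n$ acts on $\mathcal T(u)$ by conjugation with lower block-unitriangular matrices up to terms in $I_\psi$, and corner quasideterminants are invariant under such block row and column operations. Second I verify the RTT relation
$$
R(u-v)\,T_1(u)\,T_2(v)=T_2(v)\,T_1(u)\,R(u-v).
$$
For this I use that $\mathcal T(u)$ satisfies the $\mathfrak{gl}_N$ RTT relation for the evaluation Yangian $Y(\mathfrak{gl}_N)$. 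A corner quasideterminant of a matrix satisfying RTT again satisfies RTT; this is the quantum analogue of the Gelfand--Retakh heredity principle, or equivalently Drinfeld's embedding of $Y(\mathfrak{gl}_p)$ via quasideterminants. Together with $T^{(r)}=0$ for $r>n$, this gives a surjection-candidate homomorphism $Y^{(n)}(\mathfrak{gl}_p)\to\mathcal W_n(\mathfrak{gl}_p)$.

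\textbf{Step 3: bijectivity.} I filter both sides by the Kazhdan filtration. On the $W$-algebra side, Premet's and Gan--Ginzburg's results give
$$
\operatorname{gr}\mathcal W\cong \mathbb C[e+\mathfrak g^f],
$$
a polynomial algebra on $\dim\mathfrak g^f=np^2$ generators of known degrees. On the Yangian side, PBW for $Y^{(n)}$ gives a spanning set of ordered monomials in the $T^{(r)}_{ij}$, $1\le r\le n$. The top symbols of $T^{(r)}_{ij}$ must then be shown to be coordinate functions on the Slodowy slice, identified via the block-companion normal form of $e+\mathfrak g^f$. This makes the graded map surjective, and the dimension and Hilbert-series count forces it to be an isomorphism with free polynomial associated graded. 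Bijectivity of the original map follows by the standard filtered argument.

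The main obstacle is this last identification: showing that the leading symbols of the quasideterminant coefficients are algebraically independent generators of $\mathbb C[e+\mathfrak g^f]$. I would first try it via the characteristic-polynomial interpretation. The commutative limit of $T(u)$ is the Schur complement of $u+X$ restricted to the slice, and its coefficients are exactly the block entries of the slice coordinates, up to a triangular change of variables.
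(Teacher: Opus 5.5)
\textbf{Gap in Steps 1--2: the shifts $\rho_a$.} The shifts cannot do both jobs you assign them.

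For a matrix $u+E'$ linear in $u$, the RTT relation is equivalent to the entries of $E'$ satisfying the $\mathfrak{gl}_N$ commutation relations. A block-dependent diagonal shift breaks $[E_{i,j},E_{j,i}]=E_{i,i}-E_{j,j}$ whenever $i$ and $j$ lie in blocks with different $\rho_a$. So the shifted $\mathcal T(u)$ is not an evaluation $L$-operator, and your heredity argument has nothing to start from. Likewise, for $x\in\mathfrak m_n$, $\operatorname{ad}x$ acts on $u+E+\rho$ as the infinitesimal conjugation $[X,\,\cdot\,]$ only up to the constant defect $[X,\rho]$. This defect is nonzero unless all $\rho_a$ coincide, so the invariance argument fails as well.

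If you take $\rho_a=0$, Step 2 is fine, and in fact exact rather than only modulo $I_\psi$; it is a legitimate alternative to Proposition~\ref{prp:inverse_block_commutation}. Note that in your non-transposed convention the conjugating matrices are block \emph{upper} unitriangular, which is what the $(1,n)$ corner is invariant under.

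But with $\rho_a=0$ the collapse in Step 1 is not a substitution. $I_\psi$ is only a left ideal, so $E^{i,j}_{a+1,a}\equiv\delta_{i,j}$ may be used only for a generator standing at the far right. In the quasideterminant, the constrained blocks sit inside the inverted submatrix, to the left of other factors. Moving them to the right produces commutator corrections; this is \eqref{eq:key_q_commutator} and the recurrence \eqref{eq:q_recurrence_expanded} in the proof of Theorem~\ref{th:quasidet_and_BK}. The accumulated corrections are exactly the shifts $-p(n-a)$ in $L^{\mathrm{red}}(u)$. Naive substitution yields the unshifted row determinant, which is not $\mathfrak m_n$-invariant: for $n=2$ it differs from $\mathbf T(u)$ by $p(u+\mathbf E_{2,2})$. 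This collapse is what gives $T^{(r)}\overline1=0$ for $r>n$, and what places the coefficients in $Q_\psi\cong\mathcal U(\mathfrak p_n)$ rather than only in a localization. It therefore has to be proved, and your plan lacks the mechanism.

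\textbf{Step 3: a different route from the paper's.} The paper takes the isomorphism from Brundan--Kleshchev and frames it through the Miura map in diagram \eqref{eq:quantum_Miura_diagram}. The coefficients of $\mathbf T(u)$ generate $\mathcal W_n(\mathfrak{gl}_p)$, and $\Theta(\mathbf T(u))=(u+E_{[n]})\cdots(u+E_{[1]})$. Proposition~\ref{th:Cherednik_prp} embeds $Y^{(n)}(\mathfrak{gl}_p)$ into $\mathcal U(\mathfrak{gl}_p)^{\otimes n}$ with exactly that image. Generation and the injectivity of $\Theta$ are taken as known.

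Your filtered argument instead uses Premet/Gan--Ginzburg and only a spanning set for $Y^{(n)}$. It therefore yields generation, injectivity and a PBW basis at once, which is closer to Brundan--Kleshchev's own proof.

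The step you flag as the obstacle can be closed with tools already in the paper:
\begin{enumerate}
\item Filter $Y^{(n)}$ by $\deg T^{(r)}_{ij}=r$. Counting $\deg u=1$, every term of $\operatorname{rdet}L^{\mathrm{red}}(u)$ has Kazhdan degree $n$ and the shifts are of lower order. So the map is filtered, and the symbol of the $u^{n-r}$ coefficient is the corresponding coefficient of the classical corner quasideterminant.
\item Block-companion matrices have a free first block column $x_1,\dots,x_n$, identity blocks on the superdiagonal, and zeros elsewhere. They form a cross-section for $M_n$-conjugation of the constrained affine space whose invariant functions give $\operatorname{gr}\mathcal W_n(\mathfrak{gl}_p)$.
\item The classical quasideterminant is $M_n$-invariant by Proposition~\ref{prp:invariance_property}.
\item On the cross-section, Proposition~\ref{prp:nice_formulas_for_quasidet} gives the $u^{n-r}$ coefficient as $\pm x_r$.
\end{enumerate}
Hence the symbols are algebraically independent generators in the correct degrees, and your dimension count finishes the argument.
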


Heuristically, one may view this result as passing from a one-site
spin-chain object of rank $np$ to an $n$-site generating matrix with
rank-$p$ symmetry by Hamiltonian reduction. This picture is only
heuristic, but it helps explain why the Yangian coproduct and the
factorized form of the generating matrix arise naturally.

Quasideterminants play a central role in the explicit construction of the
generating matrix. De Sole, Kac, and Valeri \cite{DSKV} used them to
describe the isomorphism with the truncated Yangian, building on the
general theory developed in \cite{GGRW}. Unlike an ordinary determinant,
a quasideterminant retains the order of multiplication of its
noncommuting entries, making it well suited to constructions in
noncommutative algebras.

More precisely, the corner quasideterminant of the $L$-operator, a matrix
with entries in $\mathcal{U}(\mathfrak{gl}_N)[u]$, is invariant under the
action of the subalgebra used in the reduction and satisfies the RTT
relation. Its image under reduction is the Brundan--Kleshchev generating
matrix, so both the RTT relation and the generating property pass to the
finite $W$-algebra. Thus the quasideterminant constructs the reduction
invariant while carrying its Yangian structure through the reduction.

Another important tool in representation theory and mathematical physics
is the differential realization of a Lie algebra on the big cell of its
associated flag variety \cite{Kostant,BeilinsonBernstein}:
\begin{equation}\label{eq:quantum_Springer_map}
\rho_\lambda:
\mathcal U(\mathfrak g)\longrightarrow\mathcal D(N_-),
\end{equation}
where $N_-\subset\mathrm{GL}_N$ is the lower unipotent subgroup,
$\mathcal D(N_-)$ is its algebra of polynomial differential operators,
and $\lambda\in\mathfrak h^*$ is a weight parameter. These realizations
give explicit differential-operator models of representations and
provide a natural setting in which to study reduction. The affine
analogues of these realizations lead to Wakimoto realizations of affine
Lie algebras \cite{Witten,deBoerFeher}.

Together, these constructions lead to the central question of this paper:

\emph{What is the Yangian structure induced by quantum Hamiltonian reduction of the big-cell differential realization}

A related but separate question is how modules and intertwiners behave
under reduction. We plan to address it in the continuation of this work.

We answer the first question by proving that the Yangian action induced by
Hamiltonian reduction of the big-cell differential realization of
$\mathfrak{gl}_N$ coincides with the coproduct action on a tensor product
of $n$ rank-$p$ differential realizations. The key step is an explicit
factorization of the reduced quasideterminant. This establishes
compatibility between two constructions
that arise from different perspectives: Hamiltonian reduction of the
higher-rank differential realization and the coproduct realization of the
Yangian on a tensor product of lower-rank differential realizations.
The quasideterminant streamlines the calculations and makes transparent
the identities and invariance properties needed for the finite
$W$-algebra generators.

To state the result precisely, let $N_-'\subset\mathrm{GL}_p$ be the
lower unipotent subgroup. The reduced differential algebra is naturally
identified with
\begin{align}
\xi:
\mathcal D_-
\xrightarrow{\sim}
\mathcal D(N_-')^{\otimes n},
\end{align}
where $\mathcal D_-$ denotes the differential algebra obtained by
Hamiltonian reduction of $\mathcal D(N_-)$. Via the
Brundan--Kleshchev identification, this reduction endows
$\mathcal D(N_-')^{\otimes n}$ with a $Y(\mathfrak{gl}_p)$-module
structure. The generating matrix for this action is the
Brundan--Kleshchev matrix evaluated in the reduced differential
realization.

There is also a direct construction of a Yangian action on
$\mathcal D(N_-')^{\otimes n}$, using the coproduct together with
evaluation homomorphisms. Write
$$
\lambda=\operatorname{diag}
(\boldsymbol\lambda_1,\ldots,\boldsymbol\lambda_n),
$$
where each $\boldsymbol\lambda_a$ is a diagonal $p\times p$ matrix. These
parameters define differential realizations
$$
\rho_{\boldsymbol\lambda_a}:
\mathcal U(\mathfrak{gl}_p)\longrightarrow\mathcal D(N_-'),
$$
one in each tensor factor. Let $E_{[a]}$ denote the generator matrix in
the $a$th factor after applying the corresponding realization; we suppress
the homomorphisms in operator formulas. In polynomial normalization, the
associated generating matrix is
$$
\mathrm{T}(u)=
\bigl(u+E_{[n]}\bigr)
\cdots
\bigl(u+E_{[1]}\bigr).
$$

\begin{theorem}\label{thm:intro_main_factorization}
After the identification by $\xi$, the Hamiltonian reduction of
\eqref{eq:quantum_Springer_map} and the Yangian action induced by the
coproduct and the tensor product of differential realizations define the
same $Y(\mathfrak{gl}_p)$-module structure on
$\mathcal D(N_-')^{\otimes n}$.
\end{theorem}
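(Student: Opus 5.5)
We prove the theorem by showing that the two generating matrices agree after the identification $\xi$. On one side is the image under $\xi$ of the reduced corner quasideterminant of the $L$-operator $u+\rho_\lambda(E)$, which is the Brundan--Kleshchev matrix in the reduced realization. On the other side is the coproduct matrix $\mathrm T(u)=(u+E_{[n]})\cdots(u+E_{[1]})$. Since both sides give $Y(\mathfrak{gl}_p)$-actions, equality of the generating matrices entry by entry, as polynomials in $u$, suffices. By the generating property transported through the reduction, the Brundan--Kleshchev matrix generates the $W$-algebra, so no further check on generators is needed.

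\textbf{Step 1: explicit form of the realization.} Write $\rho_\lambda$ explicitly in block form. Coordinatize $N_-$ by a block lower unitriangular matrix $X=(X_{ab})$ with $p\times p$ blocks $X_{ab}$ for $a>b$ and $X_{aa}=1$. Write $D$ for the matrix of derivatives. In this notation $\rho_\lambda(E)$ takes the standard form
\[
\rho_\lambda(E) = X(\lambda - X^{-1}\cdots)
\]
obtained from left translations on the big cell. Equivalently, the $L$-operator admits a Gauss-type factorization $u+\rho_\lambda(E)=X\,(u+\lambda+\text{upper part})\,X^{-1}$, up to the ordering and normal-ordering corrections that arise for differential operators.

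\textbf{Step 2: the Hamiltonian reduction.} The elements of $\mathfrak m_n$ act on the left by vector fields. Imposing $\mathfrak m_n-\psi$ means restricting to invariants and setting the dual derivatives to the character values. Concretely:
\begin{itemize}
\item the derivatives in the off-diagonal block coordinates are fixed so that the subdiagonal blocks become $\mathbb 1$;
\item the remaining free variables are one copy of the $\mathfrak{gl}_p$ big-cell coordinates per diagonal block, arising from the $\mathrm{GL}_p^{\,n}$ Levi part after a change of variables.
\end{itemize}
This change of variables defines $\xi$. Under it, the reduced $L$-operator becomes a block companion-type matrix with blocks $u+E_{[a]}$ on the diagonal, $\mathbb 1$ (from $\psi$) on the subdiagonal, and zeros or gauge-removable terms elsewhere.

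\textbf{Step 3: evaluate the quasideterminant.} Take the corner quasideterminant, the $(1,n)$ or $(n,1)$ block depending on convention, of this block bidiagonal matrix. Its invariance under $\mathfrak m_n$, stated in the introduction, lets us first conjugate by unipotent block elements to clear the extra terms. Then expand by the noncommutative Schur-complement (heredity) recursion for quasideterminants: removing the last block row and column multiplies by $(u+E_{[a]})$ on the correct side. Induction on $n$ gives $\pm(u+E_{[n]})\cdots(u+E_{[1]})$, with the sign or normalization fixed by the polynomial normalization. The base case $n=1$ is just the identification $\rho_{\boldsymbol\lambda_1}$.

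\textbf{Main obstacle.} The hardest step is Step 2 together with the inductive step of Step 3. We must show that, after reduction, the off-diagonal blocks of the realization really collapse so that each diagonal block is exactly $\rho_{\boldsymbol\lambda_a}(E)$ with the correct shifted weight. Quantum normal-ordering produces $\rho$-shifts of the form $\boldsymbol\lambda_a+(\text{const})(a-1)\mathbb 1$, and these must be absorbed into the definition of $\boldsymbol\lambda_a$ or of the spectral parameter. We also need the factors in the different tensor slots to commute, which is what allows the result to be read as the coproduct.

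To handle this, we would first do the case $n=2$ by hand: write $X=\begin{pmatrix}1&0\\ Y&1\end{pmatrix}$ together with Levi coordinates, reduce along $Y$, and verify that the Schur complement equals $(u+E_{[2]})(u+E_{[1]})$. We would then run the induction through the heredity property, peeling off one block at a time.
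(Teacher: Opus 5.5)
There is a genuine gap, and it sits exactly in the step you flag as the main obstacle. In Step 2 you reduce the $L$-operator entrywise, and only afterwards, in Step 3, do you take its quasideterminant. This is not a legitimate operation.

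The diagonal and lower blocks of $L(u)$ are images of $\mathfrak l_n$ and $\mathfrak m_n^+$, and they do not commute with $\mathfrak m_n$. This has three consequences:
\begin{itemize}
\item their classes in $\mathcal D(N_-)/I$ are not in $\mathcal D_-$ (in the realization they still depend on the coordinates of $m$ in $z=m\,z_{\mathrm d}$);
\item $\xi$ is not defined on them;
\item since $\mathcal D(N_-)/I$ is only a left module, such classes cannot be multiplied or inverted.
\end{itemize}
The paper's abstract computation shows what naive entrywise reduction misses: the commutators $\sum_j[L^{i,j}_{a,b+1}(u),q^{j,k}_b(u)]$ produce the shifts $-p(n-a)$ in $L^{\mathrm{red}}(u)$. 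You also cannot absorb these shifts into $\boldsymbol\lambda_a$ or $u$. The theorem has the same $\boldsymbol\lambda_a$ on both sides, and the Brundan--Kleshchev shifts are cancelled by the automorphism built into $\Theta$. Finally, conjugation is not licensed by the $\mathfrak m_n$-invariance of $T(u)$. It comes from the separate algebraic identity $|XAY|_{n,1}=|A|_{n,1}$, and that identity must be applied before reduction.

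The missing idea is to gauge-fix first, so that every entry becomes invariant, and only then reduce.
\begin{itemize}
\item Use the exact formula $L(u)=z(u+\lambda-D)z^{-1}$. Here $D$ must be the matrix of right-shift operators, not plain partial derivatives, because right shifts commute with the left action of $\mathfrak m_n$.
\item Write $z=m\,z_{\mathrm d}$ and apply $|XAY|_{n,1}=|A|_{n,1}$ twice. First use $m$ and $m^{-1}$, which are block lower unitriangular and hence of the allowed shapes. Then use block-diagonal multipliers built from the $\mathbf z_a$.
\item This turns $|L(u)|_{n,1}$ into the corner quasideterminant of a block upper-triangular matrix with entries $(u+\boldsymbol\lambda_1-\mathbf D_1)\mathbf z_1^{-1}$, $u+\boldsymbol\lambda_a-\mathbf D_a$, $-\mathbf D_{a,b}$ and $\mathbf z_n(u+\boldsymbol\lambda_n-\mathbf D_n)$. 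All of these are $\mathfrak m_n$-invariant.
\item Because the quotient map is an algebra homomorphism on invariants, you may now reduce entrywise. The one real computation is $\mathbf D_{a,b}\overline 1=-\delta_{b,a+1}\mathbf z_a^{-1}\mathbf z_b$, obtained by applying $(Dz^{-1})_+=-(z^{-1}E_+)_+$ to $\overline 1$.
\item The bidiagonal formula then gives $(-1)^{n-1}\mathbf z_n(u+\boldsymbol\lambda_n-\mathbf D_n)\mathbf z_n^{-1}\cdots\mathbf z_1(u+\boldsymbol\lambda_1-\mathbf D_1)\mathbf z_1^{-1}$.
\end{itemize}
Note that the conjugations by $\mathbf z_a$, which make each factor equal to $u+E_{[a]}$, come from the off-diagonal values $\mathbf z_a^{-1}\mathbf z_{a+1}$. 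They do not come from the diagonal blocks of $L$.

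Your companion-matrix picture, with $u+E_{[a]}$ on the diagonal and identities off it, is correct only as a gauge-equivalent form of this reduced matrix. A hand computation for $n=2$ followed by a heredity induction does not replace this mechanism, because the $\mathfrak m_n$ constraints couple all blocks simultaneously.
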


The proof uses the invariance of the corner quasideterminant under certain
block multiplications. These multiplications remove the off-diagonal
coordinate factors and leave a matrix whose entries are already invariant
under the reduction. The reduction can therefore be performed on these
entries inside the quasideterminant, yielding the ordered product asserted
in the theorem.

The factorization also motivates the representation-theoretic continuation
of this work, in which we will study reduced intertwiners. Let $W$ and
$W'$ be Whittaker modules obtained from suitable specializations of the
differential realization. Write $\operatorname{Wh}(W)$ for the subspace of
Whittaker vectors in $W$, namely, vectors $v$ satisfying
$$
E_{a,b}^{i,j}v=\psi(E_{a,b}^{i,j})v.
$$
Skryabin's equivalence \cite{Skryabin} gives an isomorphism between the
corresponding intertwining spaces:
$$
\operatorname{Hom}_{\mathfrak{gl}_N}(W,W')
\simeq
\operatorname{Hom}_{\mathcal W_n(\mathfrak{gl}_p)}
\bigl(\operatorname{Wh}(W),\operatorname{Wh}(W')\bigr).
$$
The factorization established here gives an explicit functional model for
the Whittaker subspaces in terms of functions on $n$ copies of the
lower-rank group $N_-'$. Upon reduction, natural intertwiners of the
unreduced differential realizations induce intertwiners between these
Whittaker spaces. These operators are closely related to the factorized
$R$-matrices and Baxter $Q$-operators studied by Derkachov and Manashov
\cite{DM}. We plan to develop this connection in the second part of the
work.

The paper is organized as follows. In
Section~\ref{sec:yangian_preliminaries} we introduce the Yangian and the
truncated Yangian, recall the coproduct and evaluation homomorphisms, and
review the tensor-product realization. In
Section~\ref{sec:quasideterminants} we introduce the corner
quasideterminant, relate it to a block of the inverse matrix, and prove the
RTT relation and commutation properties that make it an invariant for
reduction. In Section~\ref{sec:W_algebras} we recall rectangular finite
$W$-algebras, identify the reduced quasideterminant with the
Brundan--Kleshchev generating matrix, and describe the Miura
homomorphism.

In Section~\ref{sec:differential_reduction} we review the differential
realization on the big cell, identify its reduced differential algebra
with $\mathcal D(N_-')^{\otimes n}$, and compute the reduction of the
quasideterminant to obtain the factorization in the main theorem.
Appendix~\ref{appA} collects auxiliary calculations. Its first subsection
gives a direct proof of the commutation relations, and the second provides
detailed index calculations used in Section~3. Appendix~\ref{appB} proves
the block proposition and explains the reduction in a concrete functional
realization.

\section{Preliminaries on the RTT Yangian}
\label{sec:yangian_preliminaries}

Let $e_{i,j}$, $1\leq i,j\leq N$, be the standard matrix units in
$\operatorname{Mat}_N(\mathbb C)$, and let $E_{i,j}$ be the standard
generators of the Lie algebra $\mathfrak{gl}_N$, with commutation relations
\begin{align}\label{eq:comm_relations_gl}[E_{i,j},E_{k,q}]=\delta_{j,k}E_{i,q}-\delta_{q,i}E_{k,j}.\end{align}
Let
$
P=\sum_{i,j=1}^N e_{i,j}\otimes e_{j,i}$
be the permutation operator on $\mathbb C^N\otimes\mathbb C^N$. The
rational $R$-matrix
\begin{equation}\label{eq:rational_R_matrix}
R(u)=u+P \in \mathrm{End}(\mathbb{C}^N\otimes \mathbb{C}^N)[u]
\end{equation}
satisfies the Yang--Baxter equation
\begin{equation}\label{eq:rational_Yang_Baxter}
R_{12}(u)R_{13}(u+v)R_{23}(v)
=R_{23}(v)R_{13}(u+v)R_{12}(u),
\end{equation}
where $R_{i,j}$ indicates the two tensor factors on which $R$ acts
nontrivially.

\begin{definition}\label{def:RTT_Yangian}
The RTT Yangian $Y(\mathfrak{gl}_N)$ is the unital
associative algebra generated by
$$
t_{i,j}^{(r)},\qquad 1\leq i,j\leq N,\quad r\geq1,
$$
whose generating series
\begin{equation}\label{eq:unshifted_RTT_matrix}
t_{i,j}(u)=\delta_{i,j}+\sum_{r\geq1}t_{i,j}^{(r)}u^{-r},
\qquad
\mathrm{T}(u)=\sum_{i,j=1}^N e_{i,j}\otimes t_{i,j}(u),
\end{equation}
satisfy the relations
\begin{align} \label{eq:yangian_relation}
(u-v)t_{i,j}(u)t_{k,q}(v)+t_{k,j}(u)t_{i,q}(v)=
(u-v)t_{k,q}(v)t_{i,j}(u)+t_{k,j}(v)t_{i,q}(u),
\end{align}
where $\mathrm{T}_1(u)=\mathrm{T}(u)\otimes \mathbf{1}_N$ and
$\mathrm{T}_2(v)=\mathbf{1}_N \otimes\mathrm{T}(v)$ are
$N^2\times N^2$ matrices with entries in
$Y(\mathfrak{gl}_N)[[u^{-1},v^{-1}]]$.
\end{definition}
We next recall the coproduct of the Yangian.

\begin{proposition}\label{prp:RTT_coproduct}
There is an algebra homomorphism
$$
\Delta:Y(\mathfrak{gl}_N)\longrightarrow
Y(\mathfrak{gl}_N)\otimes Y(\mathfrak{gl}_N)
$$
determined by
\begin{equation}\label{eq:RTT_coproduct_matrix}
\Delta(\mathrm{T}(u))=\mathrm{T}_{[1]}(u)\mathrm{T}_{[2]}(u),
\end{equation}
where $\mathrm{T}_{[1]}(u)=\mathrm{T}(u)\otimes 1$ and
$\mathrm{T}_{[2]}(u)=1\otimes\mathrm{T}(u)$ are elements of
$\mathrm{Mat}_N(\mathbb C)\otimes
Y(\mathfrak{gl}_N)^{\otimes2}[[u^{-1}]]$.
Equivalently,
$$
\Delta(t_{i,j}(u))=\sum_{k=1}^N
t_{i,k}(u)\otimes t_{k,j}(u).
$$
The map $\Delta$ is coassociative:
$$
(\Delta\otimes1)\circ\Delta=(1\otimes\Delta)\circ\Delta.
$$
\end{proposition}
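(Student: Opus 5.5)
To show that $\Delta$ is a well-defined algebra homomorphism, I will use the matrix form of the defining relations. Relation \eqref{eq:yangian_relation} is equivalent to the RTT relation
$$
R(u-v)\,\mathrm{T}_1(u)\mathrm{T}_2(v)=\mathrm{T}_2(v)\mathrm{T}_1(u)\,R(u-v),
$$
with $R(u)=u+P$ from \eqref{eq:rational_R_matrix}. This can be checked entrywise by applying both sides to $e_i\otimes e_k$ and extracting the $(j,q)$ coefficients. The $P$ terms then give exactly the exchanged-index terms $t_{k,j}(u)t_{i,q}(v)$ and $t_{k,j}(v)t_{i,q}(u)$.

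Since $Y(\mathfrak{gl}_N)$ is presented by generators and relations, it is enough to show that the proposed images of the generators satisfy the RTT relation. Concretely, I define $\Delta(t_{i,j}^{(r)})$ as the coefficient of $u^{-r}$ in $\sum_k t_{i,k}(u)\otimes t_{k,j}(u)$. This is a finite sum, $\sum_{s+s'=r} t_{i,k}^{(s)}\otimes t_{k,j}^{(s')}$ with $t^{(0)}_{i,k}=\delta_{i,k}$, so $\Delta$ is well defined on generators. The equivalence of the matrix form $\mathrm{T}_{[1]}(u)\mathrm{T}_{[2]}(u)$ with this entrywise formula is just matrix multiplication in the auxiliary space.

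The key step is the standard "train" argument. The entries of $\mathrm{T}_{[1]}$ and $\mathrm{T}_{[2]}$ lie in different tensor factors, so they commute with each other, and each factor satisfies RTT separately. Writing $\Delta(\mathrm{T})_1(u)=\mathrm{T}_{[1]1}(u)\mathrm{T}_{[2]1}(u)$, I compute
$$
R(u-v)\,\mathrm{T}_{[1]1}(u)\mathrm{T}_{[2]1}(u)\,\mathrm{T}_{[1]2}(v)\mathrm{T}_{[2]2}(v)
=R(u-v)\,\mathrm{T}_{[1]1}(u)\mathrm{T}_{[1]2}(v)\,\mathrm{T}_{[2]1}(u)\mathrm{T}_{[2]2}(v).
$$
The equality moves $\mathrm{T}_{[1]2}(v)$ past $\mathrm{T}_{[2]1}(u)$. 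This is allowed because they act in different auxiliary spaces and have commuting entries. Next I apply RTT for factor $[1]$, which moves $R$ to the middle:
$$
\mathrm{T}_{[1]2}(v)\mathrm{T}_{[1]1}(u)\,R(u-v)\,\mathrm{T}_{[2]1}(u)\mathrm{T}_{[2]2}(v).
$$
Then RTT for factor $[2]$ moves $R$ to the right. Finally I commute the middle factors back to get $\Delta(\mathrm{T})_2(v)\Delta(\mathrm{T})_1(u)R(u-v)$. The only point needing care is the bookkeeping of which entries commute; I expect this to be the main (mild) obstacle.

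Coassociativity follows by checking it on generators, since both $(\Delta\otimes1)\circ\Delta$ and $(1\otimes\Delta)\circ\Delta$ are algebra homomorphisms. On the generating matrix, both sides give $\mathrm{T}_{[1]}(u)\mathrm{T}_{[2]}(u)\mathrm{T}_{[3]}(u)$. Entrywise, both equal $\sum_{k,l}t_{i,k}(u)\otimes t_{k,l}(u)\otimes t_{l,j}(u)$, by associativity of matrix multiplication.
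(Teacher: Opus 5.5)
Your proof is correct: the matrix form of \eqref{eq:yangian_relation} with $R(u)=u+P$, the train argument (entries of $\mathrm{T}_{[1]}$ and $\mathrm{T}_{[2]}$ commute, and each satisfies the RTT relation separately), and the coassociativity check on generating series together make up the standard proof. The paper gives no proof of this proposition and only recalls it as a standard fact, so your argument supplies exactly the reasoning the paper takes for granted, in the same $R(u)=u+P$ convention it uses later in Corollary~\ref{cor:corner_quasideterminant_RTT}.
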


Multiplication of the monodromy matrix by a scalar power of $u$ preserves
the RTT relation. We denote by $Y_m(\mathfrak{gl}_N)$ the corresponding
polynomial normalization, in which
$\mathrm{T}(u)=u^m+O(u^{m-1})$. The map
\begin{equation}\label{eq:Yangian_with_different_n}
\mathrm{T}(u)\longmapsto u^m\mathrm{T}(u)
\end{equation}
identifies $Y_0(\mathfrak{gl}_N)$ with $Y_m(\mathfrak{gl}_N)$. The coproduct
is compatible with this normalization and defines
$$
\Delta_{m_1,m_2}:Y_{m_1+m_2}(\mathfrak{gl}_N)\longrightarrow
Y_{m_1}(\mathfrak{gl}_N)\otimes Y_{m_2}(\mathfrak{gl}_N).
$$

\begin{proposition}
For every $c\in\mathbb C$, the assignment
$$
\operatorname{ev}_c(\mathrm{T}(u))
=1+\frac{1}{u-c}\sum_{i,j=1}^N e_{i,j}\otimes E_{j,i}
$$
defines an evaluation homomorphism
$$
\operatorname{ev}_c:Y(\mathfrak{gl}_N)\longrightarrow
\mathcal U(\mathfrak{gl}_N).
$$
In the polynomial normalization, the same homomorphism is written as
$$
\operatorname{ev}_c:Y_1(\mathfrak{gl}_N)\longrightarrow
\mathcal U(\mathfrak{gl}_N),
\qquad
\operatorname{ev}_c(\mathrm{T}(u))=L(u-c),
\qquad
L(u)=u+\sum_{i,j=1}^N e_{i,j}\otimes E_{j,i}.
$$
\end{proposition}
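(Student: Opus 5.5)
The plan is to verify directly that the proposed image of $\mathrm{T}(u)$ satisfies the defining relations \eqref{eq:yangian_relation}. First I will reduce to the polynomial normalization and to $c=0$. Multiplying $\operatorname{ev}_c(\mathrm{T}(u))$ by the scalar $u-c$ turns it into $L(u-c)$. The RTT relation is homogeneous in $\mathrm{T}(u)$ and $\mathrm{T}(v)$ separately, so it is unaffected by scalar factors, as noted before \eqref{eq:Yangian_with_different_n}. The relation is also invariant under the simultaneous shift $u\mapsto u-c$, $v\mapsto v-c$, because it depends on $u,v$ only through $u-v$ and through the series themselves. So it suffices to show that $L(u)=u+\mathbf E$, with $\mathbf E=\sum_{i,j}e_{i,j}\otimes E_{j,i}$, satisfies the matrix form of the relation:
\begin{equation*}
R(u-v)\,L_1(u)\,L_2(v)=L_2(v)\,L_1(u)\,R(u-v),\qquad R(u)=u+P .
\end{equation*}
The equivalence of this matrix identity with the componentwise relations \eqref{eq:yangian_relation} is a standard matrix-unit computation. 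I will record it explicitly once, by taking the $(i,j),(k,q)$ matrix coefficient of both sides and using $P(e_{i,j}\otimes e_{k,q})=e_{k,j}\otimes e_{i,q}$.

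The key algebraic input is to rewrite the $\mathfrak{gl}_N$ commutation relations \eqref{eq:comm_relations_gl} as a single matrix identity in $\operatorname{End}(\mathbb C^N\otimes\mathbb C^N)\otimes\mathcal U(\mathfrak{gl}_N)$. Expanding $[\mathbf E_1,\mathbf E_2]=\sum e_{i,j}\otimes e_{k,q}\otimes[E_{j,i},E_{q,k}]$ with \eqref{eq:comm_relations_gl}, I expect to obtain
\begin{equation*}
[\mathbf E_1,\mathbf E_2]=\mathbf E_1P-P\mathbf E_1=P\mathbf E_2-\mathbf E_2P
\end{equation*}
up to an overall sign convention that I will fix by the computation. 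Two further facts are needed. The first is $P\mathbf E_1P=\mathbf E_2$, which is the identity that the permutation operator swaps the tensor factors. The second is $P^2=1$.

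With these identities I will expand both sides of the matrix relation as polynomials in $u$ and $v$. The $uv(u-v)$ terms agree trivially, as do the terms $uvP$ on the two sides. The terms linear in $\mathbf E$ are $(u-v)(v\mathbf E_1+u\mathbf E_2)$ together with $uP\mathbf E_2+vP\mathbf E_1$ on the left, and the mirror expressions on the right. Their difference reduces, via $P\mathbf E_1P=\mathbf E_2$, to commutators $[P,\mathbf E_a]$ weighted by $u$ or $v$. The quadratic terms produce $(u-v)[\mathbf E_1,\mathbf E_2]$ and $P\mathbf E_1\mathbf E_2-\mathbf E_2\mathbf E_1P$. I will match the $(u-v)$-terms against the linear leftovers using the commutator identity above. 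The constant piece vanishes once one rewrites $P\mathbf E_1\mathbf E_2=\mathbf E_2P\mathbf E_2$ and applies the commutator identity once more.

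The main obstacle is purely bookkeeping. It consists of getting the sign and ordering in the matrix form of $[\mathbf E_1,\mathbf E_2]$ right: the transpose-like placement $e_{i,j}\otimes E_{j,i}$ in $\mathbf E$ affects whether $P$ appears on the left or the right. It also consists of checking that this convention matches the sign of the $t_{k,j}t_{i,q}$ terms in \eqref{eq:yangian_relation}. I will settle this first by testing the identity on $N=1$, where it is trivial, and then on a single pair of matrix units for $N=2$, before doing the general coefficient comparison.
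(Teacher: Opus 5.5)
The paper gives no proof of this proposition. It recalls the evaluation map as standard, and it later uses the RTT form \eqref{Y} of the $\mathfrak{gl}_N$ relations without derivation. Your direct verification therefore supplies an argument the paper omits, and its structure is sound:
\begin{itemize}
\item The reduction to $c=0$ and to the polynomial form is fine. The relation \eqref{eq:yangian_relation} is bilinear in $\mathrm T(u),\mathrm T(v)$ and depends on $u,v$ only through $u-v$. For the series version, expand $\frac{1}{u-c}=\sum_{r\geq1}c^{r-1}u^{-r}$.
\item Your formula $P(e_{i,j}\otimes e_{k,q})=e_{k,j}\otimes e_{i,q}$, together with $(e_{i,j}\otimes e_{k,q})P=e_{i,q}\otimes e_{k,j}$, shows that $R(u-v)\mathrm T_1(u)\mathrm T_2(v)=\mathrm T_2(v)\mathrm T_1(u)R(u-v)$ with $R(u)=u+P$ is exactly \eqref{eq:yangian_relation}.
\end{itemize}

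One correction, and it is more than bookkeeping. The commutator identity has the opposite sign to the one you expect:
\[
[\mathbf E_1,\mathbf E_2]=P\mathbf E_1-\mathbf E_1P=\mathbf E_2P-P\mathbf E_2 .
\]
To see this, expand $[E_{j,i},E_{q,k}]=\delta_{i,q}E_{j,k}-\delta_{k,j}E_{q,i}$. Then identify $\sum e_{i,j}\otimes e_{k,i}\otimes E_{j,k}$ with $P\mathbf E_1$, and $\sum e_{i,j}\otimes e_{j,q}\otimes E_{q,i}$ with $\mathbf E_1P$.

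The linear leftover in your expansion is $u[P,\mathbf E_2]+v[P,\mathbf E_1]=-(u-v)[P,\mathbf E_1]$, using $[P,\mathbf E_1+\mathbf E_2]=0$. It is cancelled by $(u-v)[\mathbf E_1,\mathbf E_2]$ only with the correct sign. With the sign you wrote, you would be left with $-2(u-v)[P,\mathbf E_1]\neq0$. This is exactly where the transposed placement $e_{i,j}\otimes E_{j,i}$ is matched with $R(u)=u+P$.

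The constant term $P\mathbf E_1\mathbf E_2-\mathbf E_2\mathbf E_1P$ vanishes from $P\mathbf E_1=\mathbf E_2P$ and $\mathbf E_1P=P\mathbf E_2$ alone; it needs no commutator identity.

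Finally, a componentwise check is shorter and avoids the matrix-unit translation. Put $t_{i,j}(u)=u\delta_{i,j}+E_{j,i}$. Then
\[
(u-v)[t_{i,j}(u),t_{k,q}(v)]=(u-v)(\delta_{i,q}E_{j,k}-\delta_{k,j}E_{q,i}).
\]
Also
\[
t_{k,j}(u)t_{i,q}(v)-t_{k,j}(v)t_{i,q}(u)=(u-v)(\delta_{k,j}E_{q,i}-\delta_{i,q}E_{j,k}),
\]
since the products $E_{j,k}E_{q,i}$ cancel. The two contributions sum to zero.
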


Thus every $\mathfrak{gl}_N$-module $M$ becomes an evaluation
$Y_1(\mathfrak{gl}_N)$-module, denoted by $M(c)$. The coproduct equips
$$
M_1(z_1)\otimes\cdots\otimes M_m(z_m)
$$
with a $Y_m(\mathfrak{gl}_N)$-module structure. Its monodromy matrix is
\begin{equation}\label{eq:iterated_evaluation}
(\operatorname{ev}_{z_1}\otimes\cdots\otimes\operatorname{ev}_{z_m})
\circ\Delta_{1,\dots,1}(\mathrm{T}(u))
=L_{[1]}(u-z_1)\cdots L_{[m]}(u-z_m),
\end{equation}
where $\Delta_{1,\dots,1}$ is the iterated coproduct and $L_{[i]}(u)$ is
the $L$-operator acting nontrivially only in the $i$-th tensor factor.

\begin{definition}
In $Y_m(\mathfrak{gl}_N)$, we reindex the coefficients by writing
$$
t_{i,j}(u)=\delta_{i,j}u^m+t_{i,j}^{(1-m)}u^{m-1}+...+t^{(0)}_{i,j}+
\sum_{r> 0}t_{i,j}^{(r)}u^{-r}.
$$
For $m\in\mathbb N$, the level-$m$ truncated Yangian is the quotient
$$
Y^{(m)}(\mathfrak{gl}_N)
=Y_m(\mathfrak{gl}_N)\big/
\bigl(t_{i,j}^{(r)}\mid r>0,\ 1\leq i,j\leq N\bigr).
$$
\end{definition}
This quotient is the Yangian of level $m$ introduced by Cherednik
\cite{Ch1,Ch2}. The following tensor-product realization is appeared 
naturally in the framework of the Quantum Inverse Scattering Method 
\cite{FST,KuSk1,Skl,Fad,KRS}.

\begin{proposition}[Cherednik \cite{Ch1,Ch2}; see also \cite{BK}]
\label{th:Cherednik_prp}
For $z_1=\cdots=z_m=0$, the homomorphism
\eqref{eq:iterated_evaluation} descends to an injective homomorphism
$$
Y^{(m)}(\mathfrak{gl}_N)\hookrightarrow
\mathcal U(\mathfrak{gl}_N)^{\otimes m}.
$$
Its image is generated by the matrix coefficients of the polynomial
\begin{align} \label{eq:quantum_yangian_generating_matrix}
\mathrm{T}(u)=L_{[1]}(u)L_{[2]}(u)\cdots L_{[m]}(u).
\end{align}
\end{proposition}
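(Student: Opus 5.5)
The plan has two parts: first show that the homomorphism descends to $Y^{(m)}(\mathfrak{gl}_N)$, then prove injectivity.

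\emph{Descent.} By Proposition~\ref{prp:RTT_coproduct} and the evaluation homomorphism in polynomial normalization, the iterated coproduct followed by $\operatorname{ev}_0^{\otimes m}$ gives a homomorphism $Y_m(\mathfrak{gl}_N)\to\mathcal U(\mathfrak{gl}_N)^{\otimes m}$. It sends $\mathrm{T}(u)$ to the product \eqref{eq:quantum_yangian_generating_matrix}. Each $L_{[i]}(u)=u+\sum e_{i,j}\otimes E_{j,i}$ is a polynomial of degree one in $u$, so the product is a polynomial of degree exactly $m$ with leading term $u^m$. Therefore all coefficients $t_{i,j}^{(r)}$ with $r>0$, which multiply negative powers of $u$, map to zero. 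The map factors through the quotient $Y^{(m)}(\mathfrak{gl}_N)$, and its image is by construction generated by the coefficients of \eqref{eq:quantum_yangian_generating_matrix}.

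\emph{Injectivity.} I would argue with filtrations and associated graded algebras. Give $Y_m(\mathfrak{gl}_N)$ the filtration with $\deg t_{i,j}^{(r)}=r+m$ for $r\geq 1-m$. On $Y^{(m)}$ only $t^{(1-m)},\dots,t^{(0)}$ survive, with degrees $1,\dots,m$. Give $\mathcal U(\mathfrak{gl}_N)^{\otimes m}$ the PBW filtration. Expanding the product, the coefficient of $u^{m-s}$ in $t_{i,j}(u)$ is a sum over $s$-element subsets $k_1<\dots<k_s$ of ordered products of $E^{[k_1]},\dots,E^{[k_s]}$. So the image of $t_{i,j}^{(s-m)}$ has filtration degree $s$, and its top symbol in $S(\mathfrak{gl}_N)^{\otimes m}=\mathbb C[x^{[k]}_{a,b}]$ is the $(i,j)$ entry of the $s$-th elementary symmetric-type matrix polynomial $e_s(X^{[1]},\dots,X^{[m]})$ in matrices with commuting entries. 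Because of the transpose in $L$, this entry is, up to index bookkeeping, the coefficient of $u^{m-s}$ in $(u+X^{[1]})\cdots(u+X^{[m]})$.

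It then suffices to show two things. First, the ordered monomials in the $t_{i,j}^{(r)}$, $1-m\le r\le 0$, span $\operatorname{gr}Y^{(m)}$. This follows from the RTT relation \eqref{eq:yangian_relation}: commutators of generators have strictly lower degree, so $\operatorname{gr}Y^{(m)}$ is a quotient of the polynomial ring in $mN^2$ variables. Second, the $mN^2$ top symbols are algebraically independent. The map $(X^{[1]},\dots,X^{[m]})\mapsto$ coefficients of $(u+X^{[1]})\cdots(u+X^{[m]})$ is a morphism $\mathrm{Mat}_N^m\to\mathrm{Mat}_N^m$ between spaces of equal dimension. Algebraic independence is equivalent to this map being dominant, that is, to its differential being generically invertible. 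To check this, evaluate at a point with $X^{[k]}=c_k\mathbf 1+\text{generic}$, or more simply at diagonal $X^{[k]}=c_k\mathbf 1$ with distinct $c_k$. There the differential sends $(\delta X^{[k]})$ to $\sum_k \prod_{l\ne k}(u+c_l)\,\delta X^{[k]}$. The polynomials $\prod_{l\neq k}(u+c_l)$ are linearly independent of degree less than $m$ (Lagrange interpolation), so the differential is an isomorphism.

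With these two facts, the graded map $\operatorname{gr}Y^{(m)}\to S(\mathfrak{gl}_N)^{\otimes m}$ is injective on the polynomial ring that surjects onto $\operatorname{gr}Y^{(m)}$. Hence $\operatorname{gr}Y^{(m)}$ is that polynomial ring and the graded map is injective. A filtered map whose associated graded map is injective is itself injective.

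The main obstacle I expect is compatibility of filtrations. One must check that the leading symbol of the image of an arbitrary ordered monomial is the product of the generators' symbols, with no cancellation. This is automatic here, since $S(\mathfrak{gl}_N)^{\otimes m}$ is a domain and each generator's symbol is nonzero. One must also confirm that the transposition conventions in $L(u)$ do not affect the Jacobian argument; they only permute coordinates, so they do not.
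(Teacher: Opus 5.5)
The paper gives no proof of this proposition. It is quoted from Cherednik \cite{Ch1,Ch2} and Brundan--Kleshchev \cite{BK} and used as a black box, for instance for the bottom arrow of \eqref{eq:quantum_Miura_diagram}. Your argument is a correct, self-contained proof, and it necessarily takes a different route from the cited one.

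\emph{Descent.} This is immediate, because $L_{[1]}(u)\cdots L_{[m]}(u)$ is a monic polynomial of degree $m$.

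\emph{Injectivity.} Each step holds:
\begin{itemize}
\item With $\deg t^{(r)}_{i,j}=r+m$, the RTT commutators $[t^{(r)},t^{(s)}]$ have degree $r+s-1$ in the original indexing. Terms with index above $m$ vanish in the quotient. So $\operatorname{gr}Y^{(m)}$ is a commutative quotient of a polynomial ring in $mN^2$ variables.
\item The map is filtered for the PBW filtration.
\item At $X^{[k]}=c_k\mathbf 1$ with distinct $c_k$, the differential is $(\delta X^{[k]})\mapsto\sum_k\prod_{l\neq k}(u+c_l)\,\delta X^{[k]}$. This is a bijection onto matrix polynomials of degree $<m$. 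By the Jacobian criterion in characteristic zero, the $mN^2$ symbols are therefore algebraically independent.
\end{itemize}

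\emph{What your route buys.} It is elementary and uses no $W$-algebra input. This matters here, because the paper combines this proposition with the Brundan--Kleshchev identification. It also yields the PBW theorem for $Y^{(m)}$ as a by-product: $\operatorname{gr}Y^{(m)}$ is exactly the polynomial ring on the symbols of $t^{(1-m)}_{i,j},\dots,t^{(0)}_{i,j}$.

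\emph{One clarification.} Your closing worry about cancellation is not settled by $S(\mathfrak{gl}_N)^{\otimes m}$ being a domain, which only controls individual products. It is settled by two facts together. First, $\operatorname{gr}$ of a filtered homomorphism is an algebra homomorphism. Second, the symbols are algebraically independent, as you proved. That independence is what rules out cancellation among linear combinations of ordered monomials.
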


\section{Quasideterminants and commutation relations}
\label{sec:quasideterminants}

In this section, we collect the quasideterminant identities used below and
prove the two properties of the corner quasideterminant that are essential
for Hamiltonian reduction: its RTT relation and its commutativity with the
blocks belonging to the reduction subalgebra. We keep the block
calculations explicit, since the positions of the blocks are important in
both arguments.

\subsection{Basic notation}

Fix positive integers $n,p$, set $N=np$, and let $E_{r,s}$ denote the
standard generators of $\mathfrak{gl}_N$. We use the block notation
$$
E_{a,b}^{i,j}
:=E_{(a-1)p+i,\,(b-1)p+j}.
$$
The lower indices record the $n\times n$ block position, while the upper
indices record the matrix entry inside a $p\times p$ block. The
commutation relations \eqref{eq:comm_relations_gl} can be written as
\begin{equation}\label{eq:generator_commutation_relations}
[E_{a,b}^{i,j},E_{c,d}^{k,r}]
=\delta_{b,c}\delta_{j,k}E_{a,d}^{i,r}
-\delta_{d,a}\delta_{r,i}E_{c,b}^{k,j}.
\end{equation}
\begin{definition}
Introduce the universal matrix
\begin{equation}\label{eq:universal_matrix}
E=\sum_{a,b=1}^n\sum_{i,j=1}^p
e_{(a-1)p+i,\,(b-1)p+j}\otimes E_{b,a}^{j,i}
\in \operatorname{Mat}_N(\mathcal U(\mathfrak{gl}_N)).
\end{equation}
The quantum $L$-operator is
$$L(u)=u+E\in \mathrm{Mat}_{N}(\mathcal{U}(\mathfrak{gl}_N)[u]).$$
\end{definition}
We write the matrix blocks in boldface.
For $1\leq a,b\leq n$, set
\begin{equation}\label{eq:generator_blocks}
\mathbf E_{a,b}
=\sum_{i,j=1}^pe_{(a-1)p+i,(b-1)p+j}\otimes E_{b,a}^{j,i}.
\end{equation}
In $n\times n$ block form,
\begin{equation}\label{eq:block_L_operator}
L(u)=\begin{pmatrix}
u+\mathbf E_{1,1}&\mathbf E_{1,2}&\cdots&\mathbf E_{1,n}\\
\mathbf E_{2,1}&u+\mathbf E_{2,2}&\cdots&\mathbf E_{2,n}\\
\vdots&\vdots&\ddots&\vdots\\
\mathbf E_{n,1}&\mathbf E_{n,2}&\cdots&u+\mathbf E_{n,n}
\end{pmatrix},
\end{equation}
where every block is $p\times p$ and the scalar $u$ on each diagonal
block denotes scalar multiplication by the identity matrix.

\paragraph{Example.}
For $N=4$ and $n=p=2$, the same $L$-operator can be written successively
in ordinary matrix notation, indexed block notation, and block-matrix
notation:
\begin{align*}
L(u)= \left(\begin{array}{cccc}
u+\mathrm{E}_{1,1} & \mathrm{E}_{2,1} & \mathrm{E}_{3,1} & \mathrm{E}_{4,1}\\
\mathrm{E}_{1,2} & u+\mathrm{E}_{2,2} & \mathrm{E}_{3,2} & \mathrm{E}_{4,2}\\
\mathrm{E}_{1,3} & \mathrm{E}_{2,3} & u+\mathrm{E}_{3,3}& \mathrm{E}_{4,3}\\
\mathrm{E}_{1,4} & \mathrm{E}_{2,4} & \mathrm{E}_{3,4}& u+\mathrm{E}_{4,4}
\end{array}\right) &=
\left(\begin{array}{cccc}
u+E_{1,1}^{1,1}&E_{1,1}^{2,1}
&E_{2,1}^{1,1}&E_{2,1}^{2,1}\\
E_{1,1}^{1,2}&u+E_{1,1}^{2,2}
&E_{2,1}^{1,2}&E_{2,1}^{2,2}\\
E_{1,2}^{1,1}&E_{1,2}^{2,1}
&u+E_{2,2}^{1,1}&E_{2,2}^{2,1}\\
E_{1,2}^{1,2}&E_{1,2}^{2,2}
&E_{2,2}^{1,2}&u+E_{2,2}^{2,2}
\end{array}\right)\\ & = \left(\begin{array}{cc}
u+\mathbf E_{1,1}&\mathbf E_{1,2}\\
\mathbf E_{2,1}&u+\mathbf E_{2,2}\end{array}\right).
\end{align*}

\subsection{Quasideterminants, the inverse matrix, and the row determinant}

We recall the definition of a quasideterminant \cite{GR1,GR2,GGRW} and the formulas
needed below. Let $A$ be a matrix with entries in a possibly
noncommutative ring $\mathcal A$:
\begin{align}
A = \begin{pmatrix}
a_{1,1} & a_{1,2} &\dots & a_{1,n} \\
\vdots & \vdots &\ddots & \vdots \\
a_{n-1,1} & a_{n-1,2}&\dots& a_{n-1,n}\\
a_{n,1} & a_{n,2} & \dots & a_{n,n}
\end{pmatrix}.
\end{align}
\begin{definition}\label{def:corner_quasideterminant}
The corner quasideterminant of $A$ with respect to the entry $a_{n,1}$ is
\begin{align}
|A|_{n,1} = a_{n,1} -
\begin{pmatrix} a_{n,2} & \dots & a_{n,n} \end{pmatrix}
\begin{pmatrix}
a_{1,2} &\dots & a_{1,n} \\
\vdots &\ddots & \vdots \\
a_{n-1,1} &\dots & a_{n-1,n}
\end{pmatrix}^{-1}
\begin{pmatrix} a_{1,1}\\ \vdots \\ a_{n-1,1}\end{pmatrix}
\end{align}
provided that the displayed $(n-1)\times(n-1)$ submatrix is invertible in
$\mathcal A$. 
\end{definition}

\begin{proposition}[\cite{GGRW}]\label{prp:quasidet_and_inverse}
Assume that $A^{-1}$ and $|A|_{n,1}^{-1}$ exist over $\mathcal A$. Then
    \begin{align}\label{eq:inverse_matrix_quasidet}
|A|_{n,1}^{-1}= (A^{-1})_{1,n}.
    \end{align}

\end{proposition}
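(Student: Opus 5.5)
We prove the identity $|A|_{n,1}^{-1}=(A^{-1})_{1,n}$ by block elimination. Partition $A$ so that the distinguished entry $a_{n,1}$ sits in a corner. Permuting the first column to the end makes it a bottom-right corner:
\begin{align*}
A\,\Pi=\begin{pmatrix} B & c\\ r & a_{n,1}\end{pmatrix},
\end{align*}
where $\Pi$ is the corresponding permutation matrix and
\begin{align*}
B=(a_{i,j})_{1\le i\le n-1,\;2\le j\le n},\qquad c=(a_{i,1})_{1\le i\le n-1},\qquad r=(a_{n,2},\dots,a_{n,n}).
\end{align*}
Thus $B$ is the $(n-1)\times(n-1)$ submatrix of Definition~\ref{def:corner_quasideterminant}, $c$ is a column and $r$ is a row. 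By definition $|A|_{n,1}=a_{n,1}-rB^{-1}c$, which is exactly the Schur complement of $B$ in $A\Pi$. Since $A\Pi$ is $A$ with its columns reordered, $(A\Pi)^{-1}=\Pi^{-1}A^{-1}$ is $A^{-1}$ with its rows reordered. The entry of $(A\Pi)^{-1}$ in the last row and last column is therefore the entry of $A^{-1}$ in row $1$, column $n$. It is this index bookkeeping that turns the corner $(n,1)$ into the entry $(1,n)$ of the inverse.

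The key step is a noncommutative LDU factorization, valid because $B$ is invertible (as required in Definition~\ref{def:corner_quasideterminant}):
\begin{align*}
\begin{pmatrix} B & c\\ r & a_{n,1}\end{pmatrix}
=\begin{pmatrix} 1 & 0\\ rB^{-1} & 1\end{pmatrix}
\begin{pmatrix} B & 0\\ 0 & |A|_{n,1}\end{pmatrix}
\begin{pmatrix} 1 & B^{-1}c\\ 0 & 1\end{pmatrix}.
\end{align*}
We check this by direct multiplication, keeping every product in its order since $\mathcal A$ is noncommutative. The unitriangular factors have evident inverses, obtained by negating the off-diagonal blocks. The middle factor is invertible because $|A|_{n,1}^{-1}$ exists by hypothesis. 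Multiplying the three inverses in reverse order shows that the lower-right entry of $(A\Pi)^{-1}$ is $1\cdot|A|_{n,1}^{-1}\cdot 1=|A|_{n,1}^{-1}$: the off-diagonal blocks of the triangular inverses contribute nothing to that corner. Combined with the index correspondence of the first paragraph, this gives \eqref{eq:inverse_matrix_quasidet}.

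The inverse of a matrix over a noncommutative ring is two-sided, so the right inverse built from the factorization coincides with the assumed $A^{-1}$; uniqueness of two-sided inverses gives this. The only real obstacle is noncommutative care: we never commute $B^{-1}$ past $r$ or $c$, and we keep the permutation convention consistent.
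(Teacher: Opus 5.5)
Your argument is correct: the factorization of $A\Pi$ with the Schur complement $|A|_{n,1}$ in the corner is right. So is the row-permutation bookkeeping that sends the $(n,n)$ entry of $(A\Pi)^{-1}$ to $(A^{-1})_{1,n}$, and you handle the noncommutative ordering and the two-sidedness of inverses properly. The paper gives no proof of its own and only cites Gelfand--Gelfand--Retakh--Wilson, where the identity rests on this same block-inversion (Schur complement) formula, so your route is essentially the standard one.
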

\paragraph{Example.}
For $n=2$, the corner quasideterminant is
\begin{align*}
A = \begin{pmatrix} a_{1,1} & a_{1,2}\\
a_{2,1} & a_{2,2} \end{pmatrix} \ \ ;\ \
\begin{aligned}
|A|_{2,1} = a_{2,1}-a_{2,2} a^{-1}_{1,2} a_{1,1}.
\end{aligned}
\end{align*}
If, in addition, $A$ and all the quasideterminants appearing below are
invertible over $\mathcal A$, then
\begin{align*}
A^{-1}=\begin{pmatrix}
(a_{1,1}-a_{1,2}a_{2,2}^{-1}a_{2,1})^{-1}
& (a_{2,1}-a_{2,2}a_{1,2}^{-1}a_{1,1})^{-1}\\[1ex]
(a_{1,2}-a_{1,1}a_{2,1}^{-1}a_{2,2})^{-1}
& (a_{2,2}-a_{2,1}a_{1,1}^{-1}a_{1,2})^{-1}
\end{pmatrix}.
\end{align*}
For the general relation between the entries of the inverse matrix and
quasideterminants, see \cite{GGRW}.

\begin{proposition}\label{prp:invariance_property}
Suppose that the displayed quasideterminants are defined, that the upper-left
$(n-1)\times(n-1)$ block of $X$ is invertible, and that the lower-right
$(n-1)\times(n-1)$ block of $Y$ is invertible. The corner quasideterminant
is invariant under left and right multiplication by matrices of the
following forms:
\begin{align}\label{XAY}
|A|_{n,1} = |X A Y|_{n,1}
\end{align}
where
\begin{align}
X = \begin{pmatrix}
x_{1,1} &\dots & x_{1,n-1}& 0\\
\vdots &\ddots &\vdots & \vdots \\
x_{n-1,1} &\dots& x_{n-1,n-1} & 0\\
x_{n,1} & \dots & x_{n,n-1} & 1
\end{pmatrix} \ \ ;\ \
Y = \begin{pmatrix}
1 & 0 &\dots & 0 \\
y_{2,1}&y_{2,2}&\dots & y_{2,n} \\
\vdots & \vdots &\ddots & \vdots \\
y_{n,1} & y_{n,2} & \dots & y_{n,n}
\end{pmatrix}
\end{align}
\end{proposition}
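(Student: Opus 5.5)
The plan is to reduce the invariance to Proposition~\ref{prp:quasidet_and_inverse}, which expresses the corner quasideterminant through the inverse matrix, $|A|_{n,1}^{-1}=(A^{-1})_{1,n}$. Since $(XAY)^{-1}=Y^{-1}A^{-1}X^{-1}$, it suffices to show
\[
\bigl(Y^{-1}A^{-1}X^{-1}\bigr)_{1,n}=(A^{-1})_{1,n}.
\]
This is immediate once we know the shapes of $X^{-1}$ and $Y^{-1}$.

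\textbf{Step 1: the shapes of the inverses.} Write $X$ in block form with respect to the splitting $n=(n-1)+1$:
\[
X=\begin{pmatrix} X' & 0\\ x & 1\end{pmatrix}, \qquad X' \text{ invertible}.
\]
Then
\[
X^{-1}=\begin{pmatrix} X'^{-1} & 0\\ -xX'^{-1} & 1\end{pmatrix},
\]
so the last column of $X^{-1}$ is the unit vector $e_n$. Similarly, write
\[
Y=\begin{pmatrix} 1 & 0\\ y & Y''\end{pmatrix}, \qquad Y'' \text{ invertible}.
\]
Then
\[
Y^{-1}=\begin{pmatrix} 1 & 0\\ -Y''^{-1}y & Y''^{-1}\end{pmatrix},
\]
so the first row of $Y^{-1}$ is $e_1^{T}$. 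Both inverses are two-sided, which can be checked directly even over a noncommutative ring.

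\textbf{Step 2: compute the corner entry.} From Step 1,
\[
\bigl(Y^{-1}A^{-1}X^{-1}\bigr)_{1,n}=e_1^{T}\,Y^{-1}A^{-1}X^{-1}\,e_n=e_1^{T}A^{-1}e_n=(A^{-1})_{1,n}.
\]
Inverting both sides and applying Proposition~\ref{prp:quasidet_and_inverse} to $A$ and to $XAY$ gives $|XAY|_{n,1}=|A|_{n,1}$.

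\textbf{The main obstacle.} The invertibility hypotheses are the delicate part. The statement only assumes that the quasideterminants are defined, not that $A^{-1}$ exists. To avoid relying on invertibility of the full matrix $A$, I would also give a direct argument from Definition~\ref{def:corner_quasideterminant}. Write
\[
A=\begin{pmatrix} c & B\\ a_{n,1} & r\end{pmatrix},
\]
where $c$ is the column $(a_{1,1},\ldots,a_{n-1,1})^{T}$, $B$ is the $(n-1)\times(n-1)$ block, and $r=(a_{n,2},\ldots,a_{n,n})$, so that $|A|_{n,1}=a_{n,1}-rB^{-1}c$.

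\emph{Left multiplication.} First,
\[
XA=\begin{pmatrix} X'c & X'B\\ xc+a_{n,1} & xB+r\end{pmatrix}.
\]
Its corner quasideterminant is
\[
xc+a_{n,1}-(xB+r)B^{-1}X'^{-1}X'c=a_{n,1}-rB^{-1}c .
\]

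\emph{Right multiplication.} Next,
\[
AY=\begin{pmatrix} c+By & BY''\\ a_{n,1}+ry & rY''\end{pmatrix}.
\]
Its corner quasideterminant is
\[
a_{n,1}+ry-rY''\,Y''^{-1}B^{-1}(c+By)=a_{n,1}-rB^{-1}c .
\]

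Each step uses only that $X'$ and $Y''$ are invertible and that $B$ is invertible; the latter is exactly the requirement that the quasideterminants be defined. This direct computation also handles arbitrary noncommutative $\mathcal A$, so I would present it as the actual proof and keep the inverse-matrix argument as motivation.
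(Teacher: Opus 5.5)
Your direct computation from Definition~\ref{def:corner_quasideterminant} is correct and is essentially the paper's own block-cancellation argument; the paper writes it out only for $n=3$, while your splitting $n=(n-1)+1$ covers general $n$ at once. The inverse-matrix route via Proposition~\ref{prp:quasidet_and_inverse} is valid only under the extra invertibility assumptions you flag, so keeping it as motivation is the right choice.
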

\begin{proof}
We write out the calculation for $n=3$; the same block cancellation proves
the statement for arbitrary $n$. For left multiplication, the product
$XA$ has the block form
\begin{align*}
XA
&=
\begin{pmatrix}
\begin{pmatrix}x_{1,1}&x_{1,2}\\x_{2,1}&x_{2,2}\end{pmatrix}
&\begin{pmatrix}0\\0\end{pmatrix}\\[1ex]
\begin{pmatrix}x_{3,1}&x_{3,2}\end{pmatrix}&1
\end{pmatrix}
\begin{pmatrix}
\begin{pmatrix}a_{1,1}\\a_{2,1}\end{pmatrix}
&\begin{pmatrix}a_{1,2}&a_{1,3}\\a_{2,2}&a_{2,3}\end{pmatrix}\\[1ex]
a_{3,1}&\begin{pmatrix}a_{3,2}&a_{3,3}\end{pmatrix}
\end{pmatrix} \\
&=
\begin{pmatrix}
\begin{pmatrix}x_{1,1}&x_{1,2}\\x_{2,1}&x_{2,2}\end{pmatrix}
\begin{pmatrix}a_{1,1}\\a_{2,1}\end{pmatrix}
&
\begin{pmatrix}x_{1,1}&x_{1,2}\\x_{2,1}&x_{2,2}\end{pmatrix}
\begin{pmatrix}a_{1,2}&a_{1,3}\\a_{2,2}&a_{2,3}\end{pmatrix}\\[2ex]
a_{3,1}+\begin{pmatrix}x_{3,1}&x_{3,2}\end{pmatrix}
\begin{pmatrix}a_{1,1}\\a_{2,1}\end{pmatrix}
&
\begin{pmatrix}a_{3,2}&a_{3,3}\end{pmatrix}
+\begin{pmatrix}x_{3,1}&x_{3,2}\end{pmatrix}
\begin{pmatrix}a_{1,2}&a_{1,3}\\a_{2,2}&a_{2,3}\end{pmatrix}
\end{pmatrix}.
\end{align*}
In the defining formula, the upper-left block of $X$ cancels between the
inverse of the deleted submatrix and the first-column vector. Substitution
into the definition of the corner quasideterminant therefore gives
\begin{align*}
|XA|_{3,1}
&=a_{3,1}+
\begin{pmatrix}
x_{3,1} & x_{3,2}
\end{pmatrix}
\begin{pmatrix}
a_{1,1}\\
a_{2,1}
\end{pmatrix}\\
&\quad-
\left(
\begin{pmatrix}
a_{3,2} & a_{3,3}
\end{pmatrix}
+
\begin{pmatrix}
x_{3,1} & x_{3,2}
\end{pmatrix}
\begin{pmatrix}
a_{1,2} & a_{1,3}\\
a_{2,2} & a_{2,3}
\end{pmatrix}
\right)
\begin{pmatrix}
a_{1,2} & a_{1,3}\\
a_{2,2} & a_{2,3}
\end{pmatrix}^{-1}
\begin{pmatrix}
a_{1,1}\\
a_{2,1}
\end{pmatrix}=|A|_{3,1}.
\end{align*}
Similarly, the product $AY$ has the form
\begin{align*}
AY
&=
\begin{pmatrix}
\begin{pmatrix}a_{1,1}\\a_{2,1}\end{pmatrix}
&\begin{pmatrix}a_{1,2}&a_{1,3}\\a_{2,2}&a_{2,3}\end{pmatrix}\\[1ex]
a_{3,1}&\begin{pmatrix}a_{3,2}&a_{3,3}\end{pmatrix}
\end{pmatrix}
\begin{pmatrix}
1&\begin{pmatrix}0&0\end{pmatrix}\\[1ex]
\begin{pmatrix}y_{2,1}\\y_{3,1}\end{pmatrix}
&\begin{pmatrix}y_{2,2}&y_{2,3}\\y_{3,2}&y_{3,3}\end{pmatrix}
\end{pmatrix}\\
&=
\begin{pmatrix}
\begin{pmatrix}a_{1,1}\\a_{2,1}\end{pmatrix}
+\begin{pmatrix}a_{1,2}&a_{1,3}\\a_{2,2}&a_{2,3}\end{pmatrix}
\begin{pmatrix}y_{2,1}\\y_{3,1}\end{pmatrix}
&
\begin{pmatrix}a_{1,2}&a_{1,3}\\a_{2,2}&a_{2,3}\end{pmatrix}
\begin{pmatrix}y_{2,2}&y_{2,3}\\y_{3,2}&y_{3,3}\end{pmatrix}\\[2ex]
a_{3,1}+\begin{pmatrix}a_{3,2}&a_{3,3}\end{pmatrix}
\begin{pmatrix}y_{2,1}\\y_{3,1}\end{pmatrix}
&
\begin{pmatrix}a_{3,2}&a_{3,3}\end{pmatrix}
\begin{pmatrix}y_{2,2}&y_{2,3}\\y_{3,2}&y_{3,3}\end{pmatrix}
\end{pmatrix}.
\end{align*}
Here the lower-right block of $Y$ cancels between the last-row vector and
the inverse of the deleted submatrix. Thus substitution into the defining
formula gives
\begin{align*}
|AY|_{3,1}
&=a_{3,1}+
\begin{pmatrix}
a_{3,2} & a_{3,3}
\end{pmatrix}
\begin{pmatrix}
y_{2,1}\\
y_{3,1}
\end{pmatrix}\\
&\quad-
\begin{pmatrix}
a_{3,2} & a_{3,3}
\end{pmatrix}
\begin{pmatrix}
a_{1,2} & a_{1,3}\\
a_{2,2} & a_{2,3}
\end{pmatrix}^{-1}
\left(
\begin{pmatrix}
a_{1,1}\\
a_{2,1}
\end{pmatrix}
+
\begin{pmatrix}
a_{1,2} & a_{1,3}\\
a_{2,2} & a_{2,3}
\end{pmatrix}
\begin{pmatrix}
y_{2,1}\\
y_{3,1}
\end{pmatrix}
\right)=|A|_{3,1}.
\end{align*}
\end{proof}

We next record explicit expressions for quasideterminants of two special
classes of matrices.

\begin{definition}

For a matrix $A=(a_{a,b})_{a,b=1}^n$ with entries in a unital associative
algebra, define
\begin{equation}\label{eq:quantum_row_determinant}
\operatorname{rdet}A
=\sum_{\sigma\in S_n}\operatorname{sgn}(\sigma)
 a_{n,\sigma(n)}\cdots a_{1,\sigma(1)}.
\end{equation}
\end{definition}
For the following special matrices, the corner quasideterminant can be
computed explicitly.
\begin{proposition}\label{prp:nice_formulas_for_quasidet}
\begin{enumerate}
\item For a block matrix whose first upper diagonal consists of identity
blocks and whose blocks above the first upper diagonal are zero, one has
\begin{align}
\label{quarow}
\left|
\begin{pmatrix} a_{1,1} & 1&
0&\ldots& 0\\
a_{2,1}&a_{2,2}& 1&\ldots& 0\\
a_{3,1}&a_{3,2}&a_{3,3}&\ldots& 0\\
\vdots&\vdots&\vdots& \ddots& \vdots\\
a_{n,1}&a_{n,2}&a_{n,3}&\ldots&a_{n,n}
 \end{pmatrix}
\right|_{n,1} = (-1)^{1-n}
\operatorname{rdet}\begin{pmatrix} a_{1,1} & 1&
0&\ldots& 0\\
a_{2,1}&a_{2,2}& 1&\ldots& 0\\
a_{3,1}&a_{3,2}&a_{3,3}&\ldots& 0\\
\vdots&\vdots&\vdots& \ddots& \vdots\\
a_{n,1}&a_{n,2}&a_{n,3}&\ldots&a_{n,n}
 \end{pmatrix}
\end{align}
\item If $b_i\in\mathcal A^\times$ for $1\leq i<n$, then

\begin{align}
    \left|
\begin{pmatrix} a_{1} & b_1&
0&\ldots& 0\\
0&a_{2}& b_2&\ldots& 0\\
0&0&a_{3}&\ddots& 0\\
\vdots&\vdots&\ddots& \ddots& b_{n-1}\\
0&0&\ldots&0&a_{n}
 \end{pmatrix}\right|_{n,1}=(-1)^{1-n}
a_nb_{n-1}^{-1}a_{n-1}b_{n-2}^{-1}\cdots a_2b_1^{-1}a_1.
\end{align}
\end{enumerate}
\end{proposition}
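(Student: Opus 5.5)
Both parts can be proved by induction on $n$, using the definition of the corner quasideterminant together with the elimination idea behind Proposition~\ref{prp:invariance_property}. For part (1), write the matrix as
\[
A=\begin{pmatrix} \mathbf{c} & B\\ a_{n,1} & \mathbf{r}\end{pmatrix},
\]
where $B$ is the $(n-1)\times(n-1)$ submatrix on rows $1,\dots,n-1$ and columns $2,\dots,n$. Because the superdiagonal consists of identities and everything above it vanishes, $B$ is lower triangular with $1$'s on the diagonal, so it is invertible over $\mathcal A$. Hence $|A|_{n,1}=a_{n,1}-\mathbf r B^{-1}\mathbf c$ is well defined.

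Rather than invert $B$ explicitly, I would solve $B\mathbf w=\mathbf c$ by forward substitution. Define
\[
w_1=a_{1,1},\qquad w_k=a_{k,1}-\sum_{j=2}^{k}a_{k,j}w_{j-1}\quad (2\le k\le n-1).
\]
Row $k$ of $B\mathbf w=\mathbf c$ reads $\sum_{j=2}^{k}a_{k,j}w_{j-1}+w_k=a_{k,1}$, which confirms the recursion. Then
\[
|A|_{n,1}=a_{n,1}-\sum_{j=2}^{n}a_{n,j}w_{j-1},
\]
so $|A|_{n,1}$ is exactly the $k=n$ term $w_n$ of the same recursion.

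It remains to prove by induction that $w_k=(-1)^{k-1}\operatorname{rdet}A^{(k)}$, where $A^{(k)}$ is the upper-left $k\times k$ submatrix, which has the same shape. To do this, expand $\operatorname{rdet}A^{(k)}$ along its last row, which is the leftmost factor in \eqref{eq:quantum_row_determinant}. Choose $\sigma(k)=j$. In the rows above $k$, the only nonzero entries in columns $>j$ are superdiagonal $1$'s, so $\sigma$ is forced: $\sigma(i)=i+1$ for $j\le i\le k-1$, and $\sigma$ restricted to rows $1,\dots,j-1$ is a permutation of columns $1,\dots,j-1$. This cycle contributes the sign $(-1)^{k-j}$, and the forced factors are all $1$. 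The remaining rows reproduce $\operatorname{rdet}A^{(j-1)}$ placed on the right, so the expansion is
\[
\operatorname{rdet}A^{(k)}=\sum_{j=1}^{k}(-1)^{k-j}a_{k,j}\operatorname{rdet}A^{(j-1)},\qquad \operatorname{rdet}A^{(0)}=1.
\]
Substituting the induction hypothesis $\operatorname{rdet}A^{(j-1)}=(-1)^{j-2}w_{j-1}$ turns this into the recursion for $w_k$ with overall sign $(-1)^{k-1}$, the $j=1$ term giving $a_{k,1}$. The main obstacle is this bookkeeping of the noncommutative ordering: the row-determinant convention places row $n$ on the left, and that is precisely compatible with $\mathbf r B^{-1}\mathbf c$ having the last-row entries on the left. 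The sign of the cycle must also be checked carefully.

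For part (2), I would rescale to reduce to part (1). Set
\[
D=\operatorname{diag}\bigl(b_1^{-1},\,b_2^{-1},\,\dots,\,b_{n-1}^{-1},\,1\bigr),
\]
which has the shape of $X$ in Proposition~\ref{prp:invariance_property}, with invertible upper-left block. Therefore $|A|_{n,1}=|DA|_{n,1}$. The matrix $DA$ has superdiagonal $1$'s, diagonal entries $b_k^{-1}a_k$ for $k<n$ and $a_n$ in the corner, and zeros elsewhere. Only the diagonal contributes, so the recursion of part (1) gives
\[
w_1=b_1^{-1}a_1,\qquad w_k=-b_k^{-1}a_kw_{k-1}\ (k<n),\qquad w_n=-a_nw_{n-1}.
\]
Unwinding these yields $(-1)^{n-1}a_nb_{n-1}^{-1}a_{n-1}\cdots b_1^{-1}a_1$, as claimed.
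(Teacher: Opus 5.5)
Your proof is correct and follows the paper's approach. In part (1) you evaluate the defining formula using that the deleted submatrix is lower unitriangular; the paper only carries this out for $n=3$ and cites the general identity, whereas your forward-substitution recursion for $w_k$, matched against the last-row expansion $\operatorname{rdet}A^{(k)}=\sum_{j=1}^{k}(-1)^{k-j}a_{k,j}\operatorname{rdet}A^{(j-1)}$, actually settles every $n$. In part (2) you use the same diagonal rescaling together with Proposition~\ref{prp:invariance_property}.

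One justification needs fixing. Rows above $k$ do have non-superdiagonal nonzero entries in columns $>j$ (for example $a_{j+1,j+1}$), so that claim is false as stated. The forcing $\sigma(i)=i+1$ should instead be argued column by column: for $c=k,k-1,\dots,j+1$, the only unused row with a nonzero entry in column $c$ is row $c-1$, through its superdiagonal $1$. Your conclusion is unaffected.
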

\begin{proof}
For the first assertion, the general identity is standard; see
\cite{GGRW}. For illustration, let $A$ be the displayed block matrix with
$n=3$, identity blocks on the first upper diagonal, and zero blocks above
that diagonal. Then the defining formula gives
\begin{align*}
\left|A
\right|_{3,1} &=
a_{3,1}-\begin{pmatrix}a_{3,2}&a_{3,3}\end{pmatrix}
\begin{pmatrix}1 & 0\\
a_{2,2}&1
\end{pmatrix}^{-1}
\begin{pmatrix}a_{1,1}\\
a_{2,1}\end{pmatrix}  =
a_{3,1} +a_{3,3}a_{2,2}a_{1,1} -
a_{3,2}a_{1,1}-a_{3,3}a_{2,1}.
\end{align*}
This expression coincides with the row determinant
\begin{align*}
\operatorname{rdet}\begin{pmatrix} a_{1,1} &1 & 0\\
a_{2,1} & a_{2,2}& 1\\
a_{3,1} & a_{3,2} & a_{3,3}
\end{pmatrix} =
a_{3,3}a_{2,2}a_{1,1} +
a_{3,1} -
 a_{3,2}a_{1,1} -a_{3,3}a_{2,1}.
\end{align*}
The general statement follows by the same expansion.

For the second assertion, use Proposition~\ref{prp:invariance_property}
and the factorization
\begin{align*}
\begin{pmatrix} a_{1} & b_1&
0&\ldots& 0\\
0&a_{2}& b_2&\ldots& 0\\
0&0&a_{3}&\ddots& 0\\
\vdots&\vdots&\ddots& \ddots& b_{n-1}\\
0&0&\ldots&0&a_{n}
 \end{pmatrix}=\begin{pmatrix} b_{1} & 0&
0&\ldots& 0\\
0&b_{2}& 0&\ldots& 0\\
0&0&b_{3}&\ddots& 0\\
\vdots&\vdots&\ddots& \ddots& 0\\
0&0&\ldots&0&1
 \end{pmatrix}\begin{pmatrix}  b_1^{-1}a_{1}  &1&
0&\ldots& 0\\
0&b_2^{-1}a_{2}& 1&\ldots& 0\\
0&0&b_3^{-1}a_{3}&\ddots& 0\\
\vdots&\vdots&\ddots& \ddots&1\\
0&0&\ldots&0&a_{n}
 \end{pmatrix}.
\end{align*}
The left factor does not change the corner quasideterminant. Applying the
first assertion to the right factor and evaluating its row determinant
gives the stated product.
\end{proof}

\subsection{Commutation relations}
We now regard $L(u)$ as an $n\times n$ matrix whose entries are
$p\times p$ matrices over $\mathcal U(\mathfrak{gl}_N)((u^{-1}))$.
In this subsection only, it is convenient to retain the notation
\begin{equation*}
\mathbf L_{a,b}(u)=u\delta_{a,b}+\mathbf E_{a,b},
\qquad
L_{a,b}^{i,j}(u)=(\mathbf L_{a,b}(u))_{i,j}.
\end{equation*}
\begin{definition}
Suppose that the block submatrix obtained by deleting the last block row
and the first block column is invertible in a suitable overring. The corner
quasideterminant of the block matrix $L(u)$ is
\begin{equation}\label{eq:corner_quasideterminant}
\begin{aligned}
T(u)=|L(u)|_{n,1}
&=\mathbf L_{n,1}(u)
-\begin{pmatrix}
\mathbf L_{n,2}(u)&\cdots&\mathbf L_{n,n}(u)
\end{pmatrix}\\
&\quad\times
\begin{pmatrix}
\mathbf L_{1,2}(u)&\cdots&\mathbf L_{1,n}(u)\\
\vdots&\ddots&\vdots\\
\mathbf L_{n-1,2}(u)&\cdots&\mathbf L_{n-1,n}(u)
\end{pmatrix}^{-1}
\begin{pmatrix}
\mathbf L_{1,1}(u)\\
\vdots\\
\mathbf L_{n-1,1}(u)
\end{pmatrix}.
\end{aligned}
\end{equation}
\end{definition}

The matrix $L(u)=u+E$ is invertible as a formal Laurent series:
$$
L(u)^{-1}
=u^{-1}\sum_{r\geq0}(-u^{-1}E)^r
\in\operatorname{Mat}_N
\bigl(\mathcal U(\mathfrak{gl}_N)((u^{-1}))\bigr).
$$
Using the same $n\times n$ block decomposition as in
\eqref{eq:block_L_operator}, write
\begin{equation}\label{eq:full_inverse_block_matrix}
L(u)^{-1}
=\begin{pmatrix}
\mathbf M_{1,1}(u)&\mathbf M_{1,2}(u)&\cdots&\mathbf M_{1,n}(u)\\
\mathbf M_{2,1}(u)&\mathbf M_{2,2}(u)&\cdots&\mathbf M_{2,n}(u)\\
\vdots&\vdots&\ddots&\vdots\\
\mathbf M_{n,1}(u)&\mathbf M_{n,2}(u)&\cdots&\mathbf M_{n,n}(u)
\end{pmatrix},
\end{equation}
where each $p\times p$ block is written explicitly as
\begin{equation*}
\mathbf M_{a,b}(u)=\sum_{i,j=1}^p e_{i,j}\otimes
M_{a,b}^{i,j}(u).
\end{equation*}
Thus
\begin{equation}\label{eq:full_inverse_relations}
\sum_{b=1}^n\sum_{j=1}^p
L_{a,b}^{i,j}(u)M_{b,c}^{j,k}(u)
=\delta_{a,c}\delta_{i,k},
\qquad
\sum_{b=1}^n\sum_{j=1}^p
M_{a,b}^{i,j}(u)L_{b,c}^{j,k}(u)
=\delta_{a,c}\delta_{i,k}.
\end{equation}
\paragraph{Example.} When $n=2$, the block matrices are
\begin{equation*}
L(u)=
\begin{pmatrix}
\mathbf L_{1,1}(u)&\mathbf L_{1,2}(u)\\
\mathbf L_{2,1}(u)&\mathbf L_{2,2}(u)
\end{pmatrix},
\qquad
L(u)^{-1}=
\begin{pmatrix}
\mathbf M_{1,1}(u)&\mathbf M_{1,2}(u)\\
\mathbf M_{2,1}(u)&\mathbf M_{2,2}(u)
\end{pmatrix}.
\end{equation*}
The relations $L(u)L(u)^{-1}=L(u)^{-1}L(u)=1$ become
\begin{align*}
\begin{pmatrix}
\mathbf L_{1,1}(u)&\mathbf L_{1,2}(u)\\
\mathbf L_{2,1}(u)&\mathbf L_{2,2}(u)
\end{pmatrix}\begin{pmatrix}
\mathbf M_{1,1}(u)&\mathbf M_{1,2}(u)\\
\mathbf M_{2,1}(u)&\mathbf M_{2,2}(u)
\end{pmatrix} =
\begin{pmatrix}
\mathbf M_{1,1}(u)&\mathbf M_{1,2}(u)\\
\mathbf M_{2,1}(u)&\mathbf M_{2,2}(u)
\end{pmatrix}\begin{pmatrix}
\mathbf L_{1,1}(u)&\mathbf L_{1,2}(u)\\
\mathbf L_{2,1}(u)&\mathbf L_{2,2}(u)
\end{pmatrix}
=\begin{pmatrix}\mathbf1_p&0\\0&\mathbf1_p\end{pmatrix}.
\end{align*}

Whenever the corner quasideterminant $T(u)=|L(u)|_{n,1}$ is defined in a
localized ring $\mathcal A$ containing
$\mathcal{U}(\mathfrak{gl}_N)((u^{-1}))$,
Proposition~\ref{prp:quasidet_and_inverse} gives
\begin{equation}\label{eq:inverse_corner_block}
\mathbf M_{1,n}(u)=T(u)^{-1}.
\end{equation}
Recall that the commutation relations among the entries of $L(u)$ can be
written in RTT form \eqref{eq:yangian_relation} as
\begin{align}\label{Y}
(u-v)L_{a,b}^{k,i}(u)L_{c,d}^{s,j}(v)
+L_{c,b}^{s,i}(u)L_{a,d}^{k,j}(v) =(u-v)L_{c,d}^{s,j}(v)L_{a,b}^{k,i}(u)
+L_{c,b}^{s,i}(v)L_{a,d}^{k,j}(u).
\end{align}
We use this identity to establish the two properties required for
Hamiltonian reduction.

\begin{proposition}\label{prp:inverse_block_commutation}
For $1\leq a\leq n-1$ and $2\leq b\leq n$, every entry of
$\mathbf L_{a,b}(u)$ commutes with every entry of
$\mathbf M_{1,n}(v)$:
\begin{equation}\label{eq:inverse_block_commutation}
[L_{a,b}^{i,j}(u),\mathbf{M}_{1,n}(v)]=0,
\end{equation}
where the commutator is understood entrywise.
Consequently, whenever $T(v)$ is defined and invertible in a common
overring,
\begin{equation}\label{LT}
[L_{a,b}^{i,j}(u),T(v)]=0.
\end{equation}
\end{proposition}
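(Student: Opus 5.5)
The plan is to avoid manipulating the inverse of the deleted submatrix directly. Instead I will use the covariance of the universal matrix $E$ under the adjoint action of $\mathfrak{gl}_N$. Since $L_{a,b}^{i,j}(u)=u\delta_{a,b}\delta_{i,j}+E_{b,a}^{j,i}$ and scalars are central, it suffices to show that the generator $E_{b,a}^{j,i}$ commutes with every entry of $\mathbf M_{1,n}(v)$. The RTT relation \eqref{Y} could be used as an alternative route, through $L_2(v)^{-1}R(u-v)L_1(u)=L_1(u)R(u-v)L_2(v)^{-1}$. The adjoint argument is shorter, however, and gives the same formula.

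\emph{Step 1.} Write $X=E$, viewed as an $N\times N$ matrix with entries $X_{r,s}=E_{s,r}$. By \eqref{eq:comm_relations_gl},
$$
[E_{k,l},X_{r,s}]=\delta_{l,s}X_{r,k}-\delta_{r,k}X_{l,s}=\bigl(Xe_{k,l}-e_{k,l}X\bigr)_{r,s}.
$$
Thus $\operatorname{ad}E_{k,l}$, applied entrywise, acts on $X$ as the matrix derivation $-\operatorname{ad}e_{k,l}$. Both maps are derivations of $\operatorname{Mat}_N(\mathcal U(\mathfrak{gl}_N))$ that commute with the scalar $v$. Hence the identity extends to every power $X^r$, and then term by term to the series $L(v)^{-1}=v^{-1}\sum_{r\ge0}(-v^{-1}X)^r$. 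Equivalently, one can differentiate $L(v)L(v)^{-1}=1$. Writing $M=L(v)^{-1}$, this gives
$$
[E_{k,l},M_{z,w}]=M_{z,k}\delta_{l,w}-\delta_{z,k}M_{l,w}.
$$

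\emph{Step 2.} Specialize to $E_{k,l}=E_{b,a}^{j,i}$, so that $k$ lies in block $b$ and $l$ lies in block $a$. Take the entry $(z,w)$ in the block $\mathbf M_{1,n}(v)$, so that $z$ lies in block $1$ and $w$ lies in block $n$.
\begin{itemize}
\item The term $\delta_{l,w}$ can be nonzero only if $a=n$, which is excluded by $a\le n-1$.
\item The term $\delta_{z,k}$ can be nonzero only if $b=1$, which is excluded by $b\ge2$.
\end{itemize}
Both terms therefore vanish, and this proves \eqref{eq:inverse_block_commutation}.

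\emph{Step 3.} By \eqref{eq:inverse_corner_block}, $T(v)=\mathbf M_{1,n}(v)^{-1}$ in the overring. If an element $x$ commutes with all entries of an invertible matrix $\mathbf M$, then $x$ also commutes with all entries of $\mathbf M^{-1}$. This follows from
$$
[x,\mathbf M^{-1}]=-\mathbf M^{-1}[x,\mathbf M]\mathbf M^{-1}=0.
$$
Applying this with $x=L_{a,b}^{i,j}(u)$ gives \eqref{LT}.

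The main care point is the sign and index bookkeeping in Step 1. One must check that the transposed convention in \eqref{eq:universal_matrix} turns $\operatorname{ad}E_{k,l}$ into conjugation by $e_{k,l}$ on the correct side. Otherwise the block conditions in Step 2 would come out as $a\neq1$, $b\neq n$ instead of $a\neq n$, $b\neq1$. A further point is that the extension from $X$ to the formal series $L(v)^{-1}$ is legitimate, since it holds coefficientwise in $v^{-1}$.
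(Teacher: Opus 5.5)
Your proof is correct, but it reaches the key commutation identity by a different route from the paper.

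\textbf{The paper's route.} It starts from the RTT form \eqref{Y} of the $\mathfrak{gl}_N$ relations and multiplies it on both sides by entries of $L(v)^{-1}$ to obtain \eqref{ML}. It then sets $e=1$, $f=n$. For $a\le n-1$ and $b\ge2$ the two Kronecker-delta terms vanish, leaving $(u-v)[M_{1,n}^{q,m}(v),L_{a,b}^{k,i}(u)]=0$. The claim follows after cancelling $u-v$, which is harmless because the commutator does not depend on $u$.

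\textbf{Your route.} You use the covariance $[E_{k,l},E]=Ee_{k,l}-e_{k,l}E$ (entrywise) and propagate it to the formal inverse by the derivation property. This gives the closed formula $[E_{k,l},M_{z,w}]=M_{z,k}\delta_{l,w}-\delta_{z,k}M_{l,w}$. Your bookkeeping with the transposed convention \eqref{eq:universal_matrix} is right, and the block conditions do come out as $a\neq n$, $b\neq1$.

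\textbf{How they relate.} The two arguments carry the same information. Substitute $L(u)M(v)=M(v)L(u)=1+(u-v)M(v)$ into \eqref{ML}; the constant terms cancel, and what remains is exactly $u-v$ times your formula.

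\textbf{What each buys.} Your version is more direct. It needs no second spectral parameter and no division by $u-v$. It also shows exactly which entries of $L(v)^{-1}$ fail to commute with a given generator. This is the same mechanism as the inverse-matrix commutator formula used later in the proof of Theorem~\ref{th:quasidet_and_BK}.

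The paper's detour through \eqref{ML} pays off in the next step. Multiplying \eqref{ML} once more by $L(u)^{-1}$ yields \eqref{MM}, and hence the opposite RTT relation of Proposition~\ref{prp:inverse_block_opposite_RTT}. So both propositions come out of a single computation. With your approach, that RTT relation would need a separate, though standard, argument, for instance inverting the RTT relation for $L$.
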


\begin{proof}
Multiply \eqref{Y} by the entries of $L(v)^{-1}$ and use
\eqref{eq:full_inverse_relations}. The resulting identity is
\begin{align}\label{ML}
&(u-v)M_{e,f}^{q,m}(v)L_{a,b}^{k,i}(u)
+\sum_{c=1}^n\sum_{s=1}^pM_{e,c}^{q,s}(v)L_{c,b}^{s,i}(u)
\delta_{k,m}\delta_{a,f}\notag\\
&\qquad=(u-v)L_{a,b}^{k,i}(u)M_{e,f}^{q,m}(v)
+\sum_{d=1}^n\sum_{j=1}^p\delta_{q,i}\delta_{e,b}
L_{a,d}^{k,j}(u)M_{d,f}^{j,m}(v).
\end{align} Set $e=1$ and $f=n$. If $a\leq n-1$, then
$\delta_{a,f}=0$; if $b\geq2$, then $\delta_{e,b}=0$. Hence both terms
containing Kronecker symbols vanish, and the relation above reduces to
$$
M_{1,n}^{q,m}(v)L_{a,b}^{k,i}(u)
=L_{a,b}^{k,i}(u)M_{1,n}^{q,m}(v).
$$
Finally,
\eqref{eq:inverse_corner_block} and the general rule
$$
[A,B^{-1}]=-B^{-1}[A,B]B^{-1}
$$
give \eqref{LT}.

\end{proof}

\begin{proposition}\label{prp:inverse_block_opposite_RTT}
The entries of the block $\mathbf M_{1,n}(u)$ satisfy the opposite RTT
relation
\begin{align}\label{M1n}
&(u-v)M_{1,n}^{i,j}(u)M_{1,n}^{q,m}(v)
+M_{1,n}^{i,m}(u)M_{1,n}^{q,j}(v)\notag\\
&\qquad=(u-v)M_{1,n}^{q,m}(v)M_{1,n}^{i,j}(u)
+M_{1,n}^{i,m}(v)M_{1,n}^{q,j}(u).
\end{align}
Equivalently, in matrix form,
\begin{equation}\label{eq:inverse_block_opposite_RTT}
\bigl(\mathbf M_{1,n}(u)\bigr)_1
\bigl(\mathbf M_{1,n}(v)\bigr)_2R(u-v)
=R(u-v)\bigl(\mathbf M_{1,n}(v)\bigr)_2
\bigl(\mathbf M_{1,n}(u)\bigr)_1.
\end{equation}
\end{proposition}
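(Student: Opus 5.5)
We derive the RTT relation for $L(u)^{-1}$ from that of $L(u)$ and then restrict to the block in position $(1,n)$.

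\emph{Step 1: the full inverse.} In matrix form \eqref{Y} reads
$$
R(u-v)L_1(u)L_2(v)=L_2(v)L_1(u)R(u-v),
$$
with $R(u)=u+P$ acting on $\mathbb C^N\otimes\mathbb C^N$. The factors $L_1(u)$ and $L_2(v)$ are invertible in $\operatorname{Mat}_N\otimes\operatorname{Mat}_N$ over $\mathcal U(\mathfrak{gl}_N)((u^{-1}))((v^{-1}))$, with inverses $L_1(u)^{-1}$ and $L_2(v)^{-1}$. We multiply the relation on the left by $L_2(v)^{-1}L_1(u)^{-1}$ and on the right by $L_2(v)^{-1}L_1(u)^{-1}$. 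Writing $M(u)=L(u)^{-1}$, this gives
$$
M_2(v)M_1(u)R(u-v)=R(u-v)M_1(u)M_2(v).
$$
Since $PM_1M_2P=M_2M_1$ with the matrix entries kept in the same order, conjugating by $P$ turns this into the equivalent form
$$
M_1(u)M_2(v)R(u-v)=R(u-v)M_2(v)M_1(u).
$$
This is the opposite RTT relation for the full $N\times N$ matrix $M$. No quasideterminant theory is needed: only the invertibility of $L(u)$ as a formal series. One could also get the same identity entrywise by contracting \eqref{ML} a second time with $M$, but the matrix manipulation is cleaner.

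\emph{Step 2: restriction to a block.} Write the opposite RTT relation in components:
$$
(u-v)M_{\alpha\beta}(u)M_{\gamma\delta}(v)+M_{\alpha\delta}(u)M_{\gamma\beta}(v)=(u-v)M_{\gamma\delta}(v)M_{\alpha\beta}(u)+M_{\alpha\delta}(v)M_{\gamma\beta}(u),
$$
for global indices $\alpha,\beta,\gamma,\delta\in\{1,\dots,N\}$. The index placement comes from expanding $R=u+P$ on each side; this sign and transpose bookkeeping is the only real point of care.

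The crucial structural fact is that the permutation term only exchanges the column indices $\beta\leftrightarrow\delta$ and never mixes a row index with a column index. So if both row indices $\alpha,\gamma$ lie in block row $1$ and both column indices $\beta,\delta$ lie in block column $n$, then every term involves only entries of $\mathbf M_{1,n}$. Substituting $\alpha=i$, $\beta=(n-1)p+j$, $\gamma=q$, $\delta=(n-1)p+m$ gives exactly \eqref{M1n}.

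The matrix form \eqref{eq:inverse_block_opposite_RTT} follows by regarding $\mathbf M_{1,n}$ as a $p\times p$ matrix, with $R(u)=u+P$ now taken on $\mathbb C^p\otimes\mathbb C^p$. The main obstacle is verifying that the index positions in the permutation terms of \eqref{M1n} match the component expansion from Step 1; I would check this directly on the case $n=2$, $p=1$.
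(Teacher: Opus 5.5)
Your strategy is the paper's: first prove the opposite RTT relation for the whole inverse $M(u)=L(u)^{-1}$, then restrict to block row $1$ and block column $n$. The paper does this in index form, contracting \eqref{ML} once more with $M(u)$ to get \eqref{MM} and then setting $c=e=1$, $h=f=n$. Your Step 2, including the component identity, is correct.

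\textbf{The gap is in Step 1.} As written it fails in two places, and the two errors cancel.
\begin{enumerate}
\item Multiplying $R(u-v)L_1(u)L_2(v)=L_2(v)L_1(u)R(u-v)$ on both sides by $L_2(v)^{-1}L_1(u)^{-1}=M_2(v)M_1(u)$ leaves $M_2(v)M_1(u)L_2(v)L_1(u)R(u-v)M_2(v)M_1(u)$ on the right. This does not simplify. $M_1(u)$ and $L_2(v)$ act in different auxiliary factors, but their entries are noncommuting elements of $\mathcal U(\mathfrak{gl}_N)((u^{-1}))((v^{-1}))$, so $M_1(u)L_2(v)\neq L_2(v)M_1(u)$. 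Your intermediate identity $M_2(v)M_1(u)R(u-v)=R(u-v)M_1(u)M_2(v)$ is in fact false.
\item Conjugation by $P$ does not turn it into the target. We have $P\,X_1(u)Y_2(v)\,P=X_2(u)Y_1(v)$: the auxiliary labels swap but each spectral parameter stays with its entries. So conjugation gives $M_1(v)M_2(u)R(u-v)=R(u-v)M_2(u)M_1(v)$. Renaming $u\leftrightarrow v$ to reach the target turns $R(u-v)$ into $R(v-u)=-(u-v)+P$.
\end{enumerate}
To see that the intermediate identity cannot hold, suppose both the renamed identity and the correct one held. Subtracting them would give $2(u-v)\bigl(M_1(u)M_2(v)-M_2(v)M_1(u)\bigr)=0$. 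Then every entry of $M(u)$ would commute with every entry of $M(v)$. This already fails at order $u^{-2}v^{-2}$, where $[M_{\alpha\beta}(u),M_{\gamma\delta}(v)]$ has coefficient $[E_{\beta\alpha},E_{\delta\gamma}]$.

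\textbf{The repair is one line.} Multiply on the left by $(L_2(v)L_1(u))^{-1}=M_1(u)M_2(v)$ and on the right by $(L_1(u)L_2(v))^{-1}=M_2(v)M_1(u)$; equivalently, invert both sides. This gives $M_1(u)M_2(v)R(u-v)=R(u-v)M_2(v)M_1(u)$ directly, with no conjugation by $P$.

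This is exactly the matrix version of the paper's computation. Multiplying by $M_2(v)$ on the right and then on the left gives $M_2(v)R(u-v)L_1(u)=L_1(u)R(u-v)M_2(v)$, the matrix form of \eqref{ML}. Repeating with $M_1(u)$ gives \eqref{MM}.

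With this fix your Step 2 goes through unchanged. Your component identity is \eqref{MM} in global indices, and the $P$-terms only exchange the column indices $\beta\leftrightarrow\delta$. The substitution $\alpha=i$, $\beta=(n-1)p+j$, $\gamma=q$, $\delta=(n-1)p+m$ then yields \eqref{M1n}.
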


\begin{proof}
Multiply \eqref{ML} by the entries of $L(u)^{-1}$ and use
\eqref{eq:full_inverse_relations} once more. This gives
\begin{align}\label{MM}
&(u-v)M_{c,h}^{i,j}(u)M_{e,f}^{q,m}(v)
+M_{c,f}^{i,m}(u)M_{e,h}^{q,j}(v)\notag\\
&\qquad=(u-v)M_{e,f}^{q,m}(v)M_{c,h}^{i,j}(u)
+M_{c,f}^{i,m}(v)M_{e,h}^{q,j}(u).
\end{align}
Taking $c=e=1$ and $h=f=n$ gives \eqref{M1n}. Relations of this type are
equivalent to the matrix RTT relation with the rational $R$-matrix.

Explicit computations are written out in Appendix~\ref{app:det}.
\end{proof}

\begin{corollary}\label{cor:corner_quasideterminant_RTT}
Whenever $T(u)$ is defined and invertible in a common overring, it satisfies
the RTT relation
\begin{equation}\label{TT}
R(u-v)T_1(u)T_2(v)=T_2(v)T_1(u)R(u-v).
\end{equation}
\end{corollary}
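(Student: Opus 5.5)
We derive the corollary directly from Proposition~\ref{prp:inverse_block_opposite_RTT} together with the identification \eqref{eq:inverse_corner_block}, $\mathbf M_{1,n}(u)=T(u)^{-1}$. Write $M(u)=\mathbf M_{1,n}(u)$ for brevity. Proposition~\ref{prp:inverse_block_opposite_RTT} gives
\begin{equation*}
M_1(u)M_2(v)R(u-v)=R(u-v)M_2(v)M_1(u).
\end{equation*}
The plan is to invert both sides of this relation and read off the RTT relation for $T=M^{-1}$.

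The steps are as follows. First, since $T(u)$ is assumed invertible in the overring, the matrices $M_1(u)=T_1(u)^{-1}$ and $M_2(v)=T_2(v)^{-1}$ are invertible. Here the subscripts denote the embeddings into $\operatorname{Mat}_p\otimes\operatorname{Mat}_p$, and the inverse is taken with respect to matrix multiplication over the noncommutative coefficient ring. Second, we multiply the opposite RTT relation on the left by $T_2(v)T_1(u)$ and on the right by $T_2(v)T_1(u)$. Using $T_2(v)T_1(u)M_1(u)M_2(v)=1$ on the left-hand side and $M_2(v)M_1(u)T_2(v)T_1(u)=1$ on the right-hand side, this yields
\begin{equation*}
R(u-v)\,T_2(v)T_1(u)=T_2(v)T_1(u)\,R(u-v).
\end{equation*}

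This is not yet \eqref{TT}, so a third step is needed. The cleaner route is to multiply instead on the left by $T_1(u)T_2(v)$... but that does not simplify $M_1M_2$. The correct manipulation is to invert the identity as a whole. Taking inverses of both sides of $M_1(u)M_2(v)R(u-v)=R(u-v)M_2(v)M_1(u)$ gives
\begin{equation*}
R(u-v)^{-1}T_2(v)T_1(u)=T_1(u)T_2(v)R(u-v)^{-1}.
\end{equation*}
Here $R(u-v)^{-1}=(u-v)^{-1}\bigl(1-(u-v)^{-1}P\bigr)^{-1}$, computed either as a formal series or via $P^2=1$, which gives $R(u)R(-u)=1-u^2$. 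Multiplying on both sides by $R(u-v)$ then produces
\begin{equation*}
T_2(v)T_1(u)R(u-v)=R(u-v)T_1(u)T_2(v),
\end{equation*}
which is \eqref{TT}.

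The only genuine subtlety is justifying the inversion in a noncommutative setting. The scalar $R$-matrix entries commute with everything, so the issue is the operator entries: one must know that $(M_1M_2)^{-1}=T_2T_1$ as matrices over $\mathcal A$. This holds because $M_1=T_1^{-1}$ and $M_2=T_2^{-1}$ are two-sided inverses, so no extra assumption beyond the invertibility of $T$ is needed. I would also check that $R(u-v)$ is invertible over $\mathbb C((u^{-1},v^{-1}))$, or equivalently clear the denominator $1-(u-v)^2$ using $R(u)R(-u)=1-u^2$.
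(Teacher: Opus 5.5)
Your final argument (inverting both sides of $M_1(u)M_2(v)R(u-v)=R(u-v)M_2(v)M_1(u)$ to get $R(u-v)^{-1}T_2(v)T_1(u)=T_1(u)T_2(v)R(u-v)^{-1}$, then multiplying by $R(u-v)$ on both sides) is correct and is exactly the paper's proof. You should still delete the abandoned second step, because it asserts $M_2(v)M_1(u)T_2(v)T_1(u)=1$, which fails since the entries of $M_1(u)$ and $T_2(v)$ need not commute, so the relation $R\,T_2T_1=T_2T_1\,R$ derived there is false (right-multiplying by $T_1(u)T_2(v)$ instead gives \eqref{TT} at once, with no need to invert $R$); also, your series for $R(u-v)^{-1}$ should read $(u-v)^{-1}\bigl(1+(u-v)^{-1}P\bigr)^{-1}$.
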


\begin{proof}
By \eqref{eq:inverse_corner_block}, the opposite RTT relation
\eqref{eq:inverse_block_opposite_RTT} is the corresponding relation for
$T(u)^{-1}$. Taking inverses and multiplying by $R(u-v)$ on the left and
right gives \eqref{TT}.
\end{proof}

\begin{remark}
The formal inverse $L(u)^{-1}$ is sufficient for the two propositions
above. The reduction argument in Section~\ref{sec:localization_independence}
shows why the result is independent of the choice of compatible
localization.
\end{remark}
\section{Reduction and W-algebras}
\label{sec:W_algebras}
In the first subsection, we recall the basic definitions and results of
Brundan--Kleshchev \cite{BK} on finite $W$-algebras, building on the earlier
rectangular construction of Ragoucy--Sorba \cite{RS}. These works provide
explicit formulas for the invariants, although the origin of these formulas,
especially that of the shifts, remains unclear.

\subsection{The finite W-algebra}

Define the following subalgebras of
$\mathfrak{gl}_N$ directly in terms of the matrix units:
\begin{equation}\label{eq:quantum_reduction_subalgebras}
\begin{aligned}
\mathfrak m_n
&=\operatorname{span}\bigl\{E_{a,b}^{i,j}\mid a>b\bigr\},
&
\mathfrak m_n^+
&=\operatorname{span}\bigl\{E_{a,b}^{i,j}\mid a<b\bigr\},\\
\mathfrak l_n
&=\operatorname{span}\bigl\{E_{a,a}^{i,j}\mid 1\leq a\leq n\bigr\},
&
\mathfrak p_n
&=\operatorname{span}\bigl\{E_{a,b}^{i,j}\mid a\leq b\bigr\}
=\mathfrak l_n\oplus\mathfrak m_n^+.
\end{aligned}
\end{equation}
Here and below, the internal indices $i,j$ range from $1$ to $p$ in each
displayed span. Thus $\mathfrak m_n$ and $\mathfrak m_n^+$ are the strictly
lower- and upper-block-triangular subalgebras, $\mathfrak l_n$ is the
block-diagonal subalgebra, and $\mathfrak p_n$ is the upper-block-triangular
parabolic subalgebra.

Define a character $\psi:\mathfrak m_n\to\mathbb C$ by
\begin{equation}\label{eq:quantum_character}
\psi(E_{a,b}^{i,j})=\delta_{i,j}\delta_{a,b+1},
\qquad a>b.
\end{equation}
Let $I_\psi=\mathcal U(\mathfrak{gl}_N)(\mathfrak m_n-\psi)$ be the
corresponding left ideal.
Set
$$
Q_\psi=\mathcal U(\mathfrak{gl}_N)/I_\psi,
\qquad \overline1=1+I_\psi.
$$

\begin{definition}
Define the  rectangular finite $W$-algebra is
\begin{equation}\label{eq:finite_W_algebra}
\mathcal W_n(\mathfrak{gl}_p)
=Q_\psi^{\mathfrak m_n},
\end{equation}
where $\mathfrak m_n$ acts on $Q_\psi$ by the adjoint action. Explicitly,
$$
Q_\psi^{\mathfrak m_n}
=\{f\overline1\in Q_\psi\mid [x,f]\overline1=0
\text{ for every }x\in\mathfrak m_n\}.
$$
\end{definition}
The PBW decomposition associated with
$\mathfrak{gl}_N=\mathfrak p_n\oplus\mathfrak m_n$ gives a vector-space
isomorphism
$$
\mathcal U(\mathfrak p_n)\otimes\mathcal U(\mathfrak m_n)
\xrightarrow{\sim}\mathcal U(\mathfrak{gl}_N).
$$
Applying the character $\psi$ to the second tensor factor defines the
projection
$$
\operatorname{pr}_\psi:
\mathcal U(\mathfrak{gl}_N)\longrightarrow\mathcal U(\mathfrak p_n).
$$
Its kernel is $I_\psi$, so it induces an isomorphism
$Q_\psi\simeq\mathcal U(\mathfrak p_n)$ of left
$\mathcal U(\mathfrak p_n)$-modules.
Under this identification, the finite $W$-algebra is realized as a
subalgebra of $\mathcal U(\mathfrak p_n)$, giving an injective algebra
homomorphism
$$
i:\mathcal W_n(\mathfrak{gl}_p)
\hookrightarrow\mathcal U(\mathfrak p_n).
$$
The assignments $E_{a,a}^{i,j}\mapsto E^{i,j}_{[a]}$  induce a canonical algebra
isomorphism
\begin{equation}\label{eq:Levi_tensor_identification}
\kappa:\mathcal U(\mathfrak l_n)
\xrightarrow{\sim}\mathcal U(\mathfrak{gl}_p)^{\otimes n}.
\end{equation}
We use different notation on the two sides of this isomorphism. Set
$$
\mathbf E_a=\mathbf E_{a,a}
=\sum_{i,j=1}^p e_{i,j}\otimes E_{a,a}^{j,i}
\in\operatorname{Mat}_p(\mathbb C)\otimes\mathcal U(\mathfrak l_n)
$$
and
$$
 E_{[a]}
=\sum_{i,j=1}^p e_{i,j}\otimes E^{j,i}_{[a]}
\in\operatorname{Mat}_p(\mathbb C)
\otimes\mathcal U(\mathfrak{gl}_p)^{\otimes n}.
$$
Thus $\kappa(\mathbf E_a)=E_{[a]}$, where unbracketed block labels refer to
diagonal blocks in $\mathfrak{gl}_N$ and square-bracketed subscripts
label tensor factors.
\subsection{Brundan--Kleshchev generating matrix}
We now construct the corresponding generating matrix in the finite
$W$-algebra, using the row determinant defined in
\eqref{eq:quantum_row_determinant}.
In the quotient by $I_\psi$, the strictly upper-block entries of
\eqref{eq:block_L_operator} satisfy
$$
\mathbf E_{a,a+1}=\mathbf1_p,
\qquad
\mathbf E_{a,b}=0\quad(b>a+1).
$$
We consider
\begin{equation}
    \label{eq:reduced_L_op}L^{\mathrm{red}}(u)=\begin{pmatrix}
   u-p(n-1)+ \mathbf E_{1,1}&\mathbf1_p&\cdots&0\\
\mathbf E_{2,1}&u-p(n-2)+\mathbf E_{2,2}&\ddots&\vdots\\
\vdots&\ddots&\ddots&\mathbf1_p\\
\mathbf E_{n,1}&\cdots&\mathbf E_{n,n-1}&u+\mathbf E_{n,n}
\end{pmatrix}.\end{equation}
This is a matrix with entries in $Q_\psi[u]$.

\begin{proposition}[Brundan--Kleshchev \cite{BK}]
The matrix entries of the polynomial
\begin{equation} \label{eq:quantum_generating_function}
\mathbf{T}(u) = \operatorname{rdet}\bigl(L^{\mathrm{red}}(u) \bigr)
\end{equation}
belong to $\mathcal{W}_n(\mathfrak{gl}_p)[u]$, and their coefficients
generate $\mathcal{W}_n(\mathfrak{gl}_p)$.
\end{proposition}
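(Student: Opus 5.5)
The plan is to obtain $\mathbf T(u)$ as the image in $Q_\psi[u]$ of the corner quasideterminant $T(u)=|L(u)|_{n,1}$, suitably shifted. Invariance under $\mathfrak m_n$ then follows from Proposition~\ref{prp:inverse_block_commutation}. Generation is a separate step, handled by a filtration argument.

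\emph{Step 1: invariance before reduction.} By \eqref{eq:generator_blocks}, the block $\mathbf L_{a,b}(u)$ carries the generators $E_{b,a}^{j,i}$. The elements of $\mathfrak m_n$, namely $E_{b,a}^{j,i}$ with $b>a$, therefore sit exactly in the blocks $\mathbf L_{a,b}$ with $a<b$. Every such block satisfies $a\le n-1$ and $b\ge 2$. Proposition~\ref{prp:inverse_block_commutation} then gives $[x,T(v)]=0$ for all $x\in\mathfrak m_n$, in whatever localization $T(v)$ is defined. Hence every coefficient $f$ of $T(v)$ satisfies $[x,f]\overline 1=0$, so $f\overline1$ lies in $Q_\psi^{\mathfrak m_n}$, provided $f\overline 1$ makes sense. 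To guarantee this I will not work with $T(u)$ directly. Instead I use Proposition~\ref{prp:invariance_property} to rewrite $T(u)$ as the corner quasideterminant of a matrix with polynomial entries and an invertible superdiagonal. Concretely, I left-multiply by a matrix $X$ that is block lower-triangular in its upper-left $(n-1)\times(n-1)$ part, and right-multiply by a suitable $Y$. This clears the blocks strictly above the first superdiagonal and normalizes the superdiagonal blocks to $\mathbf 1_p$ modulo $I_\psi$. Alternatively, one could observe that modulo $I_\psi$ the blocks above the superdiagonal already vanish and the superdiagonal blocks are $\mathbf 1_p$.

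\emph{Step 2: identification with the row determinant.} Once the matrix has identity superdiagonal and zeros above it, part~1 of Proposition~\ref{prp:nice_formulas_for_quasidet} turns the quasideterminant into $(-1)^{n-1}\operatorname{rdet}$, a polynomial in $u$. The delicate point is that replacing $\mathbf E_{a,a+1}$ by $\mathbf 1_p$ is legitimate only when these entries stand at the \emph{right} of each monomial, since $I_\psi$ is a left ideal. In the ordered products $a_{n,\sigma(n)}\cdots a_{1,\sigma(1)}$, an entry $\mathbf E_{a,a+1}$ must therefore be commuted past the factors to its right. The commutators $[E^{j,i}_{a+1,a},E^{i',j'}_{c,a+1}]$, summed over the $p$ internal indices of the matrix product, produce diagonal corrections. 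I expect these to telescope into the scalar shifts $-p(n-a)$ on the $a$-th diagonal block. This normal-ordering bookkeeping is the main obstacle. I would first carry it out explicitly for $n=2$ and then $n=3$, checking that the correction equals exactly $p$ per superdiagonal block crossed. I would then argue by induction on $n$ via the expansion of $\operatorname{rdet}$ along the last row. The result is $\operatorname{pr}_\psi(\text{normalized }T(u))=\pm\operatorname{rdet}L^{\mathrm{red}}(u)$, so the entries of $\mathbf T(u)$ lie in $\mathcal W_n(\mathfrak{gl}_p)[u]$.

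\emph{Step 3: generation.} I would use the Kazhdan filtration on $\mathcal U(\mathfrak p_n)\simeq Q_\psi$, whose associated graded is $\mathbb C[\chi+\mathfrak m_n^\perp]$. The associated graded of $\mathcal W_n(\mathfrak{gl}_p)$ is the ring of $M_n$-invariants, namely the functions on the Slodowy-type slice, which form a polynomial ring in $np^2$ generators. The top symbols of the coefficients of $\mathbf T(u)$ are the coefficients of the classical row determinant. Their restriction to the transversal slice is triangular with respect to the slice coordinates, with leading linear part equal to the coordinate functions. Hence these symbols generate the invariant ring, and a standard filtered lifting argument shows that the coefficients themselves generate $\mathcal W_n(\mathfrak{gl}_p)$.
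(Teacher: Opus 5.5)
Your Step~1 invariance argument is fine: the blocks $\mathbf L_{a,b}(u)$ with $a<b$ carry exactly the generators of $\mathfrak m_n$, and Proposition~\ref{prp:inverse_block_commutation} covers all of them. The gap is in passing from $T(u)$ to $\operatorname{rdet}L^{\mathrm{red}}(u)$, which is where the shifts $-p(n-a)$ must be produced.

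First, the rewriting in Step~1 cannot do what you need. Part~1 of Proposition~\ref{prp:nice_formulas_for_quasidet} requires the superdiagonal to be literally $\mathbf 1_p$. A matrix with polynomial entries and such a superdiagonal has a polynomial corner quasideterminant. But $T(u)$ is not polynomial over $\mathcal U(\mathfrak{gl}_N)$: for $n=2$, $T(u)=\mathbf L_{2,1}(u)-\mathbf L_{2,2}(u)\mathbf E_{1,2}^{-1}\mathbf L_{1,1}(u)$ has $u^2$-coefficient $-\mathbf E_{1,2}^{-1}$. If the superdiagonal is only $\mathbf 1_p$ modulo $I_\psi$, the quasideterminant contains inverses of those blocks between the factors (compare part~2), and part~1 does not apply. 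Replacing these blocks by $\mathbf 1_p$, or the higher blocks by $0$, inside the expression is exactly what a left ideal does not permit.

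Second, normal ordering $\operatorname{rdet}$ in Step~2 produces no shift at all. In $a_{n,\sigma(n)}\cdots a_{1,\sigma(1)}$ the block $\mathbf E_{a,a+1}$ lies in row $a$, so it only crosses entries of rows $d<a$, i.e.\ generators $E^{k,r}_{c,d}$ with $d<a$. The $E_{c,a+1}$ you list sit in row $a+1$, to its left. By \eqref{eq:generator_commutation_relations}, $[E^{j,i}_{a+1,a},E^{k,r}_{c,d}]=\delta_{a,c}\delta_{i,k}E^{j,r}_{a+1,d}$, which lies in $\mathfrak m_n$ with $\psi=0$. The same holds for all further commutators. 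So you recover the unshifted expression. For $n=2$, $\mathbf E_{1,2}$ is already rightmost and you get $(u+\mathbf E_{2,2})(u+\mathbf E_{1,1})-\mathbf E_{2,1}$, which is not invariant: for $p=1$, $[E_{2,1},(u+E_{2,2})(u+E_{1,1})-E_{1,2}]\overline1=-\overline1$.

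The missing idea is that the shifts come from commuting the $\mathfrak m_n$-blocks past the inverse, not from reordering a determinant. The paper reduces the column $\mathbf q(u)=\mathbf A(u)^{-1}\mathbf c(u)$ using \eqref{eq:key_q_commutator}, $\sum_j[L^{i,j}_{a,b+1}(u),q^{j,k}_b(u)]=-p\,q^{i,k}_{a-1}(u)+p\,\delta_{a,1}\delta_{i,k}$. It then applies $\mathbf A\mathbf q=\mathbf c$ to $\overline1$, moving every $\mathbf L_{a,b+1}$ with $b\ge a$ to the right. Each of the $n-a$ blocks in row $a$ contributes $p$. This includes the blocks above the superdiagonal, which vanish modulo $I_\psi$ but still act through their commutator with $\mathbf q_b$. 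That is the origin of $p(n-a)$, and it is why setting those blocks to zero first gives the wrong answer.

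The resulting unitriangular system yields $T(u)\overline1=|L^{\mathrm{red}}(u)|_{n,1}$ (Theorem~\ref{th:quasidet_and_BK}). Only then is part~1 applied, to $L^{\mathrm{red}}(u)$, whose superdiagonal really is $\mathbf 1_p$. Membership follows from invariance of $T(u)$ in a localization $\mathcal A$ together with the homomorphism $\mathcal A^{\mathfrak m_n}\to(\mathcal A/\mathcal I_\psi)^{\mathfrak m_n}$; the paper works inside such a localization rather than avoiding it.

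Generation is not proved in the paper; it is quoted from Brundan--Kleshchev. Your Step~3 is the standard filtration route. It rests on identifying $\operatorname{gr}\mathcal W_n(\mathfrak{gl}_p)$ with the functions on the slice, a Gan--Ginzburg/Premet-type theorem, and on a triangularity of the symbols that you assert rather than check.
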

\paragraph{Example.} In case $N=2p, ~n=2$ the Brundan-Kleschev matrix has form:
$$\mathbf{T}(u)=(u+\mathbf{E}_{2,2})(u-p+\mathbf{E}_{1,1})-\mathbf{E}_{2,1}.$$
If we set $\mathbf{E}_{2,1}=0$, than we recover the usual truncated Yangian generating matrix.

The generalization to generic $N$ is obvious. Consider the projection onto the block-diagonal factor. It gives an algebra homomorphism
$$
\eta:\mathcal U(\mathfrak p_n)\longrightarrow\mathcal U(\mathfrak l_n),
\qquad
\mathcal U(\mathfrak p_n)/\mathfrak m_n^+\mathcal U(\mathfrak p_n)
\simeq\mathcal U(\mathfrak l_n).
$$
Under the composition $\eta\circ i$, the image of the generating matrix is
\begin{equation*}
(\eta\circ i)\bigl(\mathbf{T}(u)\bigr)
=(u+\mathbf E_n)(u-p+\mathbf E_{n-1})\cdots
(u-(n-1)p+\mathbf E_1).
\end{equation*}
To remove these scalar shifts, we apply the automorphism
$
E_{a,a}^{i,j}\mapsto
E_{a,a}^{i,j}+(n-a)p\delta_{i,j}.$
We then use $\kappa$ to identify the enveloping algebra of the
block-diagonal subalgebra with $\mathcal U(\mathfrak{gl}_p)^{\otimes n}$.
Composing these maps defines the quantum Miura homomorphism
\begin{equation*}
\Theta:
\mathcal W_n(\mathfrak{gl}_p)
\longrightarrow\mathcal U(\mathfrak{gl}_p)^{\otimes n},
\end{equation*}
which acts on the generating matrix as follows:
\begin{align}\label{eq:Miura_map}
\Theta(\mathbf{T}(u))
=(u+E_{[n]})(u+ E_{[n-1]})\cdots(u+ E_{[1]}).
\end{align}This is the coproduct-evaluation
generating matrix associated with
\eqref{eq:quantum_yangian_generating_matrix}. The quantum Miura
homomorphism is summarized by the diagram
\begin{equation}\label{eq:quantum_Miura_diagram}
\begin{tikzcd}
{\mathcal W_n(\mathfrak{gl}_p)}
    \arrow[r,hook,"i"]
    \arrow[d,"\Theta"']
&{\mathcal U(\mathfrak p_n)}
    \arrow[d]\\
{Y^{(n)}(\mathfrak{gl}_p)}
    \arrow[r,hook]
&{\mathcal U(\mathfrak{gl}_p)}^{\otimes n}.
\end{tikzcd}
\end{equation}

\begin{theorem}[Brundan--Kleshchev \cite{BK}] \label{thm:truncated_yangian_w_alg}
The homomorphism $\Theta$ identifies $\mathcal W_n(\mathfrak{gl}_p)$ with
the level-$n$ truncated Yangian $Y^{(n)}(\mathfrak{gl}_p)$ generated by the
coefficients of \eqref{eq:quantum_yangian_generating_matrix}.
\end{theorem}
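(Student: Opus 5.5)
The proof has three parts. First, $\Theta$ is a well-defined homomorphism onto the algebra generated by the coefficients of \eqref{eq:quantum_yangian_generating_matrix}. Second, those coefficients satisfy the truncated Yangian relations, so Proposition~\ref{th:Cherednik_prp} identifies that image with $Y^{(n)}(\mathfrak{gl}_p)$. Third, $\Theta$ is injective.

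For the homomorphism property, $\Theta$ is the composite of $i$, $\eta$, the shift automorphism and $\kappa$. The map $i$ is an algebra embedding. The map $\eta$ is an algebra homomorphism because $\mathfrak m_n^+\mathcal U(\mathfrak p_n)$ is a two-sided ideal: $\mathfrak m_n^+$ is an ideal of the parabolic $\mathfrak p_n$. The shift is an automorphism of $\mathcal U(\mathfrak l_n)$, since it adds central scalars in each $\mathfrak{gl}_p$-block, and $\kappa$ is an isomorphism. Formula \eqref{eq:Miura_map} then follows from the row-determinant expansion of $L^{\mathrm{red}}(u)$. Every term of $\operatorname{rdet}$ other than the diagonal one contains a factor $\mathbf E_{a,b}$ with $a>b$. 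After moving such factors to the right, each of them lies in $\mathfrak m_n^+\mathcal U(\mathfrak p_n)$ under the PBW identification $Q_\psi\simeq\mathcal U(\mathfrak p_n)$; note that the transposed convention $\mathbf E_{a,b}\ni E^{j,i}_{b,a}$ makes $a>b$ correspond to $\mathfrak m_n^+$. This reordering is the delicate point, because $\eta$ is only a homomorphism on $\mathcal U(\mathfrak p_n)$, and I would use the ordering built into $\operatorname{rdet}$ to control it.

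For surjectivity onto $Y^{(n)}(\mathfrak{gl}_p)$, the coefficients of $\mathbf T(u)$ generate $\mathcal W_n(\mathfrak{gl}_p)$ by the preceding Brundan--Kleshchev proposition. Their images are the coefficients of $(u+E_{[n]})\cdots(u+E_{[1]})$. This is the tensor-product realization of Proposition~\ref{th:Cherednik_prp}, with the tensor factors relabelled $a\mapsto n+1-a$ and the transpose $E_{[a]}$ versus $\sum e_{i,j}\otimes E_{j,i}$ accounted for. Hence the image of $\Theta$ is exactly the injective image of $Y^{(n)}(\mathfrak{gl}_p)$.

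Injectivity is the main obstacle. I would argue by associated graded algebras. Filter $\mathcal U(\mathfrak{gl}_N)$ by the Kazhdan filtration attached to the good grading whose degree on block $(a,b)$ is $b-a$, shifted by $2$. With respect to this filtration, $\operatorname{gr}\mathcal W_n(\mathfrak{gl}_p)\cong\mathbb C[\chi+\mathfrak m_n^\perp]^{M}$. That ring is a polynomial ring on $np^2=\dim\mathfrak g^e$ generators, given by a Slodowy slice argument. On the other side, $\operatorname{gr}Y^{(n)}(\mathfrak{gl}_p)$ is polynomial in the $t^{(r)}_{i,j}$, $1-n\le r\le0$, which is again $np^2$ generators. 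The top-degree parts of the $np^2$ coefficients of $\Theta(\mathbf T(u))$ are algebraically independent elementary symmetric combinations in $S(\mathfrak l_n)$. So $\operatorname{gr}\Theta$ is injective, and therefore $\Theta$ is injective.

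Combining the three parts, $\Theta$ is an isomorphism $\mathcal W_n(\mathfrak{gl}_p)\xrightarrow{\sim}Y^{(n)}(\mathfrak{gl}_p)$, compatible with the diagram \eqref{eq:quantum_Miura_diagram}.
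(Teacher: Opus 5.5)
The paper gives no proof of this theorem; it is imported from Brundan--Kleshchev, so your sketch is necessarily a different route. It is sound in outline. However, the step you call the main obstacle is formal given results the paper does state.

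By Corollary~\ref{cor:bk_rtt}, $\mathbf T(u)$ satisfies the RTT relation in $\mathcal W_n(\mathfrak{gl}_p)$, and $\mathbf T(u)=u^n\mathbf 1_p+\cdots$ is a polynomial. So $\mathrm T(u)\mapsto\mathbf T(u)$ defines a homomorphism $\beta\colon Y^{(n)}(\mathfrak{gl}_p)\to\mathcal W_n(\mathfrak{gl}_p)$, which is surjective by the generation statement. By \eqref{eq:Miura_map}, $\Theta\circ\beta$ is Cherednik's embedding from Proposition~\ref{th:Cherednik_prp} followed by the reversal $a\mapsto n+1-a$ of tensor factors, hence injective. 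Therefore $\beta$ is bijective, and $\Theta=(\Theta\circ\beta)\circ\beta^{-1}$ is injective with the required image. No associated graded algebras are needed.

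What your route buys is independence from the RTT relation inside $\mathcal W_n(\mathfrak{gl}_p)$. Hence it does not depend on the quasideterminant and localization arguments of Sections~\ref{sec:quasideterminants}--\ref{sec:W_algebras}; that relation is recovered afterwards by pulling back along $\Theta$. It also gives a direct proof that the Miura map is injective. The price is the Gan--Ginzburg theorem $\operatorname{gr}\mathcal W_n(\mathfrak{gl}_p)\cong\mathbb C[\chi+\mathfrak m_n^\perp]^M$.

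If you keep the graded argument, several points need fixing.

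\emph{Filtration normalization.} Your filtration is off by a factor of two. With block degree $b-a$ shifted by $2$, the constraints $E^{i,i}_{b+1,b}-1$ are not homogeneous, so the graded picture loses the shift by $\chi$ and the Gan--Ginzburg theorem does not apply. Use $\deg E^{i,j}_{a,b}=1+b-a$, as in the paper's Rees-localization remark.

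\emph{Degree matching.} With $\deg u=1$, each term of $\operatorname{rdet}L^{\mathrm{red}}(u)$ has filtered degree at most $\sum_k(1+k-\sigma(k))=n$. So the coefficient of $u^{n-k}$ has Kazhdan degree at most $k$, while its image under $\Theta$ has a nonzero top PBW component of degree exactly $k$. Only this guarantees that $\operatorname{gr}\Theta$ sends its symbol to that component rather than to zero.

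\emph{Injectivity of $\operatorname{gr}\Theta$.} This follows from a transcendence-degree count, not from the PBW theorem for $Y^{(n)}(\mathfrak{gl}_p)$, which you can drop. The kernel is a prime $P$ (the target is a domain) with $\operatorname{trdeg}\bigl(\operatorname{gr}\mathcal W_n(\mathfrak{gl}_p)/P\bigr)\ge np^2$. This forces $P=0$, because $\operatorname{gr}\mathcal W_n(\mathfrak{gl}_p)$ is a polynomial ring in $np^2$ variables.

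\emph{Algebraic independence.} The independence you assert can be checked via the Jacobian at $X_a=c_a\mathbf 1_p$ with distinct $c_a$. There the differential $(Y_a)_a\mapsto\sum_aY_a\prod_{b\ne a}(u+c_b)$ is bijective by Lagrange interpolation.

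\emph{Part one.} The reordering you flag is not delicate. All entries of $L^{\mathrm{red}}(u)$ lie in $\mathcal U(\mathfrak p_n)[u]$, and $\eta$ is an algebra homomorphism vanishing on $\mathfrak m_n^+$. So every non-identity term is killed wherever its $\mathfrak m_n^+$ factor sits. Likewise, no transpose needs correcting, since $u+E_{[a]}$ is literally $L_{[a]}(u)$.
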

\subsection{Reduction of the quasideterminant}
\label{sec:localization_independence}
Assume that we have constructed a ring $\mathcal{A}$ in which the
quasideterminant is defined, or in which its rescaled version is defined,
and consider any such compatible localization. We now give a general
argument showing that the reduction does not depend on the particular
choice of localization.
Denote by $\mathcal{I}_\psi$ the left ideal generated by the reduction
constraints $m-\psi(m)$, for $m\in\mathfrak{m}_n$. The quotient map restricts to the homomorphism of algebras of invariants $\mathcal{A}^{\mathfrak{m}_n}\rightarrow (\mathcal{A}/\mathcal{I}_\psi)^{\mathfrak{m}_n}.$ 

As shown above, the matrix entries of the quasideterminant belong to
$\mathcal{A}^{\mathfrak{m}_n}$. Thus, it follows that the reduction
$T(u)\overline{1}$ remains invariant and satisfies the RTT relation. Let us
identify the reduced quasideterminant with the Brundan--Kleshchev generating matrix.
\begin{theorem}\label{th:quasidet_and_BK}
The reduction of the quasideterminant coincides with the 
Brundan--Kleshchev generating matrix:
\begin{equation}\label{eq:quasidet_BK_polynomial}
(-1)^{n-1}T(u)\overline1=(-1)^{n-1}|L^{\mathrm{red}}(u)|=\mathbf T(u).
\end{equation}
\end{theorem}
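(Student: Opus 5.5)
The plan is to compute $T(u)\overline1$ directly from the definition \eqref{eq:corner_quasideterminant}. All generators belonging to $\mathfrak m_n$ will be pushed to the right, where they are replaced by their $\psi$-values. The commutators created along the way should produce exactly the scalar shifts $-p(n-a)$ on the diagonal blocks of $L^{\mathrm{red}}(u)$. The second equality in \eqref{eq:quasidet_BK_polynomial} is then Proposition~\ref{prp:nice_formulas_for_quasidet}(1), applied to $L^{\mathrm{red}}(u)$. That matrix has identity blocks on the first upper block diagonal and zeros above it, so $|L^{\mathrm{red}}(u)|_{n,1}=(-1)^{1-n}\operatorname{rdet}L^{\mathrm{red}}(u)=(-1)^{n-1}\mathbf T(u)$. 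The real content is therefore the first equality, $T(u)\overline1=|L^{\mathrm{red}}(u)|_{n,1}$.

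First I would record where each block lives. For $a<b$, the block $\mathbf E_{a,b}$ has entries $E^{j,i}_{b,a}\in\mathfrak m_n$, with $\psi(\mathbf E_{a,a+1})=\mathbf1_p$ and $\psi(\mathbf E_{a,b})=0$ for $b>a+1$. For $a\geq b$ the entries lie in $\mathfrak p_n$. The deleted submatrix $S(u)=(\mathbf L_{a,b}(u))_{a\leq n-1,\,b\geq 2}$ is block upper Hessenberg: its first subdiagonal consists of the diagonal blocks $u+\mathbf E_{a+1,a+1}$, its block diagonal consists of $\mathbf E_{a,a+1}$, and it has $\mathfrak p_n$-blocks below and $\mathfrak m_n$-blocks above. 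Modulo $I_\psi$ its top-right part becomes unitriangular, so $S(u)^{-1}$ has a well-defined image. Working in the localization $\mathcal A$ of \S\ref{sec:localization_independence} justifies manipulating it.

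To organize the computation I would use Proposition~\ref{prp:invariance_property}. Multiply $L(u)$ on the left by a block matrix $X$ and on the right by a block matrix $Y$, chosen recursively column by column as in Gaussian elimination, so that $XL(u)Y$ has the shape \eqref{quarow}. Its first upper block diagonal should consist of $\mathbf1_p$ blocks with zeros above, and the entries of $XL(u)Y$ should be written with all $\mathfrak m_n$-letters to the right. The corner quasideterminant is unchanged by these multiplications, and Proposition~\ref{prp:inverse_block_commutation} guarantees that its entries stay $\mathfrak m_n$-invariant throughout. Applying $\overline1$ then kills every $\mathfrak m_n$-letter except through $\psi$. As an alternative, one can induct on $n$: use the heredity (Sylvester) property of quasideterminants to write $|L(u)|_{n,1}$ as a $2\times2$ quasideterminant of $(n-1)$-block quasideterminants, and reduce the $n=2$ case by hand. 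In that case $T(u)=\mathbf L_{2,1}-\mathbf L_{2,2}\mathbf E_{1,2}^{-1}\mathbf L_{1,1}$, and the claim is that $\overline1$ turns it into $\mathbf E_{2,1}-(u+\mathbf E_{2,2})(u-p+\mathbf E_{1,1})$. The shift $-p$ arises from moving $\mathbf E_{1,2}$ past $\mathbf L_{1,1}$: the commutator $[E^{j,i}_{2,1},E^{k,l}_{1,1}]$, summed over the internal index, produces the trace term $p\,\delta$.

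The main obstacle is this bookkeeping of shifts. The reordering is carried out in matrix form, so each commutation $[\mathbf E_{a,b},\mathbf L_{c,d}]$ with $a<b$ yields, by \eqref{eq:generator_commutation_relations}, a term $\mathbf L_{c,b}$-type (a matrix product, which is harmless and absorbed into the elimination) plus a term of the form $\delta\cdot\operatorname{tr}$. The second term gives $p$ times a block, and I must verify that these accumulate to exactly $-p(n-a)$ on the $a$-th diagonal block. I would attempt this by proving an intermediate lemma by downward induction on $a$: after eliminating columns $a+1,\dots,n$, the $a$-th diagonal entry reduces to $u-p(n-a)+\mathbf E_{a,a}$ and the lower blocks remain $\mathbf E_{c,a}$. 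Each elimination step passes one further $\mathbf E_{a,a+1}$-type factor through the $a$-th column and contributes one $-p$. Finally, the independence of the choice of localization in \S\ref{sec:localization_independence} ensures that the resulting matrix identity holds in $Q_\psi[u]$.
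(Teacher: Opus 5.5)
Your reduction of the second equality to \eqref{quarow} and your $n=2$ case are correct. For general $n$, however, the theorem is not proved. The lemma that the $a$-th diagonal block becomes $u-p(n-a)+\mathbf E_{a,a}$ is the whole content of the first equality, and you only announce it.

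The mechanism you sketch for that lemma is also not valid as stated. After the multiplications of Proposition~\ref{prp:invariance_property}, nothing makes the entries of $XL(u)Y$ $\mathfrak m_n$-invariant. Proposition~\ref{prp:inverse_block_commutation} gives invariance only of the quasideterminant itself. So you cannot normal-order each entry with its $\mathfrak m_n$-letters on the right and then substitute $\psi$ entrywise. The quasideterminant (or $\operatorname{rdet}$) multiplies and inverts these entries, and an $\mathfrak m_n$-letter at the right end of one factor still has to be commuted through every factor to its right before it reaches $\overline1$. Entrywise reduction is legitimate only for invariant entries. Conjugation by $z$ provides such entries in Section~\ref{sec:differential_reduction}, but nothing comparable is available in $\mathcal U(\mathfrak{gl}_N)$. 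Indeed, reducing the deleted submatrix entrywise gives a lower unitriangular matrix with unshifted blocks $\mathbf L_{a,a}(u)$, which is the wrong answer; the shifts measure exactly this failure.

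For the same reason, your claim that the $\mathbf L_{c,b}$-type commutator terms are harmless and absorbed into the elimination is unsupported. Write $\mathbf G(u)=\mathbf A(u)^{-1}$ for the inverse of the deleted submatrix. Commuting an $\mathfrak m_n$-block past $\mathbf G(u)$ produces, after contraction, both the $-p$ shift and a term $\operatorname{tr}\bigl(\mathbf G_{b,b+1}(u)\bigr)\mathbf L_{a,1}(u)$. That term cancels only against the corresponding term from commuting past the first column.

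The missing idea is to reduce a column, never a product. Set $\mathbf q=\mathbf A^{-1}\mathbf c$. Then $T(u)\overline1=\mathbf L_{n,1}(u)\overline1-\sum_b\mathbf L_{n,b+1}(u)\bigl(\mathbf q_b(u)\overline1\bigr)$. Since the last block row has entries in $\mathcal U(\mathfrak p_n)[u]$, no further reordering is needed once $\mathbf q(u)\overline1$ is known. The paper finds $\mathbf q(u)\overline1$ from $\mathbf A\mathbf q=\mathbf c$ using the exact identity
\[
\sum_{j=1}^p\bigl[L^{i,j}_{a,b+1}(u),q^{j,k}_b(u)\bigr]=-p\,q^{i,k}_{a-1}(u)+p\,\delta_{a,1}\delta_{i,k},\qquad \mathbf q_0=0.
\]
This comes from the commutator of a generator with the entries of $\mathbf A^{-1}$, together with the cancellation of the two trace terms above. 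Its right-hand side does not depend on $b$. Hence each of the $n-a$ blocks $\mathbf L_{a,b+1}$ with $b\geq a$ in row $a$ contributes one $-p$, including the blocks with $b>a$, on which $\psi$ vanishes. The row-$a$ equation then becomes a lower unitriangular recurrence: $\mathbf L_{a,a}-p(n-a)$ acts on $\mathbf q_{a-1}$, and $\mathbf L_{1,1}-p(n-1)$ appears in the first entry of the right-hand side. This gives exactly $T(u)\overline1=|L^{\mathrm{red}}(u)|_{n,1}$. Your picture of one $\mathbf E_{a,a+1}$-type factor per elimination step should be checked against this count, since the zero-character blocks contribute too.

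The Sylvester-type induction does not remove the difficulty by itself. The shifts for $n$ blocks differ from those of the $(n-1)$-block sub-quasideterminants: row $1$ needs $-p(n-1)$, not $-p(n-2)$. So the extra shift must again come from commutators you have not computed.
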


To illustrate the proof strategy, consider the case $n=3$. Under reduction, the matrix entries satisfy $L_{1,2}^{i,j}(u)\overline{1}=L_{2,3}^{i,j}(u)\overline{1}=\delta_{i,j}$ and $L_{1,3}^{i,j}(u)\overline{1}=0.$
Let $$\mathbf{q}(u)=
\begin{pmatrix}
\mathbf L_{1,2}(u)&\mathbf L_{1,3}(u)\\
\mathbf L_{2,2}(u)&\mathbf L_{2,3}(u)
\end{pmatrix}^{-1}\begin{pmatrix}
\mathbf L_{1,1}(u)\\
\mathbf L_{2,1}(u)
\end{pmatrix}=\begin{pmatrix}
    \mathbf{q}_1(u)\\\mathbf{q}_2(u)
\end{pmatrix},$$
and we formally set $\mathbf{q}_0(u)=0.$ The quasideterminant is then given by $$|L(u)|_{3,1}=\mathbf{L}_{3,1}(u)-\begin{pmatrix}
    \mathbf{L}_{3,3}(u)&\mathbf{L}_{3,2}(u)
\end{pmatrix} \begin{pmatrix}
    \mathbf{q}_1(u)\\\mathbf{q}_2(u)
\end{pmatrix}.$$
We denote the entries of $\mathbf{q}_b(u)$ by $q_{b}^{j,k}(u)$.
The following commutator identities hold: $$\sum_{j=1}^p[L_{1,2}^{i,j}(u),q_1^{j,k}(u)]=p\delta_{i,k},\quad \sum_{j=1}^p[L_{1,3}^{i,j}(u),q_2^{j,k}(u)]=p\delta_{i,k}, $$
and $$\sum_{j=1}^p[L_{2,3}^{i,j}(u),q_2^{j,k}(u)]=-pq_1^{i,k}(u).$$
We can combine these into a single formula: $$\sum_{j=1}^p
\bigl[L_{a,b+1}^{i,j}(u),q_b^{j,k}(u)\bigr]
=-pq_{a-1}^{i,k}(u)+p\delta_{a,1}\delta_{i,k}.$$
Next, consider the matrix identity $$\begin{pmatrix}
\mathbf L_{1,2}(u)&\mathbf L_{1,3}(u)\\
\mathbf L_{2,2}(u)&\mathbf L_{2,3}(u)
\end{pmatrix}\begin{pmatrix}
    \mathbf{q}_1(u)\\\mathbf{q}_2(u)
\end{pmatrix}=\begin{pmatrix}
\mathbf L_{1,1}(u)\\
\mathbf L_{2,1}(u)
\end{pmatrix}.$$
Writing this identity in components and reordering factors as needed, we obtain the following equivalent relations:\begin{align*}2p\delta_{i,k}+\sum_{j=1}^p q_{1}^{j,k}(u)L_{1,2}^{i,j}(u)+q_{2}^{j,k}(u)L_{1,3}^{i,j}(u)=L_{1,1}^{i,k}(u),\\ -pq_1^{i,k}+ \sum_{j=1}^p L_{2,2}^{i,j}(u)q_{1}^{j,k}(u)+q_{2}^{j,k}(u)L_{2,3}^{i,j}(u)=L_{2,1}^{i,k}(u).
\end{align*}
Applying the quotient map gives:
\begin{align*} q_{1}^{i,k}(u)\overline{1}=L_{1,1}^{i,k}(u)-2p\delta_{i,k},\\q_{2}^{i,k}(u)\overline{1} +\sum_{j=1}^p (L_{2,2}^{i,j}(u)-p\delta_{i,j})(u)q_{1}^{j,k}(u)\overline{1}=L_{2,1}^{i,k}(u).
\end{align*}
Solving for $\mathbf{q}_1(u)$ and $\mathbf{q}_2(u)$ gives: $$ \begin{pmatrix}
    \mathbf{q}_1(u)\overline{1}\\\mathbf{q}_2(u)\overline{1}
\end{pmatrix}=\begin{pmatrix}
\mathbf 1_p&0\\
\mathbf L_{2,2}(u)-p&\mathbf 1_p
\end{pmatrix}^{-1}\begin{pmatrix}
\mathbf L_{1,1}(u)\\
\mathbf L_{2,1}(u)
\end{pmatrix}.$$
Substituting this expression into the quasideterminant formula yields $$|L(u)|_{3,1}\overline{1}=\left|\begin{pmatrix}\mathbf{L}_{1,1}(u)-2p&\mathbf 1_p&0\\\mathbf{L}_{2,1}(u)&
\mathbf L_{2,2}(u)-p&\mathbf 1_p\\
\mathbf{L}_{3,1}(u)&\mathbf{L}_{3,2}(u)&\mathbf{L}_{3,3}(u)
\end{pmatrix}\right|_{3,1}=\mathrm{rdet}\begin{pmatrix}\mathbf{L}_{1,1}(u)-2p&\mathbf 1_p&0\\\mathbf{L}_{2,1}(u)&
\mathbf L_{2,2}(u)-p&\mathbf 1_p\\
\mathbf{L}_{3,1}(u)&\mathbf{L}_{3,2}(u)&\mathbf{L}_{3,3}(u)
\end{pmatrix}.$$
The final equality identifies this reduced quasideterminant with the Brundan--Kleshchev generating matrix.
\begin{proof}
Set
$$
\mathbf A(u)=
\begin{pmatrix}
\mathbf L_{1,2}(u)&\cdots&\mathbf L_{1,n}(u)\\
\vdots&\ddots&\vdots\\
\mathbf L_{n-1,2}(u)&\cdots&\mathbf L_{n-1,n}(u)
\end{pmatrix},
\qquad
\mathbf c(u)=
\begin{pmatrix}
\mathbf L_{1,1}(u)\\
\vdots\\
\mathbf L_{n-1,1}(u)
\end{pmatrix}.
$$
Write
$
\mathbf G_{a,b}(u)$ for blocks of 
$\mathbf A(u)^{-1}$.

Instead of reducing the whole inverse matrix, we directly reduce the column:
\begin{equation}\label{eq:quasidet_auxiliary_column}
\mathbf q(u)
=\mathbf A(u)^{-1}\mathbf c(u),
\qquad
\mathbf q(u)=
\begin{pmatrix}\mathbf q_1(u)\\ \vdots\\ \mathbf q_{n-1}(u)
\end{pmatrix}.
\end{equation}
Thus
\begin{equation}\label{eq:quasidet_via_auxiliary_column}
T(u)\overline1
=\left(
\mathbf L_{n,1}(u)
-\sum_{b=1}^{n-1}\mathbf L_{n,b+1}(u)\mathbf q_b(u)
\right)\overline1.
\end{equation}
Let us establish the reduction step by step.
\begin{itemize}
    \item \textbf{Commutation relations}. Put $\mathbf q_0(u)=0$. For
$1\leq a\leq n-1$ and $a\leq b\leq n-1$, one has
\begin{equation}\label{eq:key_q_commutator}
\sum_{j=1}^p
\bigl[L_{a,b+1}^{i,j}(u),q_b^{j,k}(u)\bigr]
=-pq_{a-1}^{i,k}(u)+p\delta_{a,1}\delta_{i,k}.
\end{equation}
To verify this identity, write
$$
q_b^{j,k}(u)
=\sum_{d=1}^{n-1}\sum_{t=1}^p
G_{b,d}^{j,t}(u)L_{d,1}^{t,k}(u).
$$
In components, the inverse-matrix commutator formula reads
\begin{align*}
\bigl[L_{a,b+1}^{i,j}(u),G_{r,d}^{s,t}(u)\bigr]
&=\delta_{a,d}\delta_{t,i}G_{r,b+1}^{s,j}(u)
-\delta_{r,b}\delta_{s,j}G_{a-1,d}^{i,t}(u),
\end{align*}
with the convention that blocks of $\mathbf G$ outside the range
$1,\dots,n-1$ are zero. Its contribution to
\eqref{eq:key_q_commutator}, after summing over $j,d,t$, is
$$
\operatorname{tr}\bigl(\mathbf G_{b,b+1}(u)\bigr)
L_{a,1}^{i,k}(u)-pq_{a-1}^{i,k}(u),
$$
where the first term is absent when $b=n-1$. On the other hand,
\eqref{eq:generator_commutation_relations} gives
\begin{align*}
\bigl[L_{a,b+1}^{i,j}(u),L_{d,1}^{t,k}(u)\bigr]
&=\delta_{a,1}\delta_{i,k}L_{d,b+1}^{t,j}(u)
-\delta_{d,b+1}\delta_{t,j}L_{a,1}^{i,k}(u).
\end{align*}
Using the inverse relation
$\sum_d G_{b,d}^{i,k}(u) L_{d,b+1}^{k,j}(u)=\delta^{i,j}$, the contribution
from this commutator is
$$
p\delta_{a,1}\delta_{i,k}
-\operatorname{tr}\bigl(\mathbf G_{b,b+1}(u)\bigr)
L_{a,1}^{i,k}(u).
$$
The trace terms cancel, proving \eqref{eq:key_q_commutator}. 
\item \textbf{Recurrence}.
The defining equation
$\mathbf A(u)\mathbf q(u)=\mathbf c(u)$ reads componentwise
\begin{equation}\label{eq:q_defining_equation}
\sum_{b=1}^{n-1}\sum_{j=1}^p
L_{a,b+1}^{i,j}(u)q_b^{j,k}(u)
=L_{a,1}^{i,k}(u).
\end{equation}
For $b\geq a$, move $L_{a,b+1}^{i,j}$ to the right before applying the
result to $\overline1$.  The reduction relations give
$$
L_{a,b+1}^{i,j}(u)\overline1
=\delta_{a,b}\delta_{i,j}\overline1,
\qquad b\geq a,
$$
while \eqref{eq:key_q_commutator} gives the correction term. We now split
the sum and commute the generators:
\begin{align*}
    \sum_{b=1}^{n-1}\sum_{j=1}^p
L_{a,b+1}^{i,j}(u)q_b^{j,k}(u)\overline{1}=\sum_{j=1}^p\left[\sum_{b= a}^{n-1}
L_{a,b+1}^{i,j}(u)q_b^{j,k}(u)+ \sum_{b=1}^{a-2}
L_{a,b+1}^{i,j}(u)q_b^{j,k}(u) +L_{a,a}^{i,j}(u)q_{a-1}^{j,k}(u)\right]\overline{1}\\
=\sum_{j=1}^p\sum_{b= a}^{n-1}
\delta_{a,b}\delta^{i,j}q_b^{j,k}(u)\overline{1}+ \sum_{j=1}^p\bigg(\sum_{b=1}^{a-2}
L_{a,b+1}^{i,j}(u)q_b^{j,k}(u) +L_{a,a}^{i,j}(u)q_{a-1}^{j,k}(u)\bigg)\overline{1}- \sum_{b=a}^{n-1}p(q_{a-1}^{i,k}-\delta_{a,1}\delta^{i,k})\overline{1}.\end{align*}
Next, we sum over $j$ and $b$ in the first and third terms. Thus the
defining relation \eqref{eq:q_defining_equation} becomes
\begin{align*} 
q_a^{i,k}\overline{1}+\sum_{j=1}^p\bigg(\sum_{b=1}^{a-2}
L_{a,b+1}^{i,j}(u)q_b^{j,k}(u)
+\bigl(L_{a,a}^{i,j}(u)-p(n-a)\delta^{i,j}\bigr)
q_{a-1}^{j,k}(u)\bigg)\overline{1}
+p(n-1)\delta_{a,1}\delta^{i,k}\overline1
=L_{a,1}^{i,k}\overline{1}.
\end{align*}
We can rewrite this as a matrix recurrence:
\begin{align}\label{eq:q_recurrence_expanded}
\mathbf q_a(u)\overline1
&=\Biggl(
\mathbf L_{a,1}(u)
-\sum_{b=1}^{a-2}\mathbf L_{a,b+1}(u)\mathbf q_b(u)\notag\\
&\hspace{22mm}
-\bigl(\mathbf L_{a,a}(u)-p(n-a)\mathbf1_p\bigr)
\mathbf q_{a-1}(u)
-p(n-1)\delta_{a,1}\mathbf1_p
\Biggr)\overline1.
\end{align}
Here the sum and the term containing $\mathbf q_{a-1}$ are omitted when $a=1$.
\item \textbf{Identification with the row determinant}. The recurrence
\eqref{eq:q_recurrence_expanded} is equivalent to the following matrix
relation: $$\begin{pmatrix}\mathbf{1}_p&0&...&0\\\mathbf{L}_{2,2}(u)-(n-2)p&\mathbf{1}_p&\ddots&\vdots\\
\vdots&\ddots&\ddots&0\\ 
\mathbf{L}_{n-1,2}(u),&...&\mathbf{L}_{n-1,n-1}(u)-p&\mathbf{1}_p
\end{pmatrix}\begin{pmatrix}
    \mathbf{q}_1(u)\overline{1}\\
    \vdots
\\ \vdots\\ \mathbf{q}_{n-1}(u)\overline{1}
\end{pmatrix}=\begin{pmatrix}
    \mathbf{L}_{1,1}(u)-(n-1)p\\
   \mathbf{L}_{2,1}(u)
\\ \vdots\\\mathbf{L}_{n-1,1}(u)
\end{pmatrix},$$
which uniquely determines the class $$\begin{pmatrix}
    \mathbf{q}_1(u)\overline{1}\\
    \vdots
\\ \vdots\\ \mathbf{q}_{n-1}(u)\overline{1}
\end{pmatrix}=\begin{pmatrix}\mathbf{1}_p&0&...&0\\\mathbf{L}_{2,2}(u)-(n-2)p&\mathbf{1}_p&\ddots&\vdots\\
\vdots&\ddots&\ddots&0\\ 
\mathbf{L}_{n-1,2}(u),&...&\mathbf{L}_{n-1,n-1}(u)-p&\mathbf{1}_p
\end{pmatrix}^{-1}\begin{pmatrix}
    \mathbf{L}_{1,1}(u)-(n-1)p\\
   \mathbf{L}_{2,1}(u)
\\ \vdots\\\mathbf{L}_{n-1,1}(u)
\end{pmatrix}.$$
Substituting this expression into \eqref{eq:quasidet_auxiliary_column}
shows that the reduced quasideterminant is equal to the quasideterminant of
$L^{\mathrm{red}}(u)$:
$$T(u)\cdot \overline{1}=|L^{\mathrm{red}}(u)|_{n,1}=(-1)^{n-1}\mathrm{rdet}(L^{\mathrm{red}}(u)),$$
where the last equality follows from \eqref{quarow}.
\end{itemize}
\end{proof}
\begin{corollary}
\label{cor:bk_rtt} 
\begin{enumerate}
\item \label{prp:quotient_map_hom}
The quotient map restricts to an algebra homomorphism on the invariant
subalgebra:
$$\mathcal{A}^{\mathfrak{m}_n}\rightarrow (\mathcal{A}/\mathcal{I}_\psi)^{\mathfrak{m}_n}.$$
\item
The Brundan--Kleshchev matrix $\mathbf T(u)$ satisfies the RTT relation
with the Yang matrix $R(u)$ of size $p^2\times p^2$:
    $$R(u-v)\mathbf{T}_1(u)\mathbf{T}_2(v)=\mathbf{T}_2(v)\mathbf{T}_1(u)R(u-v).$$
    \end{enumerate}
\end{corollary}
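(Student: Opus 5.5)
For part~\ref{prp:quotient_map_hom}, the plan is to show that the kernel of the restricted quotient map is a two-sided ideal of $\mathcal A^{\mathfrak m_n}$, and that the product of classes of invariants is well defined. Write $\pi:\mathcal A\to\mathcal A/\mathcal I_\psi$ for the quotient of left $\mathcal A$-modules. The map $\pi$ is already $\mathcal A$-linear on the left, so $\pi(fg)=f\,\pi(g)$ holds for free. The real content is that for $f\in\mathcal A^{\mathfrak m_n}$ we have $\mathcal I_\psi f\subset\mathcal I_\psi$. Since $\mathcal I_\psi$ is generated as a left ideal by the elements $m-\psi(m)$, it suffices to show $(m-\psi(m))f\in\mathcal I_\psi$. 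I would write
$$(m-\psi(m))f=[m,f]+f\,(m-\psi(m)).$$
The second term lies in $\mathcal I_\psi$ by definition. The first term vanishes because $f$ is invariant, meaning $[m,f]=0$ in $\mathcal A$, which is the sense in which the quasideterminant entries were shown invariant in Proposition~\ref{prp:inverse_block_commutation}. Consequently $\pi(fg)$ depends only on $\pi(f)$ and $\pi(g)$ for invariants $f,g$, and $\pi(f)\pi(g):=\pi(fg)$ defines the multiplication. It then remains to check that $\pi(\mathcal A^{\mathfrak m_n})$ is contained in the $\mathfrak m_n$-invariants of the quotient. This is immediate: $[x,f]=0$ in $\mathcal A$ implies $[x,f]\overline1=0$. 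The only subtle point is that the adjoint action of $\mathfrak m_n$ on $\mathcal A/\mathcal I_\psi$ is well defined. This holds because $[x,\mathcal I_\psi]\subset\mathcal I_\psi$, since $[x,m-\psi(m)]=[x,m]\in\mathfrak m_n$ and $\psi$ vanishes on $[\mathfrak m_n,\mathfrak m_n]$.

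For the second part, I would start from Corollary~\ref{cor:corner_quasideterminant_RTT}, which asserts the RTT relation $R(u-v)T_1(u)T_2(v)=T_2(v)T_1(u)R(u-v)$ in $\mathcal A$. This is an identity among coefficients of the series in $u,v$. Each such coefficient is a finite sum of products of entries of $T$, and these entries are invariant by \eqref{LT}, because $\mathfrak m_n$ is spanned by the entries of the blocks $\mathbf L_{a,b}$ with $a<b$ in the transposed indexing of $E$. By part~1 the quotient map is multiplicative on such products, so the relation descends to an identity for $T(u)\overline1$ in $(\mathcal A/\mathcal I_\psi)^{\mathfrak m_n}$. Theorem~\ref{th:quasidet_and_BK} identifies $T(u)\overline1$ with $(-1)^{n-1}\mathbf T(u)$. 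The RTT relation is homogeneous of degree two in $T$, so the sign $(-1)^{2(n-1)}=1$ drops out. Finally, the entries of $\mathbf T(u)$ lie in $Q_\psi[u]\subset\mathcal A/\mathcal I_\psi$, so the relation holds in $\mathcal W_n(\mathfrak{gl}_p)[u,v]$.

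I expect the main obstacle to be making part~1 rigorous for the localization $\mathcal A$ rather than for $\mathcal U(\mathfrak{gl}_N)$ itself. There are two issues: it must be clear that $\mathcal I_\psi\cap\mathcal U(\mathfrak{gl}_N)=I_\psi$, so that nothing collapses, and the coefficients of formal series in $u^{-1}$ must be handled. My first approach would be to work coefficientwise in $u^{-1}$. Each coefficient of $T(u)$ is then a genuine element of the relevant algebra, and the argument above applies to it verbatim. Compatibility with $Q_\psi$ is already guaranteed by the reduction computation in Theorem~\ref{th:quasidet_and_BK}, which produces polynomial entries.
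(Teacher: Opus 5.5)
Your proposal is correct and follows the paper's proof. For part~1 you give the standard well-definedness argument for the product $x\overline1\cdot y\overline1=xy\overline1$ on invariants, and you write out the identity $(m-\psi(m))f=[m,f]+f\,(m-\psi(m))$ that the paper leaves implicit. For part~2 you push the RTT relation of Corollary~\ref{cor:corner_quasideterminant_RTT} through this homomorphism and apply Theorem~\ref{th:quasidet_and_BK}, where the scalar $(-1)^{n-1}$ drops out. The localization issue you flag, $\mathcal I_\psi\cap\mathcal U(\mathfrak{gl}_N)=I_\psi$, is likewise left implicit in the paper, which simply assumes a compatible localization $\mathcal A$.
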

\begin{proof} \begin{itemize}
    \item 
By definition, multiplication in $(\mathcal{A}/\mathcal{I}_\psi)^{\mathfrak{m}_n}$ is given by
$$x\overline{1} \cdot y\overline{1}:=xy\overline{1},$$
where $x,y\in\mathcal{A}$ represent the classes $x\overline{1}$ and $y\overline{1}$ and satisfy
$$[x,m]\overline{1}=[y,m]\overline{1}=0.$$
These conditions ensure that the product is independent of the choice of representatives.
Now let $a,b\in\mathcal{A}^{\mathfrak{m}_n}$. Their images $a\overline{1}$ and $b\overline{1}$ lie in $(\mathcal{A}/\mathcal{I}_\psi)^{\mathfrak{m}_n}$. By the definition of multiplication in this algebra, the quotient map preserves products.
\item
By the first point, the quotient map restricts to an
algebra homomorphism on the invariant subalgebra. Hence the reduced
quasideterminant still satisfies the RTT relation. By
Theorem above~\ref{th:quasidet_and_BK}, it agrees with the
Brundan--Kleshchev generating matrix up to the scalar factor
$(-1)^{n-1}$, which does not affect the RTT relation. Therefore,
$\mathbf T(u)$ satisfies the stated relation.

\end{itemize}
\end{proof}
\section{Differential realization and reduction}
\label{sec:differential_reduction}
\subsection{Differential operators on the lower unipotent group}
Let $N_-\subset\mathrm{GL}_N$ be the lower unipotent
subgroup, and let $\mathfrak n_-$ be its Lie algebra. 
In this section we use a single pair of indices for coordinates and
right-shift operators. Write the coordinate matrix of
$N_-$ as
$$
z=\mathbf1_N+\sum_{1\leq j<i\leq N}z_{i,j}e_{i,j},
\qquad
\mathbb C[N_-]=\mathbb C[z_{i,j}]_{1\leq j<i\leq N}.
$$
\begin{definition}
The algebra $\mathcal D(N_-)$ of polynomial differential operators is
generated by these coordinates and their partial derivatives, with
relations
\begin{align*}
[\partial_{z_{i,j}},z_{k,l}]
&=\delta_{i,k}\delta_{j,l},\\
[\partial_{z_{i,j}},\partial_{z_{k,l}}]
&=[z_{i,j},z_{k,l}]=0.
\end{align*}
\end{definition}

We now recall the standard differential realization of
$\mathfrak{gl}_N$. Let the representation space be a class of functions
on $N_-$, or on an open subset of $N_-$, that is preserved by polynomial
multiplication and differentiation. For $x\in\mathfrak{gl}_N$ and small
$\varepsilon$, the Gauss decomposition has the form
$$
e^{-\varepsilon x}z=z'(\varepsilon)b(\varepsilon),
\qquad z'(\varepsilon)\in N_-,\quad b(\varepsilon)\in B,
$$
where $B\subset\mathrm{GL}_N$ is the upper triangular subgroup.
Fix the diagonal matrices
$$
\lambda=\operatorname{diag}(\lambda_1,\dots,
\lambda_N) \ \ ;\ \ \rho =\operatorname{diag}(N-1,N-2,\ldots,1,0).
$$
For $b\in B$, define the twist
$$
\chi_\lambda(b)=\exp\bigl(-\operatorname{tr}((\lambda+\rho)\log b)\bigr).
$$
For generic $\lambda$, this expression depends on a choice of branch of
the logarithm, but its infinitesimal action does not. We denote the
realized operator corresponding to $x$ by $E_x$. Define the left action by
\begin{align}\label{eq:left_action}
E_x \Phi(z)
=\left.\frac{d}{d\varepsilon}
\left(\chi_\lambda(b(\varepsilon))
\Phi(z'(\varepsilon))\right)\right|_{\varepsilon=0}.
\end{align}
If $x\in\mathfrak n_-$, then $b(\varepsilon)=\mathbf1_N$; hence the
corresponding operators $E_x$ are independent of $\lambda$.
\begin{align}\label{eq:left_action-}
E_x\Phi(z)
=\left.\frac{d}{d\varepsilon}\Phi(z-\varepsilon xz)
\right|_{\varepsilon=0} = -\sum_{i>j} (xz)_{i,j}\,
\partial_{z_{i,j}}\,\Phi(z)
\end{align}
The standard basis of $\mathfrak{gl}_N$ consists of the matrix units
$e_{i,j}$. It follows from definition \eqref{eq:left_action} that the map
is linear:
\begin{align}\label{basis}
x = \sum_{i,j} x_{i,j} e_{i,j} \rightarrow E_{x} = \sum_{i,j} x_{i,j} E_{e_{i,j}}
\end{align}
For $x=e_{i,j}$ we denote the corresponding operator by $E_{i,j} = E_{e_{i,j}}$. 
The generators naturally split into two classes. Those with $i>j$
correspond to elements $e_{i,j}\in\mathfrak n_-$ and do not depend on
$\lambda$:
\begin{align}\label{E}
E_{i,j}=
-\sum_{p>s}\left(e_{i,j}\mathrm{z}\right)_{p,s}\partial_{z_{p,s}} =
-\partial_{z_{i,j}} - \sum_{m=1}^{j-1}
z_{j,m}\,\partial_{z_{i,m}}\,,
\end{align}
The generators $E_{i,j}$ with $i\leq j$, by contrast, depend on
$\lambda$. Direct expressions for these generators analogous to
\eqref{E} would be cumbersome. However, a simple formula expresses all of
them in terms of the right-shift generators.

For $x\in\mathfrak n_-$, define the right-shift operator by
\begin{align}\label{eq:right_action}
D_x\Phi(z)
=\left.\frac{d}{d\varepsilon}\Phi(ze^{\varepsilon x})
\right|_{\varepsilon=0} = \sum_{i>j} (zx)_{i,j}\,
\partial_{z_{i,j}}\,\Phi(z).
\end{align}
Both \eqref{eq:left_action} and \eqref{eq:right_action} define polynomial
differential operators. For $x=e_{i,j}$ with $i>j$, we denote the
corresponding right-shift operator by $D_{i,j}$. Any right-shift operator
is a linear combination of these basis operators, just as in
\eqref{basis}. The individual right-shift generators have the following
explicit form:
\begin{align}\label{eq:explicit_right_shifts}
D_{i,j}
=\sum_{p>s} (ze_{i,j})_{p,s}\,
\partial_{z_{p,s}} = \partial_{z_{i,j}}
+\sum_{k=i+1}^N z_{k,i}\,\partial_{z_{k,j}},
\qquad i>j.
\end{align}

\begin{corollary}
The generators $E_{i,j}$ obey the commutation relations
\eqref{eq:comm_relations_gl}, and the left action \eqref{eq:left_action}
extends to an algebra homomorphism
$$\rho_\lambda:
\mathcal U(\mathfrak{gl}_N)\longrightarrow\mathcal D(N_-).
$$
\end{corollary}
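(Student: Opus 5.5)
The cleanest route is to realize both the left action $E_x$ and the right shifts $D_x$ as derivatives of a group action on a space of functions, and to transport the Lie bracket from the group to the operators. First I will check that \eqref{eq:left_action} is a Lie algebra antihomomorphism-corrected representation. Consider the induced-type action of $\mathrm{GL}_N$ (or a neighbourhood of the identity) on functions on the big cell,
$$
(\pi(g)\Phi)(z)=\chi_\lambda(b)\,\Phi(z'),\qquad g^{-1}z=z'b ,
$$
with $z'\in N_-$ and $b\in B$. I will verify that $\pi(g_1g_2)=\pi(g_1)\pi(g_2)$ wherever Gauss decompositions exist. This follows from uniqueness of the Gauss decomposition and multiplicativity of $\chi_\lambda$ on $B$. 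The latter holds because $b\mapsto\operatorname{tr}((\lambda+\rho)\log b)$ only sees the diagonal part of $b$, and the diagonal part of a product of upper triangular matrices is the product of the diagonal parts. Then $E_x=\frac{d}{d\varepsilon}\pi(e^{\varepsilon x})|_{\varepsilon=0}$, which matches \eqref{eq:left_action} since $g^{-1}=e^{-\varepsilon x}$. A local representation of the group differentiates to a Lie algebra representation, giving $[E_x,E_y]=E_{[x,y]}$. Specializing to $x=e_{i,j}$, $y=e_{k,q}$ yields \eqref{eq:comm_relations_gl}.

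Second, I need each $E_x$ to be a polynomial differential operator, i.e.\ to lie in $\mathcal D(N_-)$. I will differentiate $e^{-\varepsilon x}z=z'(\varepsilon)b(\varepsilon)$ at $\varepsilon=0$. This gives
$$
-xz=\dot z'+z\dot b,\qquad\text{so}\qquad -z^{-1}xz=z^{-1}\dot z'+\dot b .
$$
Here $z^{-1}\dot z'$ is strictly lower triangular and $\dot b$ is upper triangular. Hence $\dot b=-(z^{-1}xz)_{\geq0}$ and $\dot z'=-z\,(z^{-1}xz)_{<0}$, and both are polynomial in $z$ because $z^{-1}$ is polynomial for unipotent $z$. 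Therefore
$$
E_x=-\sum_{i>j}\bigl(z(z^{-1}xz)_{<0}\bigr)_{i,j}\partial_{z_{i,j}}-\operatorname{tr}\bigl((\lambda+\rho)(z^{-1}xz)_{\geq0}\bigr).
$$
This is a first-order operator with polynomial coefficients. The same formula recovers \eqref{E} for $x\in\mathfrak n_-$, and it expresses $E_x$ through the right shifts $D_y$ with $y=(z^{-1}xz)_{<0}$, consistent with \eqref{eq:right_action}.

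Finally, the operators $E_{i,j}$ lie in $\mathcal D(N_-)$ and satisfy the defining relations of $\mathcal U(\mathfrak{gl}_N)$. By the universal property, $e_{i,j}\mapsto E_{i,j}$ therefore extends uniquely to an algebra homomorphism $\rho_\lambda$.

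The main obstacle is the rigour of the first step. The decomposition is only local and $\chi_\lambda$ involves a branch of the logarithm. I will handle this by working with formal or one-parameter germs: the infinitesimal character $\dot b\mapsto-\operatorname{tr}((\lambda+\rho)\dot b)$ is branch-independent. As a fallback, I will verify the bracket purely algebraically. From the explicit formula above, $E_x=-D_{(z^{-1}xz)_{<0}}+f_x(z)$, where $f_x$ denotes the scalar term. One can then check $[E_x,E_y]=E_{[x,y]}$ using $[D_u,D_v]=D_{[u,v]}$ and the derivative of $z^{-1}xz$ along left shifts; this is a standard cocycle computation.
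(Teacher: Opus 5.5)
Your proposal is correct and matches the paper. The paper states the corollary without proof, treating it as immediate from realizing \eqref{eq:left_action} as the derivative of a local group action. Its infinitesimal Gauss decomposition in the proof of \eqref{eq:differential_realization}, namely $z'(\varepsilon)=z(\mathbf 1_N-\varepsilon x_-)$ and $b(\varepsilon)=\mathbf 1_N-\varepsilon x_+$ with $x=z^{-1}az$, is exactly your polynomiality step.

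There is one sign slip in your displayed formula for $E_x$. Since $\chi_\lambda(b(\varepsilon))=1-\varepsilon\operatorname{tr}((\lambda+\rho)\dot b)+O(\varepsilon^2)$ and $\dot b=-(z^{-1}xz)_{\geq0}$, the scalar term is $+\operatorname{tr}\bigl((\lambda+\rho)(z^{-1}xz)_{\geq0}\bigr)$, not minus. For example, with $N=2$ one gets $E_{1,1}=x\partial_x+\lambda_1+1$. This does not affect the argument: polynomiality is sign-independent, and the bracket relations come from the group action. Your fallback check would also still go through, since the opposite sign corresponds to another character of $\mathfrak b$.
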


\paragraph{Example.}
Consider the familiar case $N=2$. Then $N_-$ has one
coordinate $x=z_{2,1}$. Take the matrix unit $e_{1,2}$ and
$\lambda+\rho =\operatorname{diag}(\lambda_1+1,\lambda_2)$. The Gauss
decomposition is
\begin{equation*}
\left(\begin{array}{cc}
  1 & -\varepsilon \\
  0 & 1 \\
\end{array}\right)\left(\begin{array}{cc}
  1 & 0 \\
  x & 1 \\
\end{array}\right) = \left(\begin{array}{cc}
  1 & 0 \\
  \frac{x}{1-\varepsilon x} & 1 \\
\end{array}\right)\left(\begin{array}{cc}
  {1-\varepsilon x} & -\varepsilon \\
  0 & \frac{1}{1-\varepsilon x} \\
\end{array}\right).
\end{equation*}
Thus
$$
\chi_\lambda(b(\varepsilon))
=\exp\bigl((\lambda_2-\lambda_1-1)\log(1-\varepsilon x)\bigr),
$$
and \eqref{eq:left_action} gives
$$
E_{1,2}\Phi(x)
=\left.\frac{d}{d\varepsilon}
\left[
\Phi\left(\frac{x}{1-\varepsilon x}\right)
e^{(\lambda_2-\lambda_1-1)\log(1-\varepsilon x)}
\right]\right|_{\varepsilon=0}
=\bigl(x^2\partial_x+(\lambda_1-\lambda_2+1)x\bigr)\Phi(x).
$$
Similarly,
$$
E_{2,1}=-\partial_x,\qquad
E_{1,1}=x\partial_x+\lambda_1+1,\qquad
E_{2,2}=-x\partial_x+\lambda_2.
$$
The right-shift generator in this case is $D_{2,1}=\partial_x$.

\begin{corollary} \label{cor:left_and_right_invariance}
For all $x,y\in\mathfrak n_-$, the left- and right-shift operators
commute:
\begin{align} \label{eq:left_and_right_invariance}
[E_x,D_y]=0.
\end{align}
\end{corollary}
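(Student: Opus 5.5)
The cleanest route is through the group actions that define the operators, rather than through the explicit formulas \eqref{E} and \eqref{eq:explicit_right_shifts}. Take $x,y\in\mathfrak n_-$ and small parameters $\varepsilon,\delta$. By \eqref{eq:left_action-} and \eqref{eq:right_action} we have
$$
e^{\varepsilon E_x}\Phi(z)=\Phi(e^{-\varepsilon x}z),
\qquad
e^{\delta D_y}\Phi(z)=\Phi(ze^{\delta y}),
$$
at least as formal flows, or locally for polynomial $\Phi$. These formulas hold because for $x\in\mathfrak n_-$ the product $e^{-\varepsilon x}z$ already lies in $N_-$, so $b(\varepsilon)=\mathbf 1_N$ and the twist $\chi_\lambda$ is trivial. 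Likewise $ze^{\delta y}\in N_-$.

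The two flows commute by associativity of matrix multiplication:
$$
e^{-\varepsilon x}\bigl(ze^{\delta y}\bigr)=\bigl(e^{-\varepsilon x}z\bigr)e^{\delta y}.
$$
Applying both operators to $\Phi$ in either order therefore gives $\Phi(e^{-\varepsilon x}ze^{\delta y})$. Differentiating in $\varepsilon$ and then in $\delta$ at zero yields $E_xD_y\Phi=D_yE_x\Phi$. One should check the ordering conventions here: the left action composes as $\Phi\mapsto\Phi(g^{-1}\,\cdot\,)$ and the right action as $\Phi\mapsto\Phi(\,\cdot\,h)$. The mixed second derivative does not depend on the order of differentiation, and that symmetry is exactly the commutator statement.

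Since $\partial_\varepsilon\partial_\delta$ of a smooth function of two variables is symmetric, this argument suffices. I would nevertheless add an algebraic check that avoids the flows entirely, so that the result holds in $\mathcal D(N_-)$ as an identity of polynomial differential operators.

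Write $E_x=-\sum (xz)_{i,j}\partial_{z_{i,j}}$ and $D_y=\sum (zy)_{i,j}\partial_{z_{i,j}}$ as vector fields, where the summation runs over $i>j$. The commutator of two vector fields $V=\sum V_{ij}\partial_{ij}$ and $W=\sum W_{ij}\partial_{ij}$ is $\sum\bigl(V(W_{ij})-W(V_{ij})\bigr)\partial_{ij}$. We compute each term.

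First, $E_x$ acts on the matrix $z$ entrywise by $E_x(z)=-xz$. This uses that $z$ is lower unitriangular and $xz$ is strictly lower triangular, so only the entries $i>j$ vary. Hence
$$
E_x\bigl((zy)_{ij}\bigr)=-(xzy)_{ij}.
$$
Second, $D_y(z)=zy$, which gives
$$
D_y\bigl(-(xz)_{ij}\bigr)=-(xzy)_{ij}.
$$
The two terms cancel, so $[E_x,D_y]=0$.

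The only delicate point is the claim that the coefficient matrices $xz$ and $zy$ are strictly lower triangular. This holds because $\mathfrak n_-$ is an ideal-like set under multiplication by $N_-$: the product of a strictly lower triangular matrix with a lower unitriangular matrix, on either side, is strictly lower triangular. It follows that the vector field acts on the full matrix $z$, with diagonal entries fixed, exactly as the formula suggests.
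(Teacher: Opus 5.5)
Your proof is correct, and your first argument is the paper's own. The paper observes that both $E_xD_y\Phi$ and $D_yE_x\Phi$ equal the mixed second derivative of $\Phi(e^{-\varepsilon_1 x}ze^{\varepsilon_2 y})$, which rests on associativity of the two-sided action and symmetry of mixed partials. Your supplementary vector-field computation is also correct and reaches the same conclusion directly in $\mathcal D(N_-)$: since $E_x(z)=-xz$ and $D_y(z)=zy$, both cross terms equal $-(xzy)_{i,j}$ and cancel.
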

\begin{proof}
It is easy to check that 
\begin{align*}
E_x\,D_y\Phi(z) = D_y\,E_x\Phi(z) = 
\left.\frac{d^2}{d\varepsilon_1 d\varepsilon_2}\Phi(e^{-\varepsilon_1 x}ze^{\varepsilon_2 y})
\right|_{\varepsilon_1=\varepsilon_2=0}.
\end{align*}
\end{proof}
We now give a compact formula for the realized universal matrix $E$.
Assemble the
right-shift operators into the strictly upper-triangular matrix
\begin{align}\label{eq:differential_right_shifts}
D&=\sum_{1\leq j<i\leq N}e_{j,i}\otimes D_{i,j}.
\end{align}

\begin{proposition}[Derkachov--Manashov \cite{DM,DM1}]
The matrix of left-action generators satisfies the factorized formula
\begin{equation}\label{eq:differential_realization}
E=z(\lambda-D)z^{-1}.
\end{equation}
Equivalently, the image of the $L$-operator is
\begin{equation}\label{eq:differential_L_operator}
L(u)
=z(u+\lambda-D)z^{-1}.
\end{equation}
\end{proposition}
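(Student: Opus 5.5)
The plan is to verify \eqref{eq:differential_realization} one generator at a time, by computing $E_x$ for an arbitrary $x\in\mathfrak{gl}_N$ directly from the Gauss-decomposition definition \eqref{eq:left_action}. Set up the matching first. Since the $(i,j)$ entry of $E$ is $E_{j,i}$, linearity \eqref{basis} gives
$$
E_x=\sum_{k,j}x_{k,j}E_{k,j}=\operatorname{tr}(xE),
$$
where $x$ is a scalar matrix. It therefore suffices to prove
$$
E_x=\operatorname{tr}\bigl(x\,z(\lambda-D)z^{-1}\bigr)
$$
for all $x$. The ordering convention is that the operator entries of $D$ act on everything to their right, including $z^{-1}$.

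\textbf{Step 1: infinitesimal Gauss decomposition.} Put $y=z^{-1}xz$, a matrix of polynomial functions. Split it as $y=y_-+y_0+y_+$ into strictly lower, diagonal and strictly upper parts. To first order in $\varepsilon$,
$$
e^{-\varepsilon x}z=z(1-\varepsilon y)=z(1-\varepsilon y_-)\,\bigl(1-\varepsilon(y_0+y_+)\bigr)+O(\varepsilon^2).
$$
Hence $z'(\varepsilon)=z(1-\varepsilon y_-)+O(\varepsilon^2)$ and $b(\varepsilon)=1-\varepsilon(y_0+y_+)+O(\varepsilon^2)$. Differentiating \eqref{eq:left_action} then gives two contributions. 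The twist contributes $\operatorname{tr}((\lambda+\rho)y_0)=\operatorname{tr}((\lambda+\rho)y)$. The argument shift contributes $-\sum_{i>j}y_{i,j}D_{i,j}$, by \eqref{eq:right_action} applied pointwise with $z$-dependent coefficients placed to the left of the derivatives. Altogether,
$$
E_x=\operatorname{tr}(\lambda y)+\operatorname{tr}(\rho y)-\sum_{i>j}y_{i,j}D_{i,j}.
$$

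\textbf{Step 2: expand the right-hand side with the given ordering.} The $(k,l)$ entry of $D$ is $D_{l,k}$ for $l>k$. Moving every $D$ to the far right in $\operatorname{tr}(xz(\lambda-D)z^{-1})$ gives two parts. The normally ordered part is
$$
\operatorname{tr}(\lambda y)-\sum_{l>k}(z^{-1}xz)_{l,k}D_{l,k},
$$
where cyclicity is harmless because the coefficients are commuting functions. The remaining part collects the commutator terms $-\sum_{l>k}(xz)_{j,k}\,[D_{l,k},(z^{-1})_{l,j}]$, summed over $j$.

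\textbf{Step 3: the commutator term produces the $\rho$-shift.} This is the step I expect to be the main (if small) obstacle. Since $D_{l,k}$ is the right shift $z\mapsto ze^{\varepsilon e_{l,k}}$, it sends $z^{-1}\mapsto e^{-\varepsilon e_{l,k}}z^{-1}$. Therefore $[D_{l,k},z^{-1}]=-e_{l,k}z^{-1}$, that is, $[D_{l,k},(z^{-1})_{l,j}]=-(z^{-1})_{k,j}$. Substituting, the commutator part becomes
$$
\sum_{k}\#\{l:l>k\}\,(xz)_{j,k}(z^{-1})_{k,j}=\sum_k (N-k)\,(z^{-1}xz)_{k,k}=\operatorname{tr}(\rho y).
$$
This matches Step 1 term by term and proves \eqref{eq:differential_realization}. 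Adding $u$ to both sides and using $zuz^{-1}=u$ gives \eqref{eq:differential_L_operator}. As a sanity check, I would confirm the normalization against the $N=2$ example, e.g. $E_{1,1}=x\partial_x+\lambda_1+1$, where the $+1$ is exactly this $\rho$-term.
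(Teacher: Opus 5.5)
Your proof is correct and follows essentially the same route as the paper: the first-order Gauss decomposition of $e^{-\varepsilon a}z$ with $x=z^{-1}az$ gives $\operatorname{tr}((\lambda+\rho)x_+)-D_{x_-}$, and commuting the right shifts past $z^{-1}$ via $[D_x,z^{-1}]=-xz^{-1}$ produces a $-z\rho z^{-1}$ term that cancels the $\rho$-shift. The differences are only cosmetic. You treat all generators at once through $E_x=\operatorname{tr}(xE)$ and normal-order the right-hand side. The paper instead works entrywise with $a=e_{i,j}$ and reorders the left-hand side into the form $z(\lambda-D)z^{-1}$.
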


\begin{remark}
We would like to emphasize that this formula can be interpreted as a quantization of the Springer map
\cite{BeilinsonBernstein}.
\end{remark}
\begin{remark}
It is instructive to present everything in the visual matrix form
\begin{align*}
E = \begin{pmatrix}1&0&0&\ldots&0\\
z_{2,1}&1&0&\ldots&0\\
z_{3,1}&z_{3,2}&1&\ldots&0\\
\vdots&\vdots&\vdots&\ddots&\vdots\\
z_{N,1}&z_{N,2}&z_{N,3}&\ldots &1
\end{pmatrix}
\begin{pmatrix} \lambda_1 &-D_{2,1}&
-D_{3,1}&\ldots&-D_{N,1}\\
0&\lambda_2& -D_{3,2}&\ldots& -D_{N,2}\\
0&0&\lambda_3&\ldots& -D_{N,3}\\
\vdots&\vdots&\vdots& \ddots& \vdots\\
0&0&0&\ldots&\lambda_{N}
 \end{pmatrix}
\left(\begin{matrix}1&0&0&\ldots&0\\
z_{2,1}&1&0&\ldots&0\\
z_{3,1}&z_{3,2}&1&\ldots&0\\
\vdots&\vdots&\vdots&\ddots&\vdots\\
z_{N,1}&z_{N,2}&z_{N,3}&\ldots &1
\end{matrix}\right)^{-1}
\end{align*}
\end{remark}

\begin{proof}
Let $\mathfrak{gl}_N=\mathfrak n_-\oplus\mathfrak b$,
where $\mathfrak b$ is the upper triangular subalgebra. For
$x\in\mathfrak{gl}_N$, write
$$
x=x_-+x_+,
\qquad x_-\in\mathfrak n_-,\quad x_+\in\mathfrak b,
$$
We first consider an element
$e^{\varepsilon a}=\mathbf1_N+\varepsilon a+O(\varepsilon^2)$
in a small neighborhood of the identity. Put
$x=z^{-1}az$. To first order in $\varepsilon$, the Gauss decomposition
used in \eqref{eq:left_action} takes the form
\begin{align*}
z'(\varepsilon)
&=z(\mathbf1_N-\varepsilon x_-)+O(\varepsilon^2),\\
b(\varepsilon)
&=\mathbf1_N-\varepsilon x_++O(\varepsilon^2),
\end{align*}
which could be checked directly: 
\begin{align*}
e^{-\varepsilon a}z = (\mathbf1_N-\varepsilon a)z +O(\varepsilon^2)
=z(1-\varepsilon z^{-1}az)+O(\varepsilon^2) \\ 
z'(\varepsilon)b(\varepsilon)=z(\mathbf1_N-\varepsilon x_-)(\mathbf1_N-\varepsilon x_+)+O(\varepsilon^2) = z(1-\varepsilon (x_{-}+x_{+}))+O(\varepsilon^2)
\end{align*}
Therefore the infinitesimal action can be written as
\begin{equation}\label{eq:differential_infinitesimal_action}
E_a =\sum_{k=1}^N (\lambda_k+\rho_k)(x_+)_{k,k}-D_{x_-},
\qquad x=z^{-1}az.
\end{equation}
Here $\lambda+\rho =\operatorname{diag}
(\lambda_1+\rho_1,\ldots,\lambda_N+\rho_N)$.
For $a=e_{i,j}$ we have explicit expressions  
\begin{align}
(x_{-})_{k,l} = z_{j,l} \left(z^{-1}\right)_{k,i}\ \ ; \ \ k > l 
\ \ ;\ \ 
(x_{+})_{k,k} = z_{j,k} \left(z^{-1}\right)_{k,i}
\end{align}
so that 
\begin{multline*}
E_{i,j} =\sum_{k=1}^N z_{j,k}(\lambda_k+\rho_k)\left(z^{-1}\right)_{k,i} - 
\sum_{k>l} z_{j,l} \left(z^{-1}\right)_{k,i} D_{k,l} = \\
\left(z(\lambda+\rho) z^{-1}\right)_{j,i} - \sum_{k>l} z_{j,l} D_{k,l} \left(z^{-1}\right)_{k,i}
+ \sum_{k>l} z_{j,l} \left[D_{k,l}\,,\left(z^{-1}\right)_{k,i}\right]
\end{multline*}
It remains to calculate the commutator. From the
definition of the right action, we have
\begin{align*}
D_x z = z D_x + z x \rightarrow 
z^{-1} D_x=(D_x+x)z^{-1} \rightarrow \left[D_x\,,z^{-1}\right] = - x z^{-1}
\end{align*}
or in explicit index form 
\begin{align*}
[D_{k,l}\,,\left(z^{-1}\right)_{j,i}] = 
-\left(e_{k,l} z^{-1}\right)_{j,i} = -\delta_{k,j}\left(z^{-1}\right)_{l,i}
\end{align*}
and consequently 
\begin{align*}
\sum_{k>l}  z_{j,l} \left[D_{k,l}\,,\left(z^{-1}\right)_{k,i}\right] = 
- \sum_{l} (N-l) z_{j,l} \left(z^{-1}\right)_{l,i} = -\left(z\rho z^{-1}\right)_{j,i}
\end{align*}
Collecting everything together we obtain 
\begin{align}\label{Eij}
E_{i,j} = 
\left(z \lambda z^{-1}\right)_{j,i} - \sum_{k>l} z_{j,l} D_{k,l} \left(z^{-1}\right)_{k,i}
\end{align}
This formula is exactly the formula \eqref{eq:differential_realization} 
in index notations.
\end{proof}

\subsection{Block notation for differential matrices}
We use $n\times n$ block matrices with blocks of size $p\times p$.
Decompose the diagonal parameter matrix into blocks as
$$
\lambda=\operatorname{diag}(\boldsymbol\lambda_1,\ldots,
\boldsymbol\lambda_n),\qquad
\boldsymbol\lambda_a=\operatorname{diag}
(\lambda_{(a-1)p+1},\ldots,\lambda_{ap}).
$$
In the same block notation, the upper-triangular right-shift matrix is
$$
 D=
\begin{pmatrix}
\mathbf D_1&\mathbf D_{1,2}&\mathbf D_{1,3}&\cdots&\mathbf D_{1,n}\\
0&\mathbf D_2&\mathbf D_{2,3}&\cdots&\mathbf D_{2,n}\\
0&0&\mathbf D_3&\cdots&\mathbf D_{3,n}\\
\vdots&\vdots&\vdots&\ddots&\vdots\\
0&0&0&\cdots&\mathbf D_n
\end{pmatrix},
$$
where the off-diagonal blocks and diagonal blocks are given, respectively,
by
$$
\begin{aligned}
\mathbf D_{a,b}
&=\sum_{i,j=1}^p e_{j,i}\otimes
D_{(b-1)p+i,\,(a-1)p+j},&& a<b,\\
\mathbf D_a
&=\sum_{1\leq j<i\leq p}e_{j,i}\otimes
D_{(a-1)p+i,\,(a-1)p+j},&&1\leq a\leq n.
\end{aligned}
$$
The coordinate matrix $z$ has the block form
$$
z=
\begin{pmatrix}
\mathbf z_1&0&0&\cdots&0\\
\mathbf z_{2,1}&\mathbf z_2&0&\cdots&0\\
\mathbf z_{3,1}&\mathbf z_{3,2}&\mathbf z_3&\cdots&0\\
\vdots&\vdots&\vdots&\ddots&\vdots\\
\mathbf z_{n,1}&\mathbf z_{n,2}&\mathbf z_{n,3}&\cdots&\mathbf z_n
\end{pmatrix}.
$$
For $a>b$, the entries of the off-diagonal coordinate blocks are
$$
(\mathbf z_{a,b})_{i,j}
=z_{(a-1)p+i,\,(b-1)p+j},\qquad 1\leq i,j\leq p.
$$
Each diagonal block $\mathbf z_a$ is lower unitriangular, with entries
$$
(\mathbf z_a)_{i,j}=
z_{(a-1)p+i,\,(a-1)p+j}
$$
We also record useful block-factorization formulas in terms of the usual
derivatives $\partial_{z_{i,j}}$.
It is useful to introduce block matrices of derivatives
\begin{align}
\mathbf{d}_{a,b}^{i,j} = \partial_{z_{j+p(b-1),i+p(a-1)}}
\ \ ;\ \ 
\mathbf{d}_{a}^{i,j} = \left(\mathbf{d}_{a,a}\right)_{+}^{i,j} = \begin{cases} \partial_{z_{j+p(a-1),i+p(a-1)}}
\ \ ;\ \ i<j \\ 
0\ \ ;\ \ i=j \\ 
0\ \ ;\ \ i>j
\end{cases}
\end{align}

\begin{proposition}
\label{block}
For $a<b$, the off-diagonal blocks $\mathbf{D}_{a,b}$ of $D$ and
$\mathbf{E}_{a,b}$ of the generator matrix (see
\eqref{eq:generator_blocks}) are expressed in terms of $\mathbf{d}_{a,b}$
and $\mathbf{z}_a$ as follows. The diagonal blocks of $D$ are given by:
\begin{align}\label{DLd}
\mathbf{D}_{a,b} = \mathbf{d}_{a,b}\mathbf{z}_{b}+
\sum_{c=b+1}^n \mathbf{d}_{a,c}\mathbf{z}_{c,b} \ \ ;\ \ 
\mathbf{E}_{a,b} = - \mathbf{z}_{a}\mathbf{d}_{a,b}-
\sum_{c=1}^{a-1} \mathbf{z}_{a,c}\mathbf{d}_{c,b} \\ 
\label{Dad}
\mathbf{D}_{a} = \left(\mathbf{d}_{a}\mathbf{z}_{a}\right)_{+}+
\sum_{c=a+1}^n \left(\mathbf{d}_{a,c}\mathbf{z}_{c,a}\right)_{+}
\end{align}
\end{proposition}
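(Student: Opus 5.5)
The plan is to compute both block formulas directly from the explicit index expressions already established: the right-shift formula \eqref{eq:explicit_right_shifts} for $D_{i,j}$, and the left-action formula \eqref{E} for $E_{i,j}$ with $i>j$. The blocks $\mathbf E_{a,b}$ with $a<b$ collect exactly the generators $E^{j,i}_{b,a}$, i.e.\ $E_{r,s}$ with $r$ in block $b$ and $s$ in block $a<b$. These lie in $\mathfrak n_-$, so \eqref{E} applies and no $\lambda$-dependence enters. Equivalently, one could read the formulas off from $D_x=\sum (zx)_{i,j}\partial_{z_{i,j}}$ and $E_x=-\sum (xz)_{i,j}\partial_{z_{i,j}}$. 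Assembling these into matrices gives $D=(\,\partial^{T} z\,)$ restricted to the strictly upper part and $E|_{\text{upper}}=-(z\,\partial^{T})$, where $\partial^T$ is the matrix with entries $(\partial^T)_{s,r}=\partial_{z_{r,s}}$. The block formulas are then just block multiplication with the triangular block structure of $z$.

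First step, the $D$ blocks. Writing $r=(b-1)p+i$ and $s=(a-1)p+j$ with $a<b$, formula \eqref{eq:explicit_right_shifts} reads $D_{r,s}=\sum_{k\ge r} z_{k,r}\partial_{z_{k,s}}$ with $z_{r,r}=1$. I then split the sum over $k$ by block index $c$. For $c=b$, $k$ ranges over rows of $\mathbf z_b$ at or below $i$, which gives the entry of $\mathbf d_{a,b}\mathbf z_b$ because $\mathbf z_b$ is lower unitriangular. For $c>b$ the full block contributes $\mathbf d_{a,c}\mathbf z_{c,b}$. I will check that the index convention $\mathbf d^{i,j}_{a,b}=\partial_{z_{j+p(b-1),i+p(a-1)}}$ and the transposed placement $e_{j,i}\otimes D_{\dots}$ in $\mathbf D_{a,b}$ make the products come out in the stated order. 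For the diagonal blocks $\mathbf D_a$ the same computation applies, except that only pairs $i>j$ inside a block occur. The matrix is therefore strictly upper triangular in the $e_{j,i}$ placement, and taking the $(\cdot)_+$ part of each product gives \eqref{Dad}. The term $c=a$ uses the truncated $\mathbf d_a$, which is consistent since $\partial$ of the diagonal or upper entries of $\mathbf z_a$ vanish.

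Second step, the $E$ blocks. Formula \eqref{E} gives $E_{r,s}=-\partial_{z_{r,s}}-\sum_{m<s} z_{s,m}\partial_{z_{r,m}}=-\sum_{m\le s} z_{s,m}\partial_{z_{r,m}}$. I split $m$ by block $c\le a$. Here $c=a$ yields $-\mathbf z_a\mathbf d_{a,b}$ and $c<a$ yields $-\mathbf z_{a,c}\mathbf d_{c,b}$. No reordering issue arises, because $z_{s,m}$ commutes with $\partial_{z_{r,m}}$ when $r$ is in block $b>a\ge c$, so $r\neq s$. Similarly, in the $D$ case $\partial_{z_{k,s}}$ and $z_{k,r}$ commute since $r\ne s$, which justifies writing $\mathbf d\mathbf z$ with derivatives on the left.

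The main obstacle is purely bookkeeping. One has to get the transposes right among the convention $\mathbf E_{a,b}=\sum e_{(a-1)p+i,(b-1)p+j}\otimes E^{j,i}_{b,a}$, the definition of $\mathbf d_{a,b}$, and the direction of products, so that each sum becomes a matrix product in the stated order rather than its transpose. I would fix this by verifying a single entry, $(i,j)$, of each formula explicitly.
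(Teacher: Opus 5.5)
Your proposal is correct and is the direct entrywise check the paper has in mind (the paper only says ``the proof is a direct check''). You substitute \eqref{eq:explicit_right_shifts} and \eqref{E}, split the sums by block index, and use that $z_{k,r}$ commutes with $\partial_{z_{k,s}}$ for $r\neq s$. Carrying out the bookkeeping you flagged does give the stated order of factors, namely $(\mathbf D_{a,b})_{j,i}=\sum_{c\ge b}\sum_{l}(\mathbf d_{a,c})_{j,l}(\mathbf z_{c,b})_{l,i}$ and $(\mathbf E_{a,b})_{i,j}=-\sum_{c\le a}\sum_{l}(\mathbf z_{a,c})_{i,l}(\mathbf d_{c,b})_{l,j}$, with the convention $\mathbf z_{c,c}:=\mathbf z_c$.
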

The proof is a direct check.
\begin{remark}
Note the following matrix relation
\begin{align*}
\left(\begin{pmatrix} \mathbf{D}_1 & \mathbf{D}_{1,2}&
\mathbf{D}_{1,3}&\ldots& \mathbf{D}_{1,n}\\
0&\mathbf{D}_2& \mathbf{D}_{2,3}&\ldots& \mathbf{D}_{2,n}\\
0&0&\mathbf{D}_3&\ldots& \mathbf{D}_{3,n}\\
\vdots&\vdots&\vdots& \ddots& \vdots\\
0&0&0&\ldots&\mathbf{D}_{n}
 \end{pmatrix}
\begin{pmatrix}\mathbf{z}_1&0&0&\ldots&0\\
\mathbf{z}_{2,1}&\mathbf{z}_2&0&\ldots&0\\
\mathbf{z}_{3,1}&\mathbf{z}_{3,2}&\mathbf{z}_3&\ldots&0\\
\vdots&\vdots&\vdots&\ddots&\vdots\\
\mathbf{z}_{n,1}&\mathbf{z}_{n,2}&\mathbf{z}_{n,3}&\ldots &\mathbf{z}_n
\end{pmatrix}^{-1}\right)_{+} = 
\begin{pmatrix} 0 & \mathbf{d}_{1,2}&
\mathbf{d}_{1,3}&\ldots& \mathbf{d}_{1,n}\\
0&0& \mathbf{d}_{2,3}&\ldots& \mathbf{d}_{2,n}\\
0&0&0&\ldots& \mathbf{d}_{3,n}\\
\vdots&\vdots&\vdots& \ddots& \vdots\\
0&0&0&\ldots&0
 \end{pmatrix}
\end{align*}
follows from the formulas in \eqref{DLd}--\eqref{Dad}. It expresses the
strictly upper block-triangular part of the generator matrix $E$ in the
form
\begin{align}\label{Eplus}
E_{+} = -\left(\begin{pmatrix} \mathbf{z}_1&0&0&\ldots&0\\
\mathbf{z}_{2,1}&\mathbf{z}_2&0&\ldots&0\\
\mathbf{z}_{3,1}&\mathbf{z}_{3,2}&\mathbf{z}_3&\ldots&0\\
\vdots&\vdots&\vdots&\ddots&\vdots\\
\mathbf{z}_{n,1}&\mathbf{z}_{n,2}&\mathbf{z}_{n,3}&\ldots &\mathbf{z}_n
\end{pmatrix}
\begin{pmatrix} 0 & \mathbf{d}_{1,2}&
\mathbf{d}_{1,3}&\ldots& \mathbf{d}_{1,n}\\
0&0& \mathbf{d}_{2,3}&\ldots& \mathbf{d}_{2,n}\\
0&0&0&\ldots& \mathbf{d}_{3,n}\\
\vdots&\vdots&\vdots& \ddots& \vdots\\
0&0&0&\ldots&0
\end{pmatrix}\right)_{+}
\end{align}
which is equivalent to the formula for $\mathbf{E}_{a,b}$ for $a<b$ in \eqref{DLd}.
\end{remark}
\subsection{Quasideterminant with differential entries}

\begin{proposition}\label{prop:differential_quasideterminant}
The
corner quasideterminant of $L(u)$ can be written entirely in terms of the
diagonal coordinate blocks, the diagonal parameter blocks, and the
right-shift blocks:
\begin{equation}\label{eq:differential_quasideterminant_block_form}
\left|L(u)\right|_{n,1}
=\left|
\begin{pmatrix}
(u+\boldsymbol\lambda_1-\mathbf D_1)\mathbf z_1^{-1}
&-\mathbf D_{1,2}&-\mathbf D_{1,3}&\cdots&-\mathbf D_{1,n}\\
0&u+\boldsymbol\lambda_2-\mathbf D_2&-\mathbf D_{2,3}&\cdots&-\mathbf D_{2,n}\\
0&0&u+\boldsymbol\lambda_3-\mathbf D_3&\cdots&-\mathbf D_{3,n}\\
\vdots&\vdots&\vdots&\ddots&\vdots\\
0&0&0&\cdots&\mathbf z_n(u+\boldsymbol\lambda_n-\mathbf D_n)
\end{pmatrix}
\right|_{n,1}.
\end{equation}
\end{proposition}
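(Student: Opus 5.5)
We start from the factorized form \eqref{eq:differential_L_operator}, $L(u)=z(u+\lambda-D)z^{-1}$, and remove the outer factors $z$ and $z^{-1}$ with the invariance property of Proposition~\ref{prp:invariance_property}. We read that proposition blockwise: $X$ is block lower-triangular with identity in the last diagonal block, and $Y$ has identity in the first diagonal block and zeros in the rest of the first block row. Its proof uses only block multiplication and the cancellation of invertible blocks, so it holds verbatim for $n\times n$ block matrices with $p\times p$ blocks over the noncommutative ring $\mathcal D(N_-)$, localized as needed.

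The key step is to factor the unitriangular matrix $z$ itself. Write
$$z=\operatorname{diag}(\mathbf z_1,\ldots,\mathbf z_{n-1},\mathbf 1_p)\cdot z'\cdot \operatorname{diag}(\mathbf 1_p,\ldots,\mathbf 1_p,\mathbf z_n),$$
where $z'$ is block lower unitriangular. Equivalently, we split off the diagonal blocks $\mathbf z_1,\dots,\mathbf z_{n-1}$ on the left and $\mathbf z_n$ on the right. More usefully, we write $z=X_0\,\operatorname{diag}(\mathbf 1_p,\ldots,\mathbf 1_p,\mathbf z_n)$ with $X_0$ block lower-triangular and last diagonal block $\mathbf 1_p$. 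Then $X_0$ is of the admissible $X$-form: its upper-left $(n-1)\times(n-1)$ block part is block lower-triangular with invertible unitriangular diagonal blocks, hence invertible, and its last block column is zero except for $\mathbf 1_p$.

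Similarly, we write $z^{-1}=\operatorname{diag}(\mathbf z_1^{-1},\mathbf 1_p,\ldots,\mathbf 1_p)\,Y_0$, where $Y_0=\operatorname{diag}(\mathbf z_1,\mathbf 1_p,\ldots,\mathbf 1_p)\,z^{-1}$. This matrix is block lower-triangular with first diagonal block $\mathbf 1_p$, so its first block row is $(\mathbf 1_p,0,\dots,0)$, and its lower-right block is invertible. Applying Proposition~\ref{prp:invariance_property} gives
$$|L(u)|_{n,1}=\bigl|\operatorname{diag}(\mathbf 1,\ldots,\mathbf 1,\mathbf z_n)\,(u+\lambda-D)\,\operatorname{diag}(\mathbf z_1^{-1},\mathbf 1,\ldots,\mathbf 1)\bigr|_{n,1}.$$
Multiplying out, the $(1,1)$ block becomes $(u+\boldsymbol\lambda_1-\mathbf D_1)\mathbf z_1^{-1}$ and the $(n,n)$ block becomes $\mathbf z_n(u+\boldsymbol\lambda_n-\mathbf D_n)$. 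Because $u+\lambda-D$ is block upper-triangular, every other block is unchanged, with one exception. The $(n,1)$ corner would pick up $\mathbf z_n\cdot 0\cdot\mathbf z_1^{-1}=0$, except when $n=1$, which is trivial. The last block row $(0,\dots,0,\mathbf z_n(u+\boldsymbol\lambda_n-\mathbf D_n))$ and the first block column $((u+\boldsymbol\lambda_1-\mathbf D_1)\mathbf z_1^{-1},0,\dots,0)^T$ match the displayed matrix \eqref{eq:differential_quasideterminant_block_form}.

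The main obstacle is bookkeeping of noncommutativity. The entries of $z$ and $D$ do not commute, so the factorization $L(u)=z(u+\lambda-D)z^{-1}$ must be used as an ordered product of operator-valued matrices. Proposition~\ref{prp:invariance_property} is applied with $X$ multiplied on the left and $Y$ on the right, which matches that order, and its proof never commutes entries of $X$ or $Y$ past entries of $A$. We must still check that the relevant deleted submatrices are invertible in a common localization. For $A=L(u)$ this is the standing assumption. For the transformed matrix, the deleted submatrix is block upper-triangular with diagonal blocks $-\mathbf D_{1,2}$-type entries, so its invertibility is tied to that of the original. Proposition~\ref{prp:invariance_property} itself transports definedness through the factors $X_0$ and $Y_0$, since these are invertible.
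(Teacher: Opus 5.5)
Your argument is correct and is essentially the paper's proof. The paper applies Proposition~\ref{prp:invariance_property} twice: it first strips the block-unitriangular factor $zz_{\mathrm d}^{-1}$ and its inverse, then strips $\operatorname{diag}(\mathbf z_1,\ldots,\mathbf z_{n-1},\mathbf 1_p)$ and $\operatorname{diag}(\mathbf 1_p,\mathbf z_2^{-1},\ldots,\mathbf z_n^{-1})$. Your $X_0$ and $Y_0$ are exactly the composites of these two stages, so you do in one step what the paper does in two.

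One slip, which does not affect the proof: the deleted submatrix of the transformed matrix is block upper Hessenberg, not block upper-triangular, since the blocks $u+\boldsymbol\lambda_a-\mathbf D_a$ lie on its subdiagonal. Definedness transfers because the deleted submatrix of $L(u)$ equals the invertible upper-left block of $X_0$, times the deleted submatrix of the transformed matrix, times the invertible lower-right block of $Y_0$.
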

\begin{proof}
The block entries of $z$ give the following explicit factorization:
\begin{align*}
z=
\begin{pmatrix}
\mathbf 1_p&0&0&\cdots&0\\
\mathbf z_{2,1}\mathbf z_1^{-1}&\mathbf 1_p&0&\cdots&0\\
\mathbf z_{3,1}\mathbf z_1^{-1}&\mathbf z_{3,2}\mathbf z_2^{-1}&\mathbf 1_p&\cdots&0\\
\vdots&\vdots&\vdots&\ddots&\vdots\\
\mathbf z_{n,1}\mathbf z_1^{-1}&\mathbf z_{n,2}\mathbf z_2^{-1}&
\mathbf z_{n,3}\mathbf z_3^{-1}&\cdots&\mathbf 1_p
\end{pmatrix}\cdot
\begin{pmatrix}
\mathbf z_1&0&0&\cdots&0\\
0&\mathbf z_2&0&\cdots&0\\
0&0&\mathbf z_3&\cdots&0\\
\vdots&\vdots&\vdots&\ddots&\vdots\\
0&0&0&\cdots&\mathbf z_n
\end{pmatrix}.
\end{align*}
Substituting this factorization into $L(u)=z(u+\lambda-D)z^{-1}$,
the first lower block-unitriangular factor and its inverse flank the
conjugate of $u+\lambda-D$ by the displayed diagonal block matrix.
The first factor and its inverse have the forms allowed in
\eqref{XAY}. Applying that invariance therefore removes these two
factors, leaving the corner quasideterminant of the block-conjugated
upper-triangular matrix displayed below:

$$\left|
\begin{pmatrix}
\mathbf z_1&0&0&\cdots&0\\
0&\mathbf z_2&0&\cdots&0\\
0&0&\mathbf z_3&\cdots&0\\
\vdots&\vdots&\vdots&\ddots&\vdots\\
0&0&0&\cdots&\mathbf z_n
\end{pmatrix}\left(u+\lambda-
\begin{pmatrix}
\mathbf D_1&\mathbf D_{1,2}&\mathbf D_{1,3}&\cdots&\mathbf D_{1,n}\\
0&\mathbf D_2&\mathbf D_{2,3}&\cdots&\mathbf D_{2,n}\\
0&0&\mathbf D_3&\cdots&\mathbf D_{3,n}\\
\vdots&\vdots&\vdots&\ddots&\vdots\\
0&0&0&\cdots&\mathbf D_n
\end{pmatrix}\right)
\begin{pmatrix}
\mathbf z_1&0&0&\cdots&0\\
0&\mathbf z_2&0&\cdots&0\\
0&0&\mathbf z_3&\cdots&0\\
\vdots&\vdots&\vdots&\ddots&\vdots\\
0&0&0&\cdots&\mathbf z_n
\end{pmatrix}^{-1}\right|_{n,1}$$

For the second application of \eqref{XAY}, use the following explicit
left and right multipliers, respectively:
$$
\begin{pmatrix}
\mathbf z_1&0&0&\cdots&0\\
0&\mathbf z_2&0&\cdots&0\\
0&0&\mathbf z_3&\cdots&0\\
\vdots&\vdots&\vdots&\ddots&\vdots\\
0&0&0&\cdots&\mathbf 1_p
\end{pmatrix},
\qquad
\begin{pmatrix}
\mathbf 1_p&0&0&\cdots&0\\
0&\mathbf z_2^{-1}&0&\cdots&0\\
0&0&\mathbf z_3^{-1}&\cdots&0\\
\vdots&\vdots&\vdots&\ddots&\vdots\\
0&0&0&\cdots&\mathbf z_n^{-1}
\end{pmatrix}.
$$
Their required principal blocks are invertible, and they have the forms
prescribed in \eqref{XAY}. Factoring out the left matrix from the left, and right matrix from the right, we arrive to the formula \eqref{eq:differential_quasideterminant_block_form}.
\end{proof}
\subsection{Reduction of the algebra of differential operators}
Let us proceed with algebraic reduction of the differential realization. 
Let $M_n=\exp(\mathfrak m_n)$ be the block-lower unipotent subgroup,
let $N_{\mathrm d}$ be the block-diagonal subgroup of $N_-$, and let
$N_-'\subset\mathrm{GL}_p$ be the lower unipotent subgroup.
Every $z\in N_-$ has a unique factorization
\begin{equation}\label{eq:differential_coordinate_factorization}
z=mz_{\mathrm d},\qquad
m\in M_n,\qquad
z_{\mathrm d}=\operatorname{diag}(\mathbf z_1,\dots,\mathbf z_n),
\qquad \mathbf z_a\in N_-'.
\end{equation}
Indeed, the diagonal blocks of $z$ determine $z_{\mathrm d}$, after which
\begin{align}
m=zz_{\mathrm d}^{-1} \,.
\end{align} 
Thus multiplication gives an algebraic
isomorphism $M_n\times N_{\mathrm d}\simeq N_-$, with
$N_{\mathrm d}\simeq(N_-')^n$.

Use the character $\psi$ from \eqref{eq:quantum_character} and set
$$
I:=\mathcal D(N_-)\bigl\langle x-\psi(x)
\mid x\in\mathfrak m_n\bigr\rangle,
\qquad
\mathcal DQ=\mathcal D(N_-)/I.
$$
We define the reduced algebra by
\begin{equation}\label{eq:differential_reduced_algebra}
\mathcal D_-
=(\mathcal DQ)^{\mathfrak m_n},
\end{equation}
i.e. the elements of the quotient $\mathcal DQ$ that are invariant under
the action of $\mathfrak m_n$.

\begin{proposition}\label{prp:differential_reduction}
There is a natural algebra isomorphism
$$
\xi:\mathcal D_-\longrightarrow \mathcal D(N_-')^{\otimes n}.
$$
The reduced algebra is generated by the entries of the diagonal
coordinate blocks $\mathbf z_a$ and by the classes of the right-shift
operators $D_{(a-1)p+i,\,(a-1)p+j}$, $i>j$.
\end{proposition}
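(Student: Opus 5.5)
We work in the coordinates adapted to the factorization \eqref{eq:differential_coordinate_factorization}, $z=mz_{\mathrm d}$. Since $M_n\times N_{\mathrm d}\simeq N_-$ is an algebraic isomorphism, $\mathcal D(N_-)\simeq\mathcal D(M_n)\otimes\mathcal D(N_{\mathrm d})$. The coordinates on $N_{\mathrm d}$ are the entries of the blocks $\mathbf z_a$, and the coordinates on $M_n$ are the entries of $m_{a,b}=\mathbf z_{a,b}\mathbf z_b^{-1}$ for $a>b$.

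\textbf{Step 1: the constraint operators.} For $x\in\mathfrak m_n$ we have $E_x\Phi(z)=\frac{d}{d\varepsilon}\Phi(e^{-\varepsilon x}mz_{\mathrm d})$, and $e^{-\varepsilon x}m\in M_n$. Hence $E_x$ acts only on the $M_n$ factor. It is the infinitesimal left translation there, so $E_x\in\mathcal D(M_n)\otimes 1$. These operators span the Lie algebra of left-invariant-vector-field generators on $M_n$.

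\textbf{Step 2: the right-shift blocks.} The right-shift operators inside the diagonal blocks, $D_{(a-1)p+i,(a-1)p+j}$ with $i>j$, are generated by $y\in\mathfrak n_{\mathrm d}=\operatorname{Lie}N_{\mathrm d}$. They act by $z_{\mathrm d}\mapsto z_{\mathrm d}e^{\varepsilon y}$ with $m$ fixed, so they lie in $1\otimes\mathcal D(N_{\mathrm d})$. They are the right-shift generators of the $n$ copies of $N_-'$. By Corollary~\ref{cor:left_and_right_invariance} they commute with the constraints, as do the entries of $\mathbf z_a$.

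\textbf{Step 3: the reduction on the $M_n$ factor.} This reduces to the model statement $\bigl(\mathcal D(M_n)/\mathcal D(M_n)(\mathfrak m_n-\psi)\bigr)^{\mathfrak m_n}\simeq\mathbb C$. We argue it as follows.
\begin{itemize}
\item Left-invariant-generator vector fields $E_x$ together with coordinates generate $\mathcal D(M_n)$, since the $E_x$ form a frame of the tangent bundle of $M_n$.
\item Hence $\mathcal D(M_n)/I_{M}\simeq\mathbb C[M_n]$ as a module, with $E_x$ acting as $-(\text{vector field})+\psi(x)$.
\item An element $f\overline1$, with $f\in\mathbb C[M_n]$, is invariant iff $[E_x,f]=0$ for all $x$, i.e. $f$ is annihilated by the vector fields. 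Since $M_n$ is connected and acts transitively on itself, $f$ must be constant.
\end{itemize}

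Tensoring with $\mathcal D(N_{\mathrm d})$, which commutes with the constraints, gives
$$
\mathcal D_-\simeq\mathbb C\otimes\mathcal D(N_{\mathrm d})\simeq\mathcal D(N_-')^{\otimes n}.
$$
We define $\xi$ by this composition. It is multiplicative by the argument of Corollary~\ref{cor:bk_rtt}(1), applied to the invariant subalgebra $1\otimes\mathcal D(N_{\mathrm d})$. Finally, $\mathcal D(N_-')$ is generated by the coordinates together with the right-shift vector fields, since the latter form a frame. This gives the stated generators.

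\textbf{Main obstacle.} The delicate point is the invariance computation in Step 3, specifically showing that no nonconstant class survives. One must also check that $\mathcal D(N_{\mathrm d})$ injects into the quotient. Injectivity follows from the tensor decomposition and the freeness of $\mathcal D(M_n)$ over $\mathcal U(\mathfrak m_n)$, which I would establish using a PBW basis of left-invariant generators times polynomials. The vanishing of the invariants of positive degree requires some care, because $\psi$ shifts the action. The commutator $[E_x,f]$ is, however, independent of $\psi$, so the argument reduces to the statement that the only functions killed by all the left-translation vector fields are constants.
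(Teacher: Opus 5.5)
Your proof is correct and follows essentially the paper's route. Both arguments rest on the factorization $z=mz_{\mathrm d}$, on the fact that the $\mathfrak m_n$-generators $E_x$ and the diagonal-block right shifts together form a PBW frame acting only on the $m$- and $z_{\mathrm d}$-factors respectively, and on invariance forcing coefficient functions to be independent of $m$.

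Your splitting $\mathcal D(N_-)\simeq\mathcal D(M_n)\otimes\mathcal D(N_{\mathrm d})$ repackages the paper's ordered decomposition $\mathcal D(N_-)=\mathbb C[N_-]\cdot\mathcal D\oplus I$. In your adapted coordinates the paper's last step (discarding off-diagonal derivatives in the diagonal right shifts by restricting to functions of $z_{\mathrm d}$) becomes automatic. One minor terminology slip: the $E_x$ are right-invariant vector fields (generators of left translations), not left-invariant ones, but this does not affect the frame argument.
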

\begin{proof}
\begin{itemize}\item
First, we verify that the diagonal-block right-shift operators
$D_{(a-1)p+i,(a-1)p+j}$ and the diagonal coordinates
$z_{(a-1)p+i,(a-1)p+j}$ are invariant under $\mathfrak m_n$.
By Corollary~\ref{cor:left_and_right_invariance}, the right-shift
operators commute with the infinitesimal left action of $\mathfrak m_n$.
The diagonal factor $z_{\mathrm d}$ is invariant as well: left
multiplication by $M_n$ changes the factor $m$ but leaves the diagonal
blocks of $z$ unchanged. In the coordinates $(m,z_{\mathrm d})$, the
action of $a\in M_n$ is therefore
$$z\mapsto az\Rightarrow m\mapsto am,\quad z_\mathrm{d}\mapsto z_\mathrm{d}. $$

\item
To describe the quotient, we order the generators so that the diagonal-block
right shifts precede the infinitesimal action generators from $\mathfrak m_n$.
In this order, every element has a unique decomposition as a sum of terms
of the form
$$P(z)D_{(a_1-1)p+i_1,(a_1-1)p+j_1}\cdot\ldots\cdot
D_{(a_k-1)p+i_k,(a_k-1)p+j_k}\cdot
E_{a_{k+1},b_{k+1}}^{i_{k+1},j_{k+1}}\cdot\ldots\cdot
E_{a_{k+s},b_{k+s}}^{i_{k+s},j_{k+s}},$$
where $P(z)\in\mathbb C[N_-]$ is a polynomial function and the
$E_{a_r,b_r}^{i_r,j_r}$ are the images of generators of $\mathfrak m_n$.
Replacing each such generator by its shifted form
$E_{a_r,b_r}^{i_r,j_r}-\psi(E_{a_r,b_r}^{i_r,j_r})$ separates terms with no
shifted action generator from terms containing at least one. The latter lie
in $I$, while uniqueness of the ordered decomposition makes the resulting
sum direct:
 $$
\mathcal D(N_-)
=\mathbb C[N_-]\cdot\mathcal D
\oplus I,
$$
where $\mathcal D$ is generated by the diagonal-block
right-shift operators $D_{(a-1)p+i,(a-1)p+j}$ 

and the diagonal-block
coordinates $z_{(a-1)p+i,(a-1)p+j}$.
Consequently,
$$
\mathcal DQ
\cong
\mathbb C[N_-]\cdot \mathcal{D}.
$$
This is an isomorphism of vector spaces. In the factorization
$z=mz_{\mathrm d}$, the left action of $M_n$ acts on $m$ and fixes
$z_{\mathrm d}$; hence invariant coefficient functions depend only on
$z_{\mathrm d}$. Therefore, using
$$
\mathbb C[N_-]^{\mathfrak m_n}
\cong\mathbb C[N_{\mathrm d}]
$$
we identify the reduction $\mathcal{D}_-$ with the subalgebra $\mathcal{D}$
of $\mathcal{D}(N_-)$ generated by the block-diagonal right shifts
$D_{(a-1)p+i,(b-1)p+j}$ and the block-diagonal coordinates
$z_{(a-1)p+i,(b-1)p+j}$.
\item Finally, we identify $\mathcal{D}$ with
$\mathcal{D}(N_-')^{\otimes n}$. We note that the previous point can be understood as follows: for any element of the reduced algebra $\mathcal{D}_-$, we can find a unique representative in the algebra $\mathcal{D}$. In particular, the elements $E_{a+1,a}^{i,j}$ have representative $\delta_{i,j}$ in $\mathcal{D}$.

The latter algebra $\mathcal{D}$ can be identified with differential algebra in diagonal variables
$\mathcal{D}(N_{\mathrm{d}})$ by considering the action of the diagonal-block
right shifts on functions of the diagonal-block coordinates. Their
expressions on $N_-$ may also contain derivatives in off-diagonal
coordinates, but these derivatives vanish on such functions.

Thus, their
restrictions give the corresponding right shifts on $N_{\mathrm{d}}$.
Conversely, any differential operator on $N_{\mathrm{d}}$ extends to a
differential operator on $N_-$ by expressing it in the basis of right
shifts and extending those shifts via formula \eqref{Dad}. The polynomial
prefactor of an operator of the form $P(z_{\mathrm{d}})D_{i_1,j_1}\cdots D_{i_n,j_n}$ is unchanged under this
extension. Thus, we define mutually inverse homomorphisms.

Thus, we have constructed a map from $\mathcal{D}_-$ to $\mathcal{D}(N_{\mathrm{d}})$ in two steps --- firstly by mapping to $\mathcal{D}$ via finding a good representative, and then setting to zero all the off-diagonal variables in the operators $D_{i,j}$.

Since $N_\mathrm{d}\simeq(N_-')^n$, this identification extends
naturally to the differential algebras:
\begin{align*}
\xi: \ &\mathcal{D}_-\longrightarrow \mathcal{D}(N_-')^{\otimes n}
\\  &\mathbf z_a\longmapsto z_{[a]},\qquad
D_{(a-1)p+i,\,(a-1)p+j}\longmapsto D_{i,j,[a]}=\partial_{z_{i,j,[a]}}+\sum_{k>i}^pz_{k,i,[a]}\partial_{z_{k,j,[a]}},
\end{align*}
where the square-bracketed subscripts label the tensor factors of the $\mathcal{D}(N_-')^{\otimes n}$. 
\end{itemize}
\end{proof}
\begin{remark}
The map $\xi$ identifies the diagonal shift block $\mathbf{D}_a$ with
the matrix $D_{[a]}$ whose $(j,i)$-entry is $D_{i,j,[a]}$ for $i>j$:
\begin{equation}\label{eq:differential_tensor_notation}
\begin{aligned}\xi:
\mathbf{D}_{a}
\longmapsto D_{[a]}.
\end{aligned}
\end{equation}
In terms of the differential generators, this identification reads
\begin{align}\label{Da}
\xi(\mathbf{D}_{a})=\left({d}_{[a]}{z}_{[a]}\right)_{+},
\end{align}
where $(d_{[a]})_{i,j}=\partial_{z_{j,i,[a]}}$ is the matrix of partial
derivatives in the $a$-th tensor factor. Thus, upon restriction to functions
of the diagonal-block coordinates, the terms involving off-diagonal blocks
in formula \eqref{Dad} do not contribute.
\end{remark}

The reduction of $\rho_\lambda$ therefore defines the homomorphism
$$
\rho_\lambda^{\mathrm{red}}:\mathcal W_n(\mathfrak{gl}_p)
\longrightarrow\mathcal D_-.
$$ 
Composing with $\kappa$ identifies its target with
$\mathcal D(N_-')^{\otimes n}$. However, the choice of the nontrivial
character $\psi$ imposes nontrivial constraints on the off-diagonal
right-shift generators. We express the reduction constraints
$x-\psi(x)$ in terms of the generators $D_{i,j}$ and $z_{i,j}$.

\begin{proposition} \label{prp:reduction_of_D}
For $a<b$, the generators $D_{(b-1)p+i,\,(a-1)p+j}$ satisfy
\begin{align}
&D_{(b-1)p+i,\,(a-1)p+j}\overline{1}
=-\delta_{b,a+1}(\mathbf{z}_a^{-1}\mathbf{z}_b)_{j,i}
\end{align}
In the block notation, this is equivalent to
\begin{align}\label{eq:reduction_of_shifts}
&\mathbf{D}_{a,b}\overline{1}
=-\delta_{b,a+1}\mathbf{z}_a^{-1}\mathbf{z}_b,
\qquad a<b.
\end{align}
\end{proposition}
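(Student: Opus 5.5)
The plan is to avoid manipulating the matrix identity $E=z(\lambda-D)z^{-1}$, where operator ordering creates correction terms. Instead I would work with the coordinate-free formula \eqref{eq:differential_infinitesimal_action} for the left action, $E_a=\sum_k(\lambda_k+\rho_k)(x_+)_{k,k}-D_{x_-}$ with $x=z^{-1}az$. In this formula every coefficient function stands to the left of a single vector field. I would then invert the relation, expressing each off-diagonal right shift as a combination of the left generators from $\mathfrak m_n$ with function coefficients on the left. Applying such an expression to $\overline1$ only requires the constraints $E_y\overline1=\psi(y)\overline1$.

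\textbf{Step 1 (left action of $\mathfrak m_n$ is a pure right shift).} Let $a\in\mathfrak m_n$ be strictly lower block-triangular. Since $z$ is lower block-triangular, $x=z^{-1}az$ is again strictly lower block-triangular. Hence $x_+=0$ and $x_-=x$, and \eqref{eq:differential_infinitesimal_action} gives
\begin{equation*}
E_a=-D_{z^{-1}az}=-\sum_{k>l}(z^{-1}az)_{k,l}\,D_{k,l},
\end{equation*}
with all coefficients to the left. The map $a\mapsto z^{-1}az$ is an automorphism of the space of strictly lower block-triangular matrices, with function coefficients and inverse $y\mapsto zyz^{-1}$. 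The relation is also linear over functions multiplied from the left. I would therefore invert it: for $y=e_{k,l}$ with $k$ in block $b$ and $l$ in block $a<b$,
\begin{equation*}
D_{k,l}=-\sum_{r,s}(z\,e_{k,l}\,z^{-1})_{r,s}\,E_{r,s},
\end{equation*}
where only pairs $(r,s)$ with $e_{r,s}\in\mathfrak m_n$ occur. The main point to check carefully is that this inversion is legitimate with noncommuting coefficients. It is, because each identity involves only multiplication by functions on the left followed by a single first-order operator, so no reordering takes place.

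\textbf{Step 2 (apply the constraints).} Acting on $\overline1$ and using $E_{r,s}\overline1=\psi(e_{r,s})\overline1$, with $\psi$ nonzero only for same inner index and block row one below block column, I get
\begin{equation*}
D_{k,l}\overline1=-\sum_{c}\operatorname{tr}\bigl((z e_{k,l}z^{-1})_{c+1,c}\bigr)\,\overline1 .
\end{equation*}
Here $(\cdot)_{c+1,c}$ denotes a $p\times p$ block. Next I expand this block as $\sum_{f,g}\mathbf z_{c+1,f}(e_{k,l})_{f,g}(z^{-1})_{g,c}$, with $f\le c+1$ and $g\ge c$ by triangularity. Since $e_{k,l}$ lives in block position $(b,a)$, this forces $b\le c+1$ and $a\ge c$, hence $b=a+1$ and $c=a$. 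The surviving term is $\operatorname{tr}(\mathbf z_{a+1}\,e^{(i,j)}\,\mathbf z_a^{-1})=(\mathbf z_a^{-1}\mathbf z_{a+1})_{j,i}$, where $k=(b-1)p+i$ and $l=(a-1)p+j$. This yields the stated formula, and the block form \eqref{eq:reduction_of_shifts} follows by matching the transposed indexing in the definition of $\mathbf D_{a,b}$.

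The step I expect to need the most care is the bookkeeping of signs and transpositions. The differential operator attached to $e_{r,s}$ is the generator $E_{r,s}$, and this is the same convention under which $\psi(E_{a,b}^{i,j})=\delta_{i,j}\delta_{a,b+1}$. I would fix this by checking the case $n=2$, $p=1$ against the explicit $\mathfrak{gl}_2$ example, where $E_{2,1}=-\partial_x=-D_{2,1}$. That case gives $D_{2,1}\overline1=-1=-z_1^{-1}z_2$, consistent with the claim.
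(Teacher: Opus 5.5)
Your proof is correct. It rests on the same underlying relation as the paper's, but it is organized differently.

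The paper starts from the coordinate formulas of Proposition~\ref{block}. It combines \eqref{Dad} and \eqref{Eplus} into the block identity $(Dz^{-1})_+=-(z^{-1}E_+)_+$ and applies it to $\overline1$, replacing $E_+\overline1$ by the block shift matrix with identity blocks on the first superdiagonal. It then solves the resulting block-triangular system for the $\mathbf D_{a,b}\overline1$ iteratively, by back-substitution.

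You instead take the coordinate-free formula \eqref{eq:differential_infinitesimal_action}. You observe that for $y\in\mathfrak m_n$ it reduces to the pure vector field $E_y=-D_{z^{-1}yz}$, and you invert this pointwise by conjugation. That gives the closed formula $D_y\overline1=-\psi(zyz^{-1})\overline1$, with $\psi$ extended linearly over function coefficients. Block triangularity then yields the statement in one step, including the vanishing for $b>a+1$, with no iteration.

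Your relation and the paper's are two forms of the same identity between the off-diagonal right shifts and the generators of $\mathfrak m_n$. The difference lies in how it is obtained and used:
\begin{itemize}
\item \textbf{Your route.} It does not rely on the coordinate formulas of Proposition~\ref{block}. Every coefficient stands to the left of a single first-order operator, so operator-ordering questions never arise.
\item \textbf{The paper's route.} One must note, tacitly, that moving $z^{-1}$ past $D$ inside $(Dz^{-1})_+$ produces no correction terms. This holds because $[D_{k,l},(z^{-1})_{k,s}]=-(z^{-1})_{l,s}$ vanishes whenever $l$ lies in an earlier block than $s$. In exchange, the paper's block-matrix form fits naturally with its surrounding matrix formalism and with the expression \eqref{Eplus} for $E_+$.
\end{itemize}
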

\begin{proof}
As before, denote by $(~)_+$ the projection onto the strictly upper block-triangular part.
The formulas \eqref{Dad},\eqref{Eplus} imply the relation
$$
(Dz^{-1})_+=-(z^{-1}E_+)_+,
$$
where $E_+=(E)_+$ is the strictly upper block-triangular part of the generator matrix. We now
apply this relation to $\overline{1}$ and obtain
$$
(Dz^{-1})_+\overline{1}=-\left(z^{-1}\begin{pmatrix}
    0&\mathbf{1}_p&0&\ldots&0\\
    0&0&\mathbf{1}_p&\ddots&\vdots\\
    \vdots&\ddots&\ddots&\ddots&0\\
    0&\ldots&\ldots&0&\mathbf{1}_p\\
    0&\ldots&\ldots&0&0
\end{pmatrix}\right)_+\overline{1},$$
and solve the resulting relation iteratively to obtain
\eqref{eq:reduction_of_shifts}.
\end{proof}
However, all the abstract relations and identifications (such as the map $\xi$) can be understood directly in terms of manipulations with the functional spaces --- similarly to the construction of the map $\rho_\lambda$. We explain this point of view on the reduction of the differential algebra in Appendix~\ref{app:func}.

\subsection{Reduction of the differential quasideterminant}
Assume that we have chosen a localization $\mathcal{DN}$ of the
differential-operator algebra in which the quasideterminant with
differential-operator entries is defined, and that the homomorphism
$\rho_\lambda$ extends to a homomorphism of the localized algebras,
denoted by $\widehat\rho_\lambda$. The image of the
quasideterminant under this extension is the quasideterminant of the
matrix of generators in \eqref{eq:differential_realization}; this matrix is
$L(u)=z(u+\lambda-D)z^{-1}$.

Both algebras have left ideals generated by $m-\psi(m)$, for
$m\in\mathfrak m_n$. The quasideterminant with differential entries
belongs to $\mathcal{DN}^{\mathfrak m_n}$. We denote the corresponding
left ideal in the localized differential algebra by $\mathcal{DI}$, so
that the quotient is $\mathcal{DN}/\mathcal{DI}$ and the reduced
differential algebra is its invariant subalgebra
$(\mathcal{DN}/\mathcal{DI})^{\mathfrak m_n}$.
The homomorphism $\widehat\rho_\lambda$ therefore descends to a
homomorphism of the invariant reductions:
\begin{align*}
\widehat\rho_\lambda:
\mathcal{A}&\longrightarrow
 \mathcal{DN},\\
\widehat\rho_\lambda^{\mathrm{red}}:
(\mathcal{A}/\mathcal{I}_\psi)^{\mathfrak{m}_n}&\longrightarrow (\mathcal{DN}/\mathcal{DI})^{\mathfrak{m}_n},
\end{align*}
where $\mathcal{DI}$ is the left ideal implementing the reduction
constraints in the localized differential algebra.

\begin{remark}
    The appropriate construction of the rings \(\mathcal{A}\) and \(\mathcal{DN}\) is by Rees localization.
    One can introduce the Kazhdan filtration on \(\mathcal{U}(\mathfrak{gl}_N)\):
    \begin{equation}\deg E_{a,b}^{i,j}=1+b-a \end{equation}
     and the corresponding  filtration on \(\mathcal{D}(N_-)\):
     
\begin{equation}\label{eq:diff_Kazhdan_weights}
\deg z_{(a-1)p+i,(b-1)p+j}=a-b,\qquad
\deg\partial_{z_{(a-1)p+i,(b-1)p+j}}=1+b-a.
\end{equation}
    The homomorphism \(\rho_\lambda\) extends to a homomorphism between the filtered algebras.
    The ideals \({I}_\psi\) and \(I\) are filtered, so the quotient modules inherit induced filtrations.
    Thus, one can consider the Rees rings and the Rees modules of these quotients (see \cite{DSKV}).
    The homomorphism \(\rho_\lambda\) therefore extends to the Rees rings and Rees modules.
    Rees localization is constructed by successively adding formal series that become invertible after taking the quotient.
    In this way, the map \(\rho_\lambda\) extends to the localized rings $\hat{\rho}_\lambda$.
    It follows from the results of \cite{DSKV} that all the used quasideterminants are well-defined in the Rees localization -- since all the matrices in the quasideterminants become invertible after reduction.
    Hence, all the formulas are justified by considering this concrete localization.
\end{remark}

The compatibility conditions show that the 
quasideterminant with differential entries agrees with the
Brundan--Kleshchev generating matrix after the generators $E_{a,b}^{i,j}$
are replaced by their differential realizations. On the other hand, it
agrees with the reduction of $|z(u+\lambda-D)z^{-1}|_{n,1}$.
The latter can be computed directly in terms of the invariant coordinates
$\mathbf z_i$ and the operators $D_{i,j}$. 
This computation is independent
of the choice of localization: changing the localization does not alter
the identities used below or the resulting expression in the reduced
differential algebra. Accordingly, no particular construction of
$\mathcal{DN}$ or of the extended homomorphism is needed for the computation
below.

\begin{theorem}\label{th:intrinsic_factorization}
The Brundan--Kleshchev matrix, evaluated via the reduced differential
realization $\rho_\lambda^{\mathrm{red}}$ induced by $\rho_\lambda$ in
\eqref{eq:differential_realization}, satisfies the factorization

\begin{equation}\label{eq:differential_quasideterminant_factorization}(\xi\circ\rho_\lambda^{\mathrm{red}})(\mathbf T(u))=
\bigl(u+z_{[n]}(\boldsymbol\lambda_n-D_{[n]})z_{[n]}^{-1}\bigr)\cdot
\cdots\cdot
\bigl(u+z_{[1]}(\boldsymbol\lambda_1-D_{[1]})z_{[1]}^{-1}\bigr)..
\end{equation}
\end{theorem}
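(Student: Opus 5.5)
We will combine Theorem~\ref{th:quasidet_and_BK} with Proposition~\ref{prop:differential_quasideterminant}. By Theorem~\ref{th:quasidet_and_BK} and the compatibility of $\widehat\rho_\lambda$ with the reduction ideals, $(\xi\circ\rho^{\mathrm{red}}_\lambda)(\mathbf T(u))$ equals $(-1)^{n-1}\xi\bigl(|z(u+\lambda-D)z^{-1}|_{n,1}\overline1\bigr)$. Proposition~\ref{prop:differential_quasideterminant} rewrites this quasideterminant as the corner quasideterminant of the upper block-triangular matrix in \eqref{eq:differential_quasideterminant_block_form}. Its entries are built from the diagonal blocks $\mathbf z_a$, the diagonal shifts $\mathbf D_a$, and the off-diagonal shifts $\mathbf D_{a,b}$. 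The plan is to reduce this matrix entrywise, substitute Proposition~\ref{prp:reduction_of_D}, and evaluate the result with the second part of Proposition~\ref{prp:nice_formulas_for_quasidet}.

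\textbf{Reducing entrywise.} The $\mathbf z_a$ and the $\mathbf D_a$ are $\mathfrak m_n$-invariant by Proposition~\ref{prp:differential_reduction}. The shifts $\mathbf D_{a,b}$ commute with the left action by Corollary~\ref{cor:left_and_right_invariance}. Hence all entries of the matrix in \eqref{eq:differential_quasideterminant_block_form} commute with $\mathfrak m_n$ and lie in the invariant subalgebra. By Corollary~\ref{cor:bk_rtt}(1), reduction is then an algebra homomorphism on these entries, and we may apply $\overline1$ to each entry inside the quasideterminant formula. This is the main obstacle. The inverse of the deleted submatrix is formed in the localization, so one must argue that the reduction of the inverse equals the inverse of the reductions. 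We would handle this as in Section~\ref{sec:localization_independence}: the inverse is an invariant element of the localized algebra, and the homomorphism property gives $(A^{-1})\overline1\cdot A\overline1=\overline1$. Inverses in the quotient are unique once the reduced matrix is invertible, which holds here because it is triangular with invertible superdiagonal.

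\textbf{Substitution.} By \eqref{eq:reduction_of_shifts}, $-\mathbf D_{a,a+1}\overline1=\mathbf z_a^{-1}\mathbf z_{a+1}$ and $\mathbf D_{a,b}\overline1=0$ for $b>a+1$. The reduced matrix is therefore bidiagonal. Its diagonal entries are
\[
a_1=(u+\boldsymbol\lambda_1-\mathbf D_1)\mathbf z_1^{-1},\qquad a_c=u+\boldsymbol\lambda_c-\mathbf D_c\ \ (1<c<n),\qquad a_n=\mathbf z_n(u+\boldsymbol\lambda_n-\mathbf D_n),
\]
and its superdiagonal entries are $b_c=\mathbf z_c^{-1}\mathbf z_{c+1}$, which are invertible.

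\textbf{Evaluation.} The second part of Proposition~\ref{prp:nice_formulas_for_quasidet} gives
\[
(-1)^{1-n}a_nb_{n-1}^{-1}a_{n-1}\cdots b_1^{-1}a_1 .
\]
Inserting $b_c^{-1}=\mathbf z_{c+1}^{-1}\mathbf z_c$ and grouping, each factor becomes $\mathbf z_c(u+\boldsymbol\lambda_c-\mathbf D_c)\mathbf z_c^{-1}=u+\mathbf z_c(\boldsymbol\lambda_c-\mathbf D_c)\mathbf z_c^{-1}$, so the product is
\[
(-1)^{1-n}\prod_{c=n}^{1}\bigl(u+\mathbf z_c(\boldsymbol\lambda_c-\mathbf D_c)\mathbf z_c^{-1}\bigr).
\]
The sign $(-1)^{1-n}$ cancels the prefactor $(-1)^{n-1}$. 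Applying $\xi$, which sends $\mathbf z_a\mapsto z_{[a]}$ and $\mathbf D_a\mapsto D_{[a]}$ by \eqref{eq:differential_tensor_notation}, yields \eqref{eq:differential_quasideterminant_factorization}.
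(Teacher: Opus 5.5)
Your proposal is correct and follows the paper's proof step by step: you rewrite the quasideterminant via Proposition~\ref{prop:differential_quasideterminant}, note that all entries are $\mathfrak m_n$-invariant, reduce them entrywise using Proposition~\ref{prp:reduction_of_D}, evaluate the resulting bidiagonal quasideterminant with the second part of Proposition~\ref{prp:nice_formulas_for_quasidet}, cancel the sign $(-1)^{n-1}$ against Theorem~\ref{th:quasidet_and_BK}, and apply $\xi$. Your added remark, that the reduction of the inverse of the deleted submatrix equals the inverse of its reduction by the homomorphism property on invariants, makes explicit a point the paper leaves implicit in its localization discussion.
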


\begin{proof}
Recall that the quasideterminant of the Lax matrix in the differential
realization is
    $$|L(u)|_{n,1}=
\left|
\begin{pmatrix} (u+\boldsymbol{\lambda}_1-\mathbf{D}_1)\mathbf{z}^{-1}_1 & -\mathbf{D}_{1,2}&
-\mathbf{D}_{1,3}&\ldots& -\mathbf{D}_{1,n}\\
0&u+\boldsymbol{\lambda}_2-\mathbf{D}_2& -\mathbf{D}_{2,3}&\ldots& -\mathbf{D}_{2,n}\\
0&0&u+\boldsymbol{\lambda}_3-\mathbf{D}_3&\ldots& -\mathbf{D}_{3,n}\\
\vdots&\vdots&\ddots& \ddots& \vdots\\
0&0&\ldots&0&\mathbf{z}_n(u+\boldsymbol{\lambda}_n-\mathbf{D}_{n})
     \end{pmatrix}
\right|_{n,1} $$
To establish the explicit form after reduction, we note that the entries
of the matrix in this quasideterminant commute with $\mathfrak m_n$.
Indeed, the right-shift
    operators $D_{i,j}$ commute with the infinitesimal left action of
    $\mathfrak m_n$, while the diagonal coordinates are invariant under this
    action, by Proposition~\ref{prp:differential_reduction}.

Hence, we can reduce the entries of the matrix directly using
Proposition~\ref{prp:reduction_of_D}. The result is
$$|L(u)|_{n,1}\cdot \overline{1}= \left|
\begin{pmatrix} (u+\boldsymbol{\lambda}_1-\mathbf{D}_1)\mathbf{z}^{-1}_1 & \mathbf{z}_1^{-1} \mathbf{z}_2&
0&\ldots& 0\\
0&u+\boldsymbol{\lambda}_2-\mathbf{D}_2&  \mathbf{z}_2^{-1} \mathbf{z}_3&\ddots&\vdots\\
0&0&u+\boldsymbol{\lambda}_3-\mathbf{D}_3&\ddots& 0\\
\vdots&\vdots&\ddots& \ddots&  \mathbf{z}_{n-1}^{-1} \mathbf{z}_n\\
0&\ldots&0&0&\mathbf{z}_n(u+\boldsymbol{\lambda}_n-\mathbf{D}_{n})
 \end{pmatrix}
\right|_{n,1}.$$
The last quasideterminant can be computed explicitly using
Proposition~\ref{prp:nice_formulas_for_quasidet}. Thus, in the reduced
algebra,
$$|L(u)|_{n,1}\cdot\overline{1}=(-1)^{n-1}
\mathbf{z}_n(u+\boldsymbol{\lambda}_n-\mathbf{D}_{n})\mathbf{z}_n^{-1}
\cdot\ldots\cdot\mathbf{z}_1(u+\boldsymbol{\lambda}_1-\mathbf{D}_1)\mathbf{z}^{-1}_1\cdot\overline{1}.$$

On the other hand, by \eqref{eq:quasidet_BK_polynomial}, the
Brundan--Kleshchev matrix is $(-1)^{n-1}$ times the reduced
quasideterminant. Thus the sign in the last display cancels under this
normalization.

Next, we use the identification $\xi$ between the reduced algebra $\mathcal{D}_-$ and $\mathcal{D}(N_-')^{\otimes n}$ to obtain \eqref{eq:differential_quasideterminant_factorization}. In the original coordinates on $N_-$, the operators $\mathbf D_a$ may involve derivatives with respect to off-diagonal coordinates. Under the isomorphism $\xi:\mathcal D_-\xrightarrow{\sim}\mathcal D(N_-')^{\otimes n}$ from Proposition~\ref{prp:differential_reduction}, each factor is expressed in terms of differential operators acting in the individual tensor factors.
\end{proof}
Together, the Miura factorization \eqref{eq:Miura_map} and
Theorem~\ref{th:intrinsic_factorization} are
summarized by the commutative diagram
 \begin{equation}
    \label{eq:factorization_diagram}
    \begin{tikzcd}[column sep=large,row sep=large]
       \mathcal W_n(\mathfrak{gl}_p)
       \arrow[r, "\Theta"]
       \arrow[d, "\rho_\lambda^{\mathrm{red}}"']
       & \mathcal U(\mathfrak{gl}_p)^{\otimes n}
       \arrow[d, "{\rho_{\boldsymbol{\lambda}_1}\otimes\cdots\otimes
       \rho_{\boldsymbol{\lambda}_n}}"] \\
       \mathcal D_-
       \arrow[r, "\xi"']
       & \mathcal D(N_-')^{\otimes n}
    \end{tikzcd}
    \end{equation}
Each factor in
\eqref{eq:differential_quasideterminant_factorization} is the differential
realization of the corresponding factor $u+\mathbf E_i$ in the Miura
product. Consequently,
\begin{equation}\label{eq:Miura_differential_compatibility}
\xi\circ\rho_\lambda^{\mathrm{red}}
=\bigl(\rho_{\boldsymbol\lambda_1}\otimes\cdots\otimes
\rho_{\boldsymbol\lambda_n}\bigr)\circ\Theta.
\end{equation}
Thus the direct Hamiltonian reduction agrees with first applying the
quantum Miura homomorphism and then applying the tensor-product differential
realization.
\section{Conclusion}
In this work, we studied the Hamiltonian reduction of the big-cell
differential realization in the rectangular case. We showed that this
realization is naturally compatible with Hamiltonian reduction, yielding an
intrinsic factorization of the reduced differential realization of the
rectangular $W$-algebra.

We also want to emphasize the usefulness of quasideterminants for describing
invariants in the $W$-algebra and the reduced differential algebra. To
illustrate the geometric picture, recall the classical analogue of our
construction: the differential realization is replaced by the Springer map
from the cotangent bundle $T^*N_-$ to the coadjoint orbit
$\mathcal{O}_\lambda$ \cite{Babich,ChrissGinzburg}. Hamiltonian reduction of
this map gives Darboux coordinates on a symplectic leaf of the Slodowy slice,
which is quantized by the finite $W$-algebra \cite{GanGinzburg,Losev}.
In the rectangular case, this picture extends to a description of the symplectic leaves of the Yangian \cite{Shapiro}.

The main steps of our construction have classical geometric counterparts.
In the classical setting, the matrix entries commute, so the localization
and factorization procedures for the quasideterminant can be carried out
directly. The quasideterminant then reduces to the row determinant of the
Lax operator with constrained entries.

The natural properties of the invariant suggested by the geometric
construction also hold for the quasideterminant in the noncommutative,
quantum setting considered here. Thus, quasideterminants provide a natural
language for extending the classical picture to the quantum setting. They
also make the underlying algebraic structure transparent: the RTT relations
can be checked directly in this framework.

We would like to note that our analysis is restricted to the rectangular
reduction and to the differential realization considered above, which is
associated with a generic coadjoint orbit. More general Lax operators associated with special
orbits have also been studied \cite{FrassekPestun,FrassekPestunTsymbaliuk}.
However, when the differential realization is specialized to a concrete space, our construction applies
to a large class of Yangian representations and thus appears quite general.

The natural question is what obstacles arise in extending this picture. We outline several possible directions:
\begin{itemize}
    \item How can the factorization mechanism be extended to non-rectangular reductions?
    \item What obstructs generalization to more general type-A Lax operators? We suspect that difficulties arise only when the Lax operator's Jordan-block structure interacts nontrivially with the reduction.
    \item How can intrinsic factorization be extended to types B, C, and D and their associated $W$-algebras?
\end{itemize}
The first and third questions have been considered in related
quasideterminant constructions. For non-rectangular reductions in type A,
De Sole, Kac, and Valeri constructed a generating matrix using a generalized
quasideterminant \cite{DSKV}. This matrix does not by itself provide a full
set of generators for the $W$-algebra; its relation to a full generating set
was subsequently established in \cite{DSFVgen}. For classical simple Lie
algebras, generalized-quasideterminant generating matrices with coefficients
in the finite $W$-algebra were constructed in \cite{DSKVlax}. It remains
useful to understand how such generating matrices encode a full set of generators.
Our differential-realization approach, grounded in geometric reasoning, may
offer a complementary perspective and reveal a simpler structure for the
generators in the differential realization.

Finally, we emphasize that the present result provides a starting point for
studying the reduction of intertwiners and factorized $R$-matrices, which we
plan to investigate in future work.

\section*{Acknowledgments} 

We thank BIMSA for hospitality. 
A big part of the work was done during our visit to BIMSA. 
  
We thank N. Belousov, S. Khoroshkin, N. Reshetikhin, A. Litvinov, 
and V. Rubtsov for insightful discussions.
We would like to express our special thanks to Vitaly Tarasov.
Thanks to our joint discussions, all participants were able to 
appreciate the power and beauty of quasideterminants. 

The work of D.G. performed in Skoltech has been supported by the Russian Science Foundation under the grant 26-11-00342.

\appendix

\section{Commutation relations } \label{appA}

\subsection{Details of calculations in index form} \label{app:det}

Throughout this subsection, we use the summation convention: repeated block
indices are summed over $1$ through $n$, and repeated internal indices over $1$
through $p$, whether written as subscripts or superscripts.
We start from relation
\begin{align}\label{YA}
(u-v)\,L^{k,i}_{a,b}(u)\,L^{s,j}_{c,d}(v)+
L^{s,i}_{c,b}(u)\,L^{k,j}_{a,d}(v) =  
(u-v)\,L^{s,j}_{c,d}(v)\,L^{k,i}_{a,b}(u)+L^{s,i}_{c,b}(v)\,L^{k,j}_{a,d}(u),
\end{align}
where 
\begin{align}
L_{a,b}^{i,j}(u) = u \delta_{a,b}\delta^{i,j} + E_{b,a}^{j,i}.
\end{align}
Let us introduce the inverse matrix 
\begin{align}
L_{a,b}^{i,j}(u) M_{b,c}^{j,k}(u) = \delta_{a,c}\delta^{i,k}\ \ ;\
M_{a,b}^{i,j}(u) L_{b,c}^{j,k}(u) = \delta_{a,c}\delta^{i,k}.
\end{align}
Let us transform relation \eqref{YA} in a following way.
On the first step we multiply on the matrix $M(v)$ from the right
\begin{align*}
(u-v)\,L^{k,i}_{a,b}(u)\,L^{s,j}_{c,d}(v)\,M^{j,m}_{d,f}(v)+
L^{s,i}_{c,b}(u)\,L^{k,j}_{a,d}(v)\,M^{j,m}_{d,f}(v) \\
= (u-v)\,L^{s,j}_{c,d}(v)\,L^{k,i}_{a,b}(u)\,M^{j,m}_{d,f}(v)+
L^{s,i}_{c,b}(v)\,L^{k,j}_{a,d}(u)\,M^{j,m}_{d,f}(v),
\end{align*}
which gives 
\[
\begin{gathered}
(u-v)\,L^{k,i}_{a,b}(u)\,\delta^{s,m}\delta_{c,f}+
L^{s,i}_{c,b}(u)\,\delta^{k,m}\delta_{a,f} 
=(u-v)\,L^{s,j}_{c,d}(v)\,L^{k,i}_{a,b}(u)\,M^{j,m}_{d,f}(v) \\
+L^{s,i}_{c,b}(v)\,L^{k,j}_{a,d}(u)\,M^{j,m}_{d,f}(v),
\end{gathered}
\]
and on the second step multiply on $M(v)$
from the left 
\begin{align*}
(u-v)\,M^{q,s}_{e,c}(v)\,L^{k,i}_{a,b}(u)\,\delta^{s,m}\delta_{c,f}+
M^{q,s}_{e,c}(v)\,L^{s,i}_{c,b}(u)\,\delta^{k,m}\delta_{a,f}  \\=
(u-v)\,M^{q,s}_{e,c}(v)\,L^{s,j}_{c,d}(v)\,L^{k,i}_{a,b}(u)\,M^{j,m}_{d,f}(v)+
M^{q,s}_{e,c}(v)\,L^{s,i}_{c,b}(v)\,L^{k,j}_{a,d}(u)\,M^{j,m}_{d,f}(v),
\end{align*}
which gives
\begin{align*}
(u-v)\,M^{q,s}_{e,c}(v)\,L^{k,i}_{a,b}(u)\,\delta^{s,m}\delta_{c,f}+
M^{q,s}_{e,c}(v)\,L^{s,i}_{c,b}(u)\,\delta^{k,m}\delta_{a,f}  \\=
(u-v)\,\delta^{q,j}\delta_{e,d}\,L^{k,i}_{a,b}(u)\,M^{j,m}_{d,f}(v)+
\delta^{q,i}\delta_{e,b}\,L^{k,j}_{a,d}(u)\,M^{j,m}_{d,f}(v),
\end{align*}
and finally
\begin{equation}\label{MLA}
\begin{gathered}
(u-v)\,M^{q,m}_{e,f}(v)\,L^{k,i}_{a,b}(u) 
+M^{q,s}_{e,c}(v)\,L^{s,i}_{c,b}(u)\,\delta^{k,m}\delta_{a,f} \\
= (u-v)\,L^{k,i}_{a,b}(u)\,M^{q,m}_{e,f}(v) 
+\delta^{q,i}\delta_{e,b}\,L^{k,j}_{a,d}(u)\,M^{j,m}_{d,f}(v).
\end{gathered}
\end{equation}

To prove the Yangian commutation relations \eqref{TT} for the
quasideterminant's matrix elements, we continue transforming the main
relation \eqref{YA} and multiply \eqref{MLA} by $M(u)$ from the right.
\begin{align*}
(u-v)\,M^{q,m}_{e,f}(v)\,L^{k,i}_{a,b}(u)\,M^{i,j}_{b,h}(u) +
M^{q,s}_{e,c}(v)\,L^{s,i}_{c,b}(u)\,M^{i,j}_{b,h}(u)\,\delta^{k,m}\delta_{a,f}  \\=
(u-v)\,L^{k,i}_{a,b}(u)\,M^{q,m}_{e,f}(v)\,M^{i,j}_{b,h}(u)+
\delta^{q,i}\delta_{e,b}\,L^{k,\alpha}_{a,\beta}(u)\,M^{\alpha,m}_{\beta,f}(v)\,M^{i,j}_{b,h}(u),
\end{align*}
so that 
\begin{align*}
(u-v)&\,M^{q,m}_{e,f}(v)\,\delta^{k,j}\delta_{a,h} +
M^{q,s}_{e,c}(v)\,\delta^{s,j}\delta_{c,h}\,\delta^{k,m}\delta_{a,f} \\
&= (u-v)\,L^{k,i}_{a,b}(u)\,M^{q,m}_{e,f}(v)\,M^{i,j}_{b,h}(u)+
\delta^{q,i}\delta_{e,b}\,L^{k,\alpha}_{a,\beta}(u)\,M^{\alpha,m}_{\beta,f}(v)\,M^{i,j}_{b,h}(u),
\end{align*}
or 
\begin{align*}
(u-v)&\,M^{q,m}_{e,f}(v)\,\delta^{k,j}\delta_{a,h} +
M^{q,j}_{e,h}(v)\,\delta^{k,m}\delta_{a,f} \\ 
&= (u-v)\,L^{k,i}_{a,b}(u)\,M^{q,m}_{e,f}(v)\,M^{i,j}_{b,h}(u)+
L^{k,t}_{a,d}(u)\,M^{t,m}_{d,f}(v)\,M^{q,j}_{e,h}(u),
\end{align*}
and then multiply on the matrix $M(u)$ from the left 
\begin{align*}
&(u-v)\,M^{s,k}_{c,a}(u)\,M^{q,m}_{e,f}(v)\,\delta^{k,j}\delta_{a,h} +
M^{s,k}_{c,a}(u)\,M^{q,j}_{e,h}(v)\,\delta^{k,m}\delta_{a,f} \\
&= (u-v)\,M^{s,k}_{c,a}(u)\,L^{k,i}_{a,b}(u)\,M^{q,m}_{e,f}(v)\,M^{i,j}_{b,h}(u)+
M^{s,k}_{c,a}(u)\,L^{k,t}_{a,d}(u)\,M^{t,m}_{d,f}(v)\,M^{q,j}_{e,h}(u),
\end{align*}
which gives
\begin{align*}
(u-v)\,M^{s,j}_{c,h}(u)\,M^{q,m}_{e,f}(v) +
M^{s,m}_{c,f}(u)\,M^{q,j}_{e,h}(v) \\
= (u-v)\,M^{q,m}_{e,f}(v)\,M^{s,j}_{c,h}(u)+
M^{s,m}_{c,f}(v)\,M^{q,j}_{e,h}(u),
\end{align*}
or 
\begin{align}\label{MMA}
(u-v)\,M^{s,j}_{c,h}(u)\,M^{q,m}_{e,f}(v) +
M^{s,m}_{c,f}(u)\,M^{q,j}_{e,h}(v) =  
(u-v)\,M^{q,m}_{e,f}(v)\,M^{s,j}_{c,h}(u)+M^{s,m}_{c,f}(v)\,M^{q,j}_{e,h}(u).
\end{align}
We obtain the same Yangian relations as for $L$-matrices and 
the particular case $c=e=1\,,h=f=n$ gives Yangian relations 
for the quasideterminants
\begin{align*}
(u-v)\,M^{s,j}_{1,n}(u)\,M^{q,m}_{1,n}(v) +
M^{s,m}_{1,n}(u)\,M^{q,j}_{1,n}(v) =  
(u-v)\,M^{q,m}_{1,n}(v)\,M^{s,j}_{1,n}(u)+M^{s,m}_{1,n}(v)\,M^{q,j}_{1,n}(u).
\end{align*}
\section{Functional realization of the reduction}\label{appB}

\label{app:func}
Let us briefly comment on the functional side of the reduction described above. We postpone the detailed study to our second work; however, here we clarify the meaning of the reduction in a concrete realization on some functional space.

Similarly to the construction of the homomorphism $\rho_\lambda$, we start from the test functions on $N_-$ with coordinates $(m,z_{\mathrm{d}})$. For the purpose of reduction, we have to fix the behaviour of the test function with respect to the left action of $\mathfrak{m}_n$ (i.e. to consider the subspace of Whittaker vectors as a representation of our $W$-algebra or reduced algebra $\mathcal{D}_-$).

It is straightforward to check that the relation
$$
x\cdot \Phi(z)=\psi(x) \Phi(z), \quad \forall x\in \mathfrak{m}_n,
$$
implies
\begin{align*}
\Phi(m\,,z_{\mathrm d}) = e^{-\psi(\log m)}\Psi(z_{\mathrm d})\,,
\end{align*}
where $\Psi$ can be an arbitrary function of $z_\mathrm{d}$. For the usual choice of the character $\psi(E^{i,j}_{a,b})=\delta_{b,a-1}\delta_{i,j}$, the prefactor is $e^{-\psi(\log m)}=e^{-\sum_{k=1}^{n-1}\tr (\mathbf{m}_{k+1,k})}$, where $\mathbf{m}_{k+1,k}=\mathbf{z}_{k+1,k}\mathbf{z}_k^{-1}$ are blocks of $m$.

We now interpret the abstract map $\xi$ as an intertwining map between the subspace of functions $\Phi(m\,,z_{\mathrm d})$ with prefactor and the subspace of $\Psi(z_{\mathrm{d}})$, via conjugation by exponentials and setting the off-diagonal coordinate parts in differential operators to zero. Let us derive the explicit form of the reduction of the generators $\mathbf{D}_{a,b}$. To be explicit, we consider the $n=2$ case.
\begin{itemize}
\item The reduction of the operators $\mathbf{D}_a$ does not depend on the concrete choice of the prefactor and can be determined almost without calculations.
\end{itemize}
The subspace consists of the following functions:
\begin{align*}
\Phi(z) =  e^{-\psi(\mathbf{z}_{2,1}\mathbf{z}_1^{-1})}\, \Psi(\mathbf{z}_1\,,\mathbf{z}_2)\,.
\end{align*}
For the right shifts $D_a$ corresponding to $\mathbf{D}_1$ we have
\begin{align*}
\begin{pmatrix}\mathbf{z}_1&0\\
\mathbf{z}_{2,1}&\mathbf{z}_2\end{pmatrix}
\begin{pmatrix}1+\varepsilon \mathbf{a}&0\\
0 &1 \end{pmatrix} =
\begin{pmatrix}1&0\\
\mathbf{z}_{2,1}\mathbf{z}_1^{-1}&1\end{pmatrix}
\begin{pmatrix}\mathbf{z}_1\left(1+\varepsilon \mathbf{a}\right) &0\\
0&\mathbf{z}_2\end{pmatrix},
\end{align*}
so that
\begin{align*}
D_a \Phi(z) = e^{-\psi\left(\mathbf{z}_{2,1}\mathbf{z}_1^{-1}\right)\, }\left.\frac{d}{d\varepsilon}
\Psi(\mathbf{z}_1\left(1+\varepsilon \mathbf{a}\right)\,,\mathbf{z}_2)\right|_{\varepsilon=0}.
\end{align*}
In matrix form we have the intertwining relation
\begin{align*}
\mathbf{D}_{1}[ e^{-\psi(\mathbf{z}_{2,1}\mathbf{z}_1^{-1})}\Psi(z_{\mathrm{d}})] = e^{-\psi(\mathbf{z}_{2,1}\mathbf{z}_1^{-1})}[\left(\mathbf{d}_{1}\mathbf{z}_{1}\right)_{+}\Psi(z_{\mathrm{d}})],
\end{align*}
and a similar relation holds for $\mathbf{D}_2$.
It is instructive to compare with the initial picture without reduction.
In this case one works with functions $\Phi(\mathbf{z}_{2,1}\,,\mathbf{z}_{1}\,,\mathbf{z}_{2})$ and  
\begin{align}
\begin{pmatrix}\mathbf{z}_1&0\\
\mathbf{z}_{2,1}&\mathbf{z}_2\end{pmatrix}
\begin{pmatrix}1+\varepsilon \mathbf{a}&0\\
0 &1 \end{pmatrix} = 
\begin{pmatrix}\mathbf{z}_1\left(1+\varepsilon \mathbf{a}\right) &0\\
\mathbf{z}_{2,1} \left(1+\varepsilon \mathbf{a}\right)&\mathbf{z}_2\end{pmatrix} 
\end{align}
so that $\mathbf{D}_{1}$ contains contributions from all the variables:
\begin{align*}
\mathbf{D}_{1} =
\left(\mathbf{d}_{1}\mathbf{z}_{1}\right)_{+} +
\left(\mathbf{d}_{1,2}\mathbf{z}_{2,1}\right)_{+}
\end{align*}
On a formal level $\mathbf{D}_{1}$ commutes with $e^{-\psi(\mathbf{z}_{2,1}\mathbf{z}_1^{-1})}$ so that we have 
\begin{align*}
\mathbf{D}_{1}[ e^{-\psi(\mathbf{z}_{2,1}\mathbf{z}_1^{-1})}\Psi(z_{\mathrm{d}})] = 
e^{-\psi(\mathbf{z}_{2,1}\mathbf{z}_1^{-1})}
[\left(\mathbf{d}_{1}\mathbf{z}_{1}\right)_{+} + \left(\mathbf{d}_{1,2}\mathbf{z}_{2,1}\right)_{+}]\Psi(z_{\mathrm{d}}) = 
e^{-\psi(\mathbf{z}_{2,1}\mathbf{z}_1^{-1})}
[\left(\mathbf{d}_{1}\mathbf{z}_{1}\right)_{+}]\Psi(z_{\mathrm{d}}).
\end{align*}
If we additionally relabel the variables $\mathbf{z}_i\rightarrow z_{[i]}$, then we find that the formal map $\xi$ implements exactly the passage from the subspace of functions $e^{-\psi(\log m)}\Psi(z_{[1]},z_{[2]})$ to the subspace $\Psi(z_{[1]},z_{[2]})$.

\begin{itemize}
\item The reduction of the operators $\mathbf{D}_{a,b}$ depends on the concrete choice of the character $e^{-\psi(\log m)}$.
\end{itemize}

For the left shifts $E_a$ corresponding to $\mathbf{E}_{1,2}$, we have
\begin{align*}
e^{-\varepsilon a} z = \begin{pmatrix}1&0\\
-\varepsilon\mathbf{a}&1\end{pmatrix} \begin{pmatrix}\mathbf{z}_1&0\\
\mathbf{z}_{2,1}&\mathbf{z}_2\end{pmatrix} =
\begin{pmatrix}1&0\\
\mathbf{z}_{2,1}\mathbf{z}_1^{-1}-\varepsilon\mathbf{a} &1\end{pmatrix}
\begin{pmatrix}\mathbf{z}_1&0\\
0&\mathbf{z}_2\end{pmatrix} \ \ ;\ \ a = \begin{pmatrix}1&0\\
\mathbf{a}&1\end{pmatrix} \in \exp (\mathfrak{m}_2),
\end{align*}
and therefore
\begin{align*}
E_a \Phi(z)
=\left.\frac{d}{d\varepsilon}e^{\psi(\mathbf{z}_{2,1}\mathbf{z}_1^{-1}-\varepsilon\mathbf{a})}
\Psi(\mathbf{z}_1\,,\mathbf{z}_2)\right|_{\varepsilon=0}.
\end{align*}
We consider the reduction associated with the character $\psi({E}_{2,1}^{i,j}) =\delta_{i,j}$, and this defines the prefactor
$e^{-\tr (\mathbf{z}_{2,1}\mathbf{z}_1^{-1})}$.
For the right shifts $D_a$ corresponding to $\mathbf{D}_{12}$ we have
\begin{align*}
\begin{pmatrix}\mathbf{z}_1&0\\
\mathbf{z}_{2,1}&\mathbf{z}_2\end{pmatrix}
\begin{pmatrix}1&0\\
\varepsilon \mathbf{a} &1 \end{pmatrix} =
\begin{pmatrix}1&0\\
\mathbf{z}_{2,1}\mathbf{z}_1^{-1} +\varepsilon
\mathbf{z}_2 \mathbf{a} \mathbf{z}_1^{-1}&1\end{pmatrix}
\begin{pmatrix}\mathbf{z}_1 & 0\\
0&\mathbf{z}_2\end{pmatrix},
\end{align*}
so that
\begin{align*}
D_a [ e^{-\tr (\mathbf{z}_{2,1}\mathbf{z}_1^{-1})}\Psi(\mathbf{z}_1,\mathbf{z}_2)]
&=
\left.\frac{d}{d\varepsilon} e^{-\varepsilon
\tr(\mathbf{z}_2 \mathbf{a} \mathbf{z}_1^{-1})}
\right|_{\varepsilon=0}\,e^{-\tr (\mathbf{z}_{2,1}\mathbf{z}_1^{-1})}
\Psi(\mathbf{z}_1\,,\mathbf{z}_2)\\&= e^{-\tr (\mathbf{z}_{2,1}\mathbf{z}_1^{-1})}[-\tr(\mathbf{a} \mathbf{z}_1^{-1}\mathbf{z}_2)
\Psi(\mathbf{z}_1\,,\mathbf{z}_2)].
\end{align*}
This formula gives an intertwining relation
\begin{align*}
\mathbf{D}_{1,2}[ e^{-\tr (\mathbf{z}_{2,1}\mathbf{z}_1^{-1})}\Psi(\mathbf{z}_1,\mathbf{z}_2)] = e^{-\tr (\mathbf{z}_{2,1}\mathbf{z}_1^{-1})}[- \mathbf{z}_1^{-1}\mathbf{z}_2
\Psi(\mathbf{z}_1\,,\mathbf{z}_2)].
\end{align*}

\end{document}